\title[Emergent dynamics of the Lohe hermitian sphere model]{From the Lohe tensor model to the Lohe hermitian sphere model and emergent dynamics}
\author[Ha]{Seung-Yeal Ha}
\address[Seung-Yeal Ha]{\newline Department of Mathematical Sciences\newline Seoul National University, Seoul 08826, and \newline
Korea Institue for Advanced Study, Hoegiro 85, 02455, Seoul, Korea (Republic of)}
\email{syha@snu.ac.kr}
\author[Park]{Hansol Park}
\address[Hansol Park]{\newline Department of Mathematical Sciences\newline Seoul National University, Seoul 08826, Korea (Republic of)}
\email{hansol960612@snu.ac.kr}
\newtheorem{theorem}{Theorem}[section]
\newtheorem{lemma}{Lemma}[section]
\newtheorem{corollary}{Corollary}[section]
\newtheorem{proposition}{Proposition}[section]
\newtheorem{remark}{Remark}[section]
\newtheorem{definition}{Definition}[section]
\newcommand{\bbr}{\mathbb R}
\newcommand{\bbu}{\mathbb U}
\newcommand{\bbs}{\mathbb S}
\newcommand{\bbc}{\mathbb C}
\begin{document}

\date{\today}

\subjclass{82C10 82C22 35B37} \keywords{Emergence, Kuramoto model, Lohe hermitian sphere model, Lohe matrix model, Lohe sphere model, Lohe tensor model, phase locked states, quantum synchronization, tensors}

\thanks{\textbf{Acknowledgment.} The work of S.-Y. Ha is supported by NRF-2020R1A2C3A01003881}

\begin{abstract}
We study emergent behaviors of the Lohe hermitian sphere(LHS) model which is an aggregation model on ${\mathbb C}^d$. The LHS model is a complex analog of the Lohe sphere model on ${\mathbb R}^d$, and hermitian spheres are invariant sets for the LHS dynamics. For the derivation of the LHS model, we use a top-down approach, namely a reduction from a high-rank aggregation model, ``{\it the Lohe tensor model}."   The Lohe tensor model is a first-order aggregation model on the space of tensors with the same rank and sizes, and it was  first proposed by the authors in a recent work \cite{H-P}. In this work, we study how the LHS model appears as a special case of the Lohe tensor model and for the proposed model, we provide a cross-ratio like conserved quantity, a sufficient framework for the complete aggregation and a uniform $\ell^p$-stability estimate with respect to initial data.
\end{abstract}
\maketitle \centerline{\date}


\section{Introduction} \label{sec:1}
\setcounter{equation}{0} 
Collective behaviors of complex systems often appear in biological and physical systems, for example, {aggregation of bacteria \cite{T-B-L, T-B}, flocking of birds \cite{B-H}, swarming of fish \cite{B-T1, B-T2}, flashing of fireflies \cite{A-B, B-B, Wi1}, synchronous heart beating \cite{Pe}, an array of Josephson junctions  etc. We refer to \cite{A-B, A-B-F, H-K-P-Z, P-R, St,VZ} for survey articles on the collective dynamics.} Despite of its ubiquity in our nature, rigorous  mathematical analysis for such collective behaviors was begun only in a half century ago by Winfree \cite{Wi2, Wi1} and Kuramoto \cite{Ku1, Ku2}. Since then, several particle or agent-based models were proposed and studied in several disciplines such as applied mathematics, control theory, nonlinear dynamics and statistical physics, etc. In this work, our main interest lies on the Lohe tensor model \cite{H-P} which was proposed for the aggregate modeling of the ensemble of tensors. The Lohe tensor model is a natural extension of the well-known aggregation models such as {the Kuramoto model \cite{B-C-M, C-H-J-K, C-S, D-X,D-B1,D-B0, D-B, H-K-R, H-L-X, V-M1, V-M2}, the Lohe sphere model \cite{C-C-H, C-H1,C-H2,C-H3, C-H4, C-H5,H-K-P-R, H-K-R, H-K-RY, J-C, M-T-G, T-M, Z-Z, Z-Z-Q, Zhu}, the Schr\"{o}dinger-Lohe model \cite{C-H4}, and other matrix models in \cite{B-C-S, D-F-M-T, D-F-M, De, H-K, H-R} (see Section \ref{sec:2} for their relations). }To be more concrete, we begin with a crash introduction on tensors.

Let ${\mathbb C}^d$ and ${\mathbb C}^{d_1 \times d_2}$ be a $d$-dimensional complex vector space isomorphic to $\bbr^{2d}$ and a complex vector space consisting of $d_1 \times d_2$ matrices with complex entries, respectively. A rank-$m$ tensor is defined as a complex-valued multi-linear map from $\bbc^{d_1\times\cdots\times d_m}$ to $\bbc$. Intuitively, a tensor can be visualized as an $m$-dimensional array of complex numbers with $m$-indices. The rank of a tensor is the number of indices, say a rank-$m$ tensor of dimensions $d_1 \times \cdots \times d_m$ is an element of ${\mathbb C}^{d_1 \times \cdots \times d_m}$.  It is easy to see that scalars, vectors and matrices correspond to rank-0, 1 and 2 tensors, respectively.  Let $T$ be a rank-$m$ tensor. Then, we denote $(\alpha_1, \cdots, \alpha_m)$-th component of the tensor $T$ by $[T]_{\alpha_1 \cdots \alpha_m}$, and we also denote $\bar{T}$ by the rank-$m$ tensor whose components are the complex conjugate of the elements in $T$:
\[ [\bar{T}]_{\alpha_1 \cdots \alpha_m} =\overline{[T]_{\alpha_1 \cdots \alpha_m}}. \]
Let ${\mathcal T}_m(\bbc; d_1 \times\cdots\times d_m)$ be the set of all rank-$m$ tensors with size $d_1 \times\cdots\times d_m$. Then, the Kuramoto model, the Lohe sphere model and the Lohe matrix model can be regarded as aggregation models on ${\mathcal T}_0(\bbr; 0), {\mathcal T}_1(\bbr; d)$ and ${\mathcal T}_2(\bbc; d \times d)$, respectively. Let $\{T_j \}_{j=1}^{N}$ be the $N$-collection of rank-$m$ tensors in ${\mathcal T}_m(\bbc; d_1 \times\cdots\times d_m)$, and $A_j$ is the skew-hermitian rank-$2m$ tensors with size $(d_1 \times\cdots\times d_m) \times (d_1 \times \cdots\times d_m)$. For the simplicity of presentation, we introduce handy notation as follows: for $T \in {\mathcal T}_m(\bbc; d_1 \times \cdots\times d_m)$ and $A \in  {\mathcal T}_{2m}(\bbc; d_1 \times\cdots\times  d_m \times d_1 \times \cdots\times d_m)$, we set
\begin{align*}
\begin{aligned}
& [T]_{\alpha_{*}}:=[T]_{\alpha_{1}\alpha_{2}\cdots\alpha_{m}}, \quad [T]_{\alpha_{*0}}:=[T]_{\alpha_{10}\alpha_{20}\cdots\alpha_{m0}},  \quad  [T]_{\alpha_{*1}}:=[T]_{\alpha_{11}\alpha_{21}\cdots\alpha_{m1}}, \\
&  [T]_{\alpha_{*i_*}}:=[T]_{\alpha_{1i_1}\alpha_{2i_2}\cdots\alpha_{mi_m}}, \quad [T]_{\alpha_{*(1-i_*)}}:=[T]_{\alpha_{1(1-i_1)}\alpha_{2(1-i_2)}\cdots\alpha_{m(1-i_m)}}, \\
&  [A]_{\alpha_*\beta_*}:=[A]_{\alpha_{1}\alpha_{2}\cdots\alpha_{m}\beta_1\beta_2\cdots\beta_{m}},
\end{aligned}
\end{align*}
and we set the average of $T_i$:
{{\[ T_c := \frac{1}{N} \sum_{k=1}^{N} T_k. \]}}
Then, the Lohe tensor model \cite{H-P} in component-wise form can be written as follows:
\begin{equation}
\begin{cases} \label{M-1}
\displaystyle \dot{[T_j]}_{\alpha_{*0}} = [A_j]_{\alpha_{*0}\alpha_{*1}}[T_j]_{\alpha_{*1}} \\
\displaystyle \hspace{0.5cm} + \sum_{i_* \in \{0, 1\}^m}\kappa_{i_*} \Big([T_c]_{\alpha_{*i_*}}\bar{[T_j]}_{\alpha_{*1}}[T_j]_{\alpha_{*(1-i_*)}}-[T_j]_{\alpha_{*i_*}}\bar{[T_c]}_{\alpha_{*1}}[T_j]_{\alpha_{*(1-i_*)}} \Big), \quad t>0,\\
T_j(0)=T_j^{in},\quad \|T_j^{in} \|_F = 1, \quad j=1, 2, \cdots, N, \\
\displaystyle  \bar{[A_j]}_{\alpha_{*0}\alpha_{*1}}=-[A_j]_{\alpha_{*1}\alpha_{*0}},
\end{cases}
\end{equation}
where $\kappa_{i_*}$'s are nonnegative coupling strengths, and $\| \cdot \|_F$ is the Frobenius norm defined in Definition \ref{D2.1}.

{Note that the first term and the second term in $\eqref{M-1}_1$ represent a free rotational flow, and cubic aggregation coupling for collective behaviors, respectively. The relation $\eqref{M-1}_3$ denotes the skew-hermitian property of the $2m$-tensor $A_j$. The cubic coupling terms in \eqref{M-1} were designed to generalize earlier aggregation models such as the Lohe sphere model in $\bbr^d$ and Lohe matrix model on the unitary group $\bbu(d)$. See Section \ref{sec:2.1} for motivation. The emergent dynamics of \eqref{M-1} has been studied in \cite{H-P} under a rather restricted framework.} Although the coupling terms in the R.H.S. of $\eqref{M-1}_1$ look complicated, system \eqref{M-1} has a natural conserved quantity and exhibits an emergent aggregation dynamics. For the choices:
\[ T_j = z_j \in {\mathcal T}_1(\bbc, d) = \bbc^d, \quad \mbox{and} \quad A_j = \Omega_j \in {\mathcal T}_2(\bbc, d \times d), \] 
system \eqref{M-1} reduces to the Lohe {hermitian sphere} model on $\bbc^d$:
\begin{equation}\label{M-5}
\begin{cases}
\dot{z}_j=\Omega_j z_j+\kappa_0 \Big (\langle{z_j, z_j}\rangle z_c-\langle{z_c, z_j}\rangle z_j \Big )+\kappa_1 \Big (\langle{z_j, z_c}\rangle-\langle{z_c, z_j}\rangle \Big )z_j, \quad t > 0, \\
z_j(0)=z_j^{in}, \quad  \|z^{in}_j \| = 1,
\end{cases}
\end{equation}
where {{$z_c :=\frac{1}{N}\sum_{k=1}^Nz_k$}}, $\langle z, w \rangle$ and $\Omega_j$ are standard inner product in $\bbc^d$ and a skew-hermitian $d \times d$ matrix satisfying
\[
\langle z, w \rangle = {\bar z}^{\alpha} w^{\alpha}, \quad  \Omega_j^* = -\Omega_j.
\]
Here we used Einstein summation convention. \newline

Note that for a real rank-1 tensor $z_j \in \bbr^d$, the second coupling terms involving with $\kappa_1$ vanishes, and we recover the Lohe sphere model on $\bbs^d$:
\begin{equation}\label{M-5-1}
\dot{x}_j=\Omega_j x_j+\kappa_0 \Big (\langle{x_j, x_j}\rangle x_c-\langle{x_c, x_j}\rangle x_j \Big ).
\end{equation}
{The emergent dynamics of \eqref{M-5-1} has been extensively studied from diverse aspects, e.g, complete aggregation under attractive couplings \cite{C-C-H,C-H1, J-C, Lo-1, Lo-2, M-T-G, Zhu}, interplay between attractive and repulsive couplings \cite{C-H5}, time-delay \cite{C-H2,C-H3}}. {Here ``{\it complete aggregation}" means that all relative states $z_i - z_j,~i \not = j$ tend to zero asymptotically (see Definition \ref{D2.1}).} \newline

In this paper, we only consider a homogeneous ensemble with $\Omega_j = \Omega,~j= 1, \cdots, N$, and due to the solution splitting property in Proposition \ref{P2.1}, we may assume $\Omega = 0$ without loss of generality. In this paper, we address the following issues for \eqref{M-5}:  
\begin{itemize}
\item
$({\mathcal Q}1)$:~Are there any nontrivial conserved quantties?

\vspace{0.2cm}

\item
$({\mathcal Q}2)$:~Under what conditions characterized in terms of  systems parameters and initial data, does system \eqref{M-5} exhibit emergent dynamics?
\end{itemize}
In this work, we deal with the questions $({\mathcal Q}1)$ and $({\mathcal Q}2)$ {which generalize the emergent dynamics in \cite{H-P} in which the condition $\kappa_0 \gg \kappa_1$ is required to guarantee emergent behaviors.} \newline

\noindent The main results of this paper are three-fold. First, as a first step for the emergent dynamics of full system \eqref{M-5}, we consider two Cauchy problems for sub-systems:
{{
\begin{align}
\begin{aligned} \label{M-6}
& \begin{cases}
\dot{z}_j= \kappa_0(\langle{z_j, z_j}\rangle z_c-\langle{z_c, z_j}\rangle z_j), \quad t > 0, \\
z_j(0) = z_j^{in}, \quad  \|z^{in}_j \| = 1,
\end{cases} \\
& \mbox{and} \\
& \begin{cases}
 \dot{z}_j=\kappa_1(\langle{z_j, z_c}\rangle-\langle{z_c, z_j}\rangle)z_j, \quad t > 0, \\
 z_j(0) = z_j^{in}, \quad  \|z^{in}_j \| = 1.
 \end{cases}
 \end{aligned}
\end{align}}}
{If we set $\kappa_1=0$, we obtain the first system of \eqref{M-6}, in contrast if we set $\kappa_0=0$, then we obtain the second equation of \eqref{M-6}.} From now on, we call the first and second systems as Subsystem A and Subsystem B, respectively. First, for the Subsystem A, we provide a nontrivial conserved quantity and a sufficient framework for complete aggregation (see Definition \ref{D2.1}). \newline

\noindent For a non-overlapping configuration $Z = (z_1, \cdots, z_N)$ with
\[ z_i \not = z_j, \quad  1 \leq i \not = j \leq N, \]
we introduce a cross-ratio like functional:
\begin{equation} \label{M-8}
 \mathcal{C}_{ijkl} :=\frac{(1-\langle{z_i, z_j}\rangle)(1-\langle{z_k, z_l}\rangle)}{(1-\langle{z_i, z_l}\rangle)(1-\langle{z_k, z_j}\rangle)}.
\end{equation}
Then, we can show that $\mathcal{C}_{ijkl}$ is a conserved quantity along the flow \eqref{M-6} (Proposition \ref{P3.1}):
\[
 \mathcal{C}_{ijkl}(t) =  \mathcal{C}_{ijkl}(0), \quad t \geq 0.
\]
Recently, the above cross ratio like quantity \eqref{M-8} was also introduced for the Lohe matrix model \cite{Lo-0} and Lohe sphere model in \cite{H-K-P-R}. For the emergent dynamics, we show that $h_{ij} := \langle z_i, z_j \rangle$ tends to one exponentially fast (see Theorem \ref{T3.1}), when the coupling strength, the initial data $\{z_j \}$ and coupling strength satisfy
\[
\kappa_0 > 0, \quad \|z_i^{in} \|=1,\quad \max_{i\neq j}|1-\langle{z_i^{in}, z_j^{in}}\rangle|<1/2.
\]
Second, we show that Subsystem B is completely determined by the phase dynamics of the Kuramoto model with frustration. More precisely, we verify that the general solution $z_j$ takes the form of 
\[ z_j(t)= e^{{\mathrm i} \theta_j(t)} z^{in}_j, \quad j = 1, \cdots, N, \]
where the dynamics for $\theta_j$ is governed by the Kuramoto model with frustration (see Theorem \ref{T3.2}):
\begin{equation*} 
\begin{cases}
\displaystyle \dot{\theta}_j=\frac{2 \kappa_1}{N}\sum_{k=1}^N R_{jk}^{in} \sin(\theta_k-\theta_j+\alpha_{jk}), \quad t > 0, \\
\displaystyle \theta_j(0)=0,
\end{cases}
\end{equation*}
where the amplitude $R_{jk}^{in}$ and frustration $\alpha_{jk}$ are completely determined by the initial data:
\[ \langle{z_j^{in}, z_k^{in}}\rangle=R^{in}_{jk}e^{\mathrm{i} \alpha_{jk}}. \]
Note that the emergent dynamics for the Kuramoto model has been extensively studied in literature, to name a few \cite{A-R,B-C-M, C-H-J-K,  C-S, D-X, D-B1,D-B0, D-B, H-K-R, H-L-X, M-S1, M-S2, M-S3, V-M1, V-M2}. {Details will be treated in Section 3.} 

Third, we consider an emergent dynamics to system \eqref{M-5} with $\Omega_j  = 0$ {to see the interplay of two competing mechanisms in Subsystem A and Subsystem B.} For this, we set
\begin{equation} \label{M-9}
\rho = \| z_c \|. 
\end{equation} 
Then the quantity $\rho$ has been used as an order parameter measuring the aggregation for the Lohe sphere model. In Lemma \ref{L4.1}, we derive a differential inequality for $\rho$:
\[
\frac{d\rho^2}{dt}=\frac{2\kappa_0}{N}\sum_{i = 1}^{N} \Big( \rho^2 -|\langle{z_i, z_c}\rangle|^2 \Big) +\frac{4(\kappa_0+\kappa_1)}{N}\sum_{i =1}^{N} \Big| \mathrm{Im}(\langle{z_i, z_c}\rangle) \Big|^2.
\]
For the complete aggregation of \eqref{M-5}, we provide a sufficient framework leading to the complete aggregation:
\[ 0<  \kappa_1 <  \frac{1}{4} \kappa_0.  \]
As long as the solution $\{z_j \}$ satisfies a priori condition:
\[ z_c(t) \neq 0, \quad t \geq 0, \]
the configuration diameter $\mathcal{D}(Z) := \max_{i,j} \|z_i - z_j \|$ converges to zero exponentially fast (see Theorem \ref{T4.1}). We also provide a uniform stability estimate of \eqref{M-5} with respect to initial data. More precisely, let $Z:= \{ z_j \}$ and ${\tilde Z} = \{ {\tilde z}_j \}$ be two global solutions to \eqref{M-5}. Then, we show that  there exists a positive constant $G$ independent of time $t$ such that 
\[
\sup_{0 \leq t < \infty} \|Z(t)-\tilde{Z}(t)\|_{p} \leq G\|Z^{in}-\tilde{Z}^{in}\|_{p},
\]
where $p \in [1, \infty)$ and $\|Z\|_p:=\left(\sum_{j=1}^N|z_j|^p\right)^{1/p}$. {Details can be found in Section \ref{sec:4}}. \newline

The rest of this paper is organized as follows. In Section \ref{sec:2}, we briefly discuss the modeling spirit of the Lohe tensor model and basic properties of the Lohe hermitian sphere model. In Section \ref{sec:3}, we study emergent properties of two subsystems for the LHS model. In Section \ref{sec:4}, we study emergent dynamics of the full LHS model for a homogeneous ensemble. Finally, Section \ref{sec:5} is devoted to a brief summary of our main results and some issues to be explored in a future work.  

\vspace{0.5cm}

\noindent {\bf Notation}: Throughout the paper, we use subscript and superscript to denote the number of particle and component, respectively. We set $z_j = (z_j^1, \cdots, z_j^d) \in \bbc^d$.

\section{Preliminaries} \label{sec:2}
\setcounter{equation}{0} 
In this section, we briefly review basic properties and emergent dynamics of the Lohe tensor model and the LHS model. First, we recall definitions on the Frobenius norm of a tensor and collective dynamics as follows.
\begin{definition} \label{D2.1}
Let $\{T_i \}$ be an ensemble of rank-$m$ tensors with the same size. 
\begin{enumerate}
\item
The Frobenius norm $\|\cdot\|_F$ is defined as follows:
\[
\|T\|_F :=\left(\sum_{\alpha_1, \alpha_2, \cdots, \alpha_m}\big|[T]_{\alpha_1\alpha_2\cdots\alpha_m}\big|^2\right)^{1/2}.
\]

\vspace{0.2cm}

\item
The ensemble exhibits a complete aggregation, if the configuration diameter tends to zero asymptotically:
\[ \lim_{t\rightarrow\infty} {\mathcal D}(T(t)) = 0, \]
where ${\mathcal D}(T) := \max_{i,j} \| T_i - T_j \|_F$ is the diameter of the configuration $\{T_j \}$. 

\vspace{0.2cm}

\item
The ensemble exhibits a practical aggregation, if the configuration diameter satisfies
\[  \lim_{\kappa_0 \to \infty+} \limsup_{t\rightarrow\infty} {\mathcal D}(T(t))=0.     \]
\end{enumerate}
\end{definition}

\subsection{The Lohe tensor model} \label{sec:2.1}
In this subsection, we briefly discuss the Lohe tensor model which generalizes earlier Lohe type models such as the Lohe sphere model on $\bbr^d$ and the Lohe matrix model on the unitary group $\bbu(d)$.  Although the detailed exposition on the modeling spirit can be found in \cite{H-P} in detail, we briefly provide the modeling spirit for the readability of the paper following aforementioned work. \newline

{{Consider an ensemble of particles in $\bbr^{d}$ which is an ensemble of real rank-1 tensors, and let $x_j = x_j(t)$ be the position of the $j$-th particle. Then, the Lohe sphere model on $\bbr^d$ reads as follows. 
\begin{equation} \label{NB-1}
\dot{x}_j=\Omega_j x_j +  \kappa (\langle{x_j, x_j}\rangle x_c-\langle{x_j, x_c}\rangle{}x_j), \quad j = 1, \cdots, N, 
\end{equation}
where $\Omega_j \in \bbr^{d \times d}$ is a real skew-symmetric matrix (real rank-2 tensor) with the property $\Omega_j^{t} = -\Omega_j$ and $\langle \cdot, \cdot \rangle$ is the standard inner product in $\bbr^d$.  These properties result in 
\begin{equation*} \label{NB-1-0}
\langle{x_j, \Omega_j{x_j}}\rangle+\langle{\Omega_jx_j, x_j}\rangle=0, \quad x_j \in \bbr^d.
\end{equation*}
System \eqref{NB-1} can be also rewritten in a component form:
\begin{equation} \label{NB-2-0-1}
\frac{d}{dt} [x_j]_{\alpha}= [\Omega_j]_{\alpha \beta} [x_j]_{\beta} + \kappa \Big ( [x_j]_\beta [x_j]_\beta  [x_c]_{\alpha} - [x_j]_\beta [x_c]_\beta  [x_j]_{\alpha}  \Big),
\end{equation}
where we used Einstein summation rule.  \newline

Next, we consider an ensemble of $d\times d$ unitary matrices. Then, the Lohe matrix model on $\bbu(d)$ reads as follows.
\begin{equation} \label{NB-2-1}
{\mathrm i}\dot{U}_jU_j^*=H_j+{{\mathrm i}\kappa\over2N}\sum_{k=1}^N\left( U_kU_j^*-U_jU_k^*\right),
\end{equation}
where  $A^*$ denotes the hermitian conjugate of the matrix $A$, and $H_j$ is the Hermitian matrix with the property $H_j^* = H_j$. These properties result in the following relation: 
\begin{equation*} \label{NB-2-2}
\langle U_j, -{\mathrm i} H_j U_j \rangle_F + \langle -{\mathrm i} H_j U_j, U_j \rangle_F =0, \quad j = 1, \cdots, N,
\end{equation*}
where $\langle \cdot, \cdot \rangle_F$ is the Frobenius inner product on ${\mathbb U}(d)$:
\[ \langle A, B \rangle_F := \mbox{tr}(A^* B), \quad A, B \in {\mathbb U}(d). \]
Then, it is well known from \cite{Lo-1, Lo-2} that system \eqref{NB-2-1} conserves the quadratic quantity $U_j^* U_j$. Hence one has 
\[ U_j^*(t) U_j(t)= I_d, \quad t \geq 0. \]
In this case, system \eqref{NB-2-1} becomes
\begin{equation} \label{NNB-2}
\frac{dU_j}{dt} = - {\mathrm i} H_j U_j +{\kappa\over2} \left( U_cU_j^*U_j -U_jU_c^* U_j\right).
\end{equation}
Then, the $(\alpha, \beta)$-th component of both sides of \eqref{NNB-2} satisfies
\begin{equation} \label{NB-2-2-4}
\frac{d}{dt} [U_j]_{\alpha \beta} = [-{\mathrm i} H_j U_j]_{\alpha \beta} + {\kappa\over2}\left[[U_c]_{\alpha\gamma }[U^*_j]_{\gamma \delta}[U_j]_{\delta\beta}-[U_j]_{\alpha\gamma} [U^*_c]_{\gamma \delta}[U_j]_{\delta\beta}\right].
\end{equation}
Motivated by the special cases \eqref{NB-2-0-1} and \eqref{NB-2-2-4}, the free flow part is given as 
\begin{equation} \label{NB-2-2-4-1}
[A_j]_{\alpha_{*0}\alpha_{*1}}[T_j]_{\alpha_{*1}}, 
\end{equation}
where $A_j$ is skew-Hermitian rank-$2m$ tensor such as
\[  \bar{[A_j]}_{\alpha_{*0}\alpha_{*1}}=-[A_j]_{\alpha_{*1}\alpha_{*0}}. \]
In contrast, motivated by the explicit forms \eqref{NB-2-0-1} and \eqref{NB-2-2-4}, interaction parts in \eqref{M-1} can be defined as cubic couplings :
\begin{equation} \label{NB-2-2-5}
\sum_{i_* \in \{0, 1\}^m}\kappa_{i_*} \Big([T_c]_{\alpha_{*i_*}}\bar{[T_j]}_{\alpha_{*1}}[T_j]_{\alpha_{*(1-i_*)}}-[T_j]_{\alpha_{*i_*}}\bar{[T_c]}_{\alpha_{*1}}[T_j]_{\alpha_{*(1-i_*)}} \Big).
\end{equation}
Finally, we combine \eqref{NB-2-2-4-1} and \eqref{NB-2-2-5} to derive the R.H.S. in \eqref{M-1}.  \newline

Next, we consider the coupling terms in \eqref{M-1}  for complex rank-1 and rank-2 tensors and compare them with those appearing in the Lohe sphere and Lohe matrix models. For complex rank-1 tensors, the terms in \eqref{NB-2-2-5} become
\begin{equation} \label{NB-2-2-6}
 \kappa_0 \Big(  [T_c]_{\alpha} [\bar{T}_i]_{\beta}[T_i]_{\beta} - [T_i]_{\alpha} [\bar{T}_c]_{\beta}[T_i]_{\beta}  \Big)  + \kappa_1 \Big( [T_c]_{\beta} [\bar{T}_j]_{\beta}[T_j]_{\alpha} - [T_j]_{\beta} [\bar{T}_c]_{\beta}[T_i]_{\alpha}  \Big ).
\end{equation}
Note that for real-valued tensors, the second term in \eqref{NB-2-2-6} becomes zero and the first term is exactly the same as in \eqref{NB-2-0-1}. Now we return to complex rank-2 tensors, the cubic interaction terms in  \eqref{NB-2-2-5} become
\begin{align*}
\begin{aligned} \label{NB-2-2-7}
&\kappa_{00}(\mathrm{tr}(T_j^* T_j)T_c-\mathrm{tr}(T_c^* T_j)T_j) + + \kappa_{11}\mathrm{tr}(T_j^* T_c-T_c^* T_j)T_j \\
&\hspace{2cm} + \kappa_{10}(T_j T_j^* T_c-T_j T_c^* T_j) +  \kappa_{01}(T_cT_j^* -T_jT_c^*) T_j.
\end{aligned}
\end{align*}}}
{Note that the terms $\kappa_{00}$ and $\kappa_{11}$ correspond to the coupling terms appearing in the Lohe sphere model, and the third term involving with $\kappa_{10}$ is exactly the same in the Lohe matrix model, and the forth term involving with $\kappa_{01}$ is a new interaction term.}

Although $2^m$ coupling terms in the R.H.S. of \eqref{M-1} look complicated, the system has a basic conservation law as follows.
\begin{lemma} \label{L2.1}
\emph{\cite{H-P}}
Let $\{ T_j \}$ be a global solution to the Lohe tensor model \eqref{M-1}. Then $\|T_j \|_F$ is a conserved quantity: for $j = 1, \cdots, N$, 
\[ \|T_j(t)\|_F = \|T_j^{in}\|_F, \quad t > 0. \]
\end{lemma}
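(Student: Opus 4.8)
The plan is to show $\frac{d}{dt}\|T_j\|_F^2 = 0$ by contracting the right-hand side of $\eqref{M-1}_1$ against $\overline{[T_j]_{\alpha_{*0}}}$ and checking that the resulting scalar is purely imaginary. Writing $\|T_j\|_F^2 = \sum_{\alpha_*}[T_j]_{\alpha_*}\overline{[T_j]_{\alpha_*}}$ and differentiating,
\[ \frac{1}{2}\frac{d}{dt}\|T_j\|_F^2 = \mathrm{Re}\sum_{\alpha_{*0}}\overline{[T_j]_{\alpha_{*0}}}\,\dot{[T_j]}_{\alpha_{*0}}, \]
so it suffices to prove that the contraction of each of the two groups of terms on the right of $\eqref{M-1}_1$ with $\overline{[T_j]_{\alpha_{*0}}}$ has vanishing real part.

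First I would dispose of the free-rotation term. Set $S := \sum_{\alpha_{*0},\alpha_{*1}}\overline{[T_j]_{\alpha_{*0}}}[A_j]_{\alpha_{*0}\alpha_{*1}}[T_j]_{\alpha_{*1}}$. Taking the complex conjugate, invoking the skew-hermitian relation $\eqref{M-1}_3$, namely $\overline{[A_j]}_{\alpha_{*0}\alpha_{*1}} = -[A_j]_{\alpha_{*1}\alpha_{*0}}$, and then relabelling the dummy indices $\alpha_{*0}\leftrightarrow\alpha_{*1}$, one finds $\overline{S} = -S$; hence $S$ is purely imaginary and contributes nothing to the real part.

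The heart of the argument is the cubic coupling. Since the $\kappa_{i_*}$ are arbitrary nonnegative constants, it is enough to treat each multi-index $i_*\in\{0,1\}^m$ separately. For fixed $i_*$, write the two contracted cubic contributions as
\[ A_{i_*} := \sum \overline{[T_j]_{\alpha_{*0}}}[T_c]_{\alpha_{*i_*}}\overline{[T_j]_{\alpha_{*1}}}[T_j]_{\alpha_{*(1-i_*)}}, \quad B_{i_*} := \sum \overline{[T_j]_{\alpha_{*0}}}[T_j]_{\alpha_{*i_*}}\overline{[T_c]_{\alpha_{*1}}}[T_j]_{\alpha_{*(1-i_*)}}, \]
the sums running over all $2m$ copies $\alpha_{10},\alpha_{11},\dots,\alpha_{m0},\alpha_{m1}$. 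The key claim is $A_{i_*} = \overline{B_{i_*}}$, from which $A_{i_*}-B_{i_*} = \overline{B_{i_*}}-B_{i_*}$ is purely imaginary, so the coupling contribution $\sum_{i_*}\kappa_{i_*}(A_{i_*}-B_{i_*})$ carries no real part and the proof concludes. To prove the claim I would introduce the relabelling $\sigma_{i_*}$ of the summation indices that exchanges the two copies $\alpha_{k0}\leftrightarrow\alpha_{k1}$ exactly at those slots $k$ with $i_k=0$ and fixes the slots with $i_k=1$. A direct check shows that $\sigma_{i_*}$ permutes the four index-patterns according to $0\mapsto 1-i_*$, $1\mapsto i_*$, $i_*\mapsto 1$, $1-i_*\mapsto 0$, so that applying $\sigma_{i_*}$ to the expression for $\overline{B_{i_*}}$ carries it termwise onto $A_{i_*}$, which is the desired identity.

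The routine parts are the differentiation and the skew-hermitian cancellation; the one step demanding care is the identification of the correct partial swap $\sigma_{i_*}$ and the verification that it transports $\overline{B_{i_*}}$ exactly onto $A_{i_*}$ — the multi-index bookkeeping of how the mixed patterns $\alpha_{*i_*}$ and $\alpha_{*(1-i_*)}$ transform under $\sigma_{i_*}$ is the only genuine obstacle. As a sanity check one may specialise to $m=1$, where $A_{i_*}$ and $B_{i_*}$ factor as $\|T_j\|_F^2\langle T_j,T_c\rangle$ and $\|T_j\|_F^2\langle T_c,T_j\rangle$ respectively, making the conjugacy $A_{i_*}=\overline{B_{i_*}}$ transparent and recovering the known cancellation underlying the LHS model \eqref{M-5}.
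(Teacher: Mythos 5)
Your proof is correct. The paper itself gives no argument for Lemma \ref{L2.1} (it is quoted from \cite{H-P}), and your computation --- splitting off the free-flow term via the skew-hermitian relation $\eqref{M-1}_3$ and cancelling the cubic terms through the partial index swap $\sigma_{i_*}$, which indeed sends the patterns $0\mapsto 1-i_*$, $1\mapsto i_*$, $i_*\mapsto 1$, $1-i_*\mapsto 0$ and hence carries $\overline{B_{i_*}}$ onto $A_{i_*}$ --- is exactly the standard argument underlying that reference.
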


\vspace{0.5cm}

\noindent Next, we summarize results in \cite{H-P} for system \eqref{M-1}. For this, we set 
\[  \|T_c^{in}\|_F := \|T_c(0)\|_F, \quad  \hat{\kappa}_0 :=  \sum_{i_*\neq0}\kappa_{i_*}, \]
where $T_c=\frac{1}{N}\sum_{k=1}^NT_k$ and $\eta$ is the largest root of the quadratic equation:
\[ 2\kappa_0 x^2+(\kappa_{0}- 4\hat{\kappa}_0  \|T_c^{in}\|_F^2 )x = {\mathcal D}(A). \]
Then, the root $\eta$ satisfies
\[  0< \eta < \frac{\kappa_0- 4\hat{\kappa}_0 \|T_c^{in}\|_F^2 }{2\kappa_0}.   \]
In next theorem, we present a sufficient framework for complete and practical aggregations.
\begin{theorem} \label{T2.1}
\emph{\cite{H-P}}
The following assertions hold.
\begin{enumerate}
\item
Suppose that frequency matrix, the coupling strengths and initial data satisfy
\begin{align*}
\begin{aligned} 
& A_j = 0,  \quad  {\hat \kappa}_{0}  < \frac{\kappa_{0}}{4 \|T_c^{in}\|_F^2}, \quad \|T_j^{in}\|_F = 1, \quad j = 1, \cdots, N, \\
& 0< {\mathcal D}(T^{in})<\frac{\kappa_{0}- 4{\hat \kappa}_0 \|T_c^{in}\|_F^2}{2 \kappa_0}, 
\end{aligned}
\end{align*}
and let $\{ T_i \}$ be a global solution to \eqref{M-1}. Then, there exist positive constants $C_0$ and $C_1$ depending on $\kappa_{i_*}$ and $\{ T_j ^{in} \}$ such that 
\begin{equation*} \label{F-1-0}
C_0 e^{-\left(\kappa_{0}+ 4 \hat{\kappa}_0  \|T_c^{in}\|_F^2 \right)t} \leq {\mathcal D}(T(t)) \leq C_1 e^{-\left(\kappa_{0}- 4 \hat{\kappa}_0 \|T_c^{in}\|_F^2  \right)t}, \quad t \geq 0.
\end{equation*}
\item
Suppose that coupling strength, initial data and frequency matrices satisfy
\[ \kappa_0 > 0, \quad  0\leq{} {\mathcal D}(T(0))\leq\eta \quad \mbox{and} \quad {\mathcal D}(A)< \frac{|\kappa_0- 4\hat{\kappa}_0 \|T_c^{in}\|_F^2 |^2}{8 \kappa_0}, \]
and let $\{T_i \}$ be a solution to system \eqref{D-1}. Then practical synchronization emerges asymptotically:
\[ \lim_{D(A)/\kappa_0 \to 0+} \limsup_{t\rightarrow\infty}D(T(t))=0.  \]
\end{enumerate}
\end{theorem}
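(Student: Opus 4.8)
The plan is to control the configuration diameter through the pairwise quantities $h_{ij}:=\langle T_i,T_j\rangle_F$ and to extract a two-sided differential inequality for $\|T_i-T_j\|_F^2$ whose integration produces the claimed exponential envelope. First I would invoke Lemma \ref{L2.1} to fix $\|T_j(t)\|_F=1$ for all $t\ge0$, so that $\|T_c(t)\|_F\le1$ and every cubic coupling can be estimated in terms of inner products alone. With $A_j=0$ I would split the right-hand side of \eqref{M-1} into its leading term, the index $i_*=0$, and the remaining $2^m-1$ terms whose coupling constants sum to $\hat\kappa_0$. Contracting the repeated index $\alpha_{*1}$ and using $\|T_j\|_F=1$ collapses the leading term to the Lohe-sphere-type projection $\kappa_0\big(T_c-\langle T_c,T_j\rangle_F\,T_j\big)$, exactly as in \eqref{NB-2-0-1}.

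The heart of the estimate is the elementary identity $\tfrac{d}{dt}\|T_i-T_j\|_F^2=-2\,\mathrm{Re}\,\tfrac{d}{dt}h_{ij}$. For the leading term the common factor $T_c$ cancels in $\dot T_i-\dot T_j$ and one obtains the clean relation
\[
\frac{d}{dt}h_{ij}=\kappa_0\big(\langle T_c,T_j\rangle_F+\langle T_i,T_c\rangle_F\big)\big(1-h_{ij}\big),
\]
so that, since $1-\mathrm{Re}\,h_{ij}=\tfrac12\|T_i-T_j\|_F^2$, the leading term contracts at a rate governed by $\mathrm{Re}\langle T_c,T_j\rangle_F$. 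The remaining cubic terms, each carrying one factor of $T_c$ and being Lipschitz in $T_i-T_j$, contribute a de-contracting term bounded through $\|T_j\|_F=1$ and Cauchy--Schwarz by a constant multiple of $\hat\kappa_0\|T_c^{in}\|_F^2\|T_i-T_j\|_F^2$. Combining the two and maximizing over pairs yields
\[
-\big(2\kappa_0+8\hat\kappa_0\|T_c^{in}\|_F^2\big)\mathcal D^2\le\frac{d}{dt}\mathcal D^2\le-\big(2\kappa_0-8\hat\kappa_0\|T_c^{in}\|_F^2\big)\mathcal D^2,
\]
and passing from $\mathcal D^2$ to $\mathcal D$ (which halves each rate) reads off the two exponents $\kappa_0\pm4\hat\kappa_0\|T_c^{in}\|_F^2$ via Gr\"onwall.

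The main obstacle is that the contraction rate genuinely degrades as the ensemble spreads, because it is driven by $\mathrm{Re}\langle T_c,T_j\rangle_F$, which is close to $1$ only while the states stay clustered. I would therefore close the argument with a continuity/bootstrap scheme: the hypothesis $\hat\kappa_0<\kappa_0/(4\|T_c^{in}\|_F^2)$ is exactly what makes the upper rate $\kappa_0-4\hat\kappa_0\|T_c^{in}\|_F^2$ strictly positive, and the smallness of $\mathcal D(T^{in})$ below $(\kappa_0-4\hat\kappa_0\|T_c^{in}\|_F^2)/(2\kappa_0)$ keeps the configuration inside the invariant region on which this rate is attained; one shows that so long as $\mathcal D$ stays under the threshold the upper inequality forces it to decay, so the threshold is never reached. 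A short separate computation giving $\tfrac{d}{dt}\|T_c\|_F^2\ge0$ (again by Cauchy--Schwarz with $\|T_j\|_F=1$) keeps the coefficients under control throughout.

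For the practical-aggregation statement, I would repeat the diameter computation but retain the free-flow contribution $[A_j]_{\alpha_{*0}\alpha_{*1}}[T_j]_{\alpha_{*1}}$. The skew-hermitian property of $A_j$ in \eqref{M-1} removes its self-interaction, leaving an inhomogeneous forcing controlled by $\mathcal D(A)$. This upgrades the previous estimate to a Riccati-type inequality $\tfrac{d}{dt}\mathcal D\le-\big(2\kappa_0\mathcal D^2+(\kappa_0-4\hat\kappa_0\|T_c^{in}\|_F^2)\mathcal D-\mathcal D(A)\big)$, whose right-hand side vanishes precisely at the roots of the quadratic $2\kappa_0x^2+(\kappa_0-4\hat\kappa_0\|T_c^{in}\|_F^2)x=\mathcal D(A)$. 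Since the product of the roots is negative there is always a single positive root $\eta$, and the sign of the inequality at $x=\eta$ makes $[0,\eta]$ forward-invariant; the hypothesis $\mathcal D(A)<|\kappa_0-4\hat\kappa_0\|T_c^{in}\|_F^2|^2/(8\kappa_0)$ guarantees that $\eta$ lies strictly inside the region $\mathcal D<(\kappa_0-4\hat\kappa_0\|T_c^{in}\|_F^2)/(2\kappa_0)$ on which the inequality was derived, so the trapping is legitimate. With $\mathcal D(0)\le\eta$ the trajectory never leaves $[0,\eta]$, and letting $\mathcal D(A)/\kappa_0\to0+$ sends $\eta\to0$, which is exactly the asserted practical aggregation. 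The delicate point here is quantifying the forcing from $A$ uniformly in the configuration and confirming forward-invariance despite the time-dependence of the contraction coefficient.
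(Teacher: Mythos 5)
Your overall strategy---controlling ${\mathcal D}(T)=\max_{i,j}\|T_i-T_j\|_F$ through the pairwise correlations $h_{ij}=\langle T_i,T_j\rangle_F$ and closing Gronwall/Riccati differential inequalities whose trapping regions are fixed by the stated smallness conditions---is exactly the route the paper takes: its proof consists precisely of two such displayed inequalities. However, in part (ii) your Riccati inequality carries the quadratic term with the wrong sign, and this is not cosmetic. The $2\kappa_0{\mathcal D}^2$ term records the degradation of alignment as the ensemble spreads (one only has $\mathrm{Re}\langle T_c,T_i+T_j\rangle_F\geq 2-O({\mathcal D}^2)$, so the contraction rate weakens with ${\mathcal D}$); it is therefore \emph{de-contracting}, and the paper's inequality reads $\frac{d}{dt}{\mathcal D}\leq 2\kappa_0{\mathcal D}^2-(\kappa_0-4{\hat \kappa}_0\|T_c^{in}\|_F^2){\mathcal D}+{\mathcal D}(A)$, whereas you write $\frac{d}{dt}{\mathcal D}\leq -2\kappa_0{\mathcal D}^2-(\cdots){\mathcal D}+{\mathcal D}(A)$. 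With your sign the right-hand side has a single positive root attracting every trajectory, so neither the hypothesis ${\mathcal D}(T(0))\leq\eta$ nor the discriminant condition ${\mathcal D}(A)<|\kappa_0-4{\hat \kappa}_0\|T_c^{in}\|_F^2|^2/(8\kappa_0)$ would be needed---a signal that the inequality you wrote is stronger than what the dynamics provides. With the correct sign the quadratic has two positive roots $x_-<x_+$ (the discriminant condition is exactly what makes them real and distinct), $[0,x_+]$ is the trapping region, $x_-$ is the attractor, and $x_-\to 0$ as ${\mathcal D}(A)/\kappa_0\to 0+$; your root and invariance analysis must be redone on that basis.

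In part (i) there is a second gap: the passage from $\frac{d}{dt}h_{ij}=\kappa_0(1-h_{ij})(\langle T_i,T_c\rangle_F+\langle T_c,T_j\rangle_F)$ to a closed inequality for $\|T_i-T_j\|_F^2=2(1-\mathrm{Re}\,h_{ij})$ silently drops a cross term. Taking real parts produces, besides $-(1-\mathrm{Re}\,h_{ij})\,\mathrm{Re}\langle T_c,T_i+T_j\rangle_F$, the contribution $\mathrm{Im}\,h_{ij}\cdot\mathrm{Im}\langle T_c,T_j-T_i\rangle_F$. Since $|\mathrm{Im}\,h_{ij}|$ is only $O({\mathcal D})$, not $O({\mathcal D}^2)$ (take $T_j=e^{{\mathrm i}\epsilon}T_i$), this term is $O(\kappa_0{\mathcal D}^2)$---the same order as the contraction itself---and it originates from the \emph{leading} $\kappa_0$-coupling, so it is not absorbed by your ${\hat \kappa}_0\|T_c^{in}\|_F^2$ bound on the remaining $2^m-1$ terms. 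This is the genuine difficulty of the complex setting, and it is precisely why Section~3 of this paper analyzes Subsystem A through the modulus $|1-h_{ij}|$ and the coupled system for $(1-\mathrm{Re}\,h_{ij},\mathrm{Im}\,h_{ij})$ rather than through $\|z_i-z_j\|$ alone; your sketch needs either that device or an explicit argument controlling the sign or size of the cross term before the claimed two-sided estimate, and hence the exponents $\kappa_0\pm 4{\hat \kappa}_0\|T_c^{in}\|_F^2$, can be read off.
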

\begin{proof}
(i)~The first assertion is based on Gronwall's differential inequality:
\[   \left|{d\over{dt}} {\mathcal D}(T)+\kappa_0 {\mathcal D}(T) \right|  \leq 2\kappa_0{\mathcal D}(T)^2+ 2\hat{\kappa}_0 \|T_c^{in}\|_F {\mathcal D}(T), \quad \mbox{a.e.}~t > 0. \]
(ii)~The second assertion is also based on Gronwall's differential inequality:
\[
\frac{d}{dt} {\mathcal D}(T) \leq 2\kappa_0 {\mathcal D}(T)^2-(\kappa_{0}- 4\hat{\kappa}_0  \|T_c^{in}\|^2_F) {\mathcal D}(T)+ {\mathcal D}(A), \quad \mbox{a.e.~$t > 0$}.
\]
\end{proof}
{{
\begin{remark} \label{R2.1}
Note that if $\|T_i^{in}\|_2\neq\|T_j^{in}\|_F$ for some $i\neq j$, then by Lemma \ref{L2.1} one has 
 \[ \|T_i(t)\|_F\neq \|T_j(t)\|_F \quad \mbox{for all $t$}. \]
 Thus, the relation 
\[ \lim_{t \to \infty} \| T_i(t) -T_j(t) \|_F \not = 0,
\]
cannot be achived, i.e., the ensemble cannot aggregate. This is why we impose on the unit norm condition $ \|T_j^{in}\|_F = 1$. 
\end{remark}}}

\subsection{The Lohe hermitian sphere model}  \label{sec:2.2} 
Consider the Lohe hermitian sphere model:
\begin{equation} \label{A-4}
\dot{z}_j= \Omega_j z_j +\kappa_{0} (z_c\langle z_j, z_j \rangle-z_j \langle z_c, z_j \rangle )+\kappa_1(\langle{z_j, z_c}\rangle- \langle z_c,  z_j\rangle) z_j.
\end{equation}
Note that the interaction terms involving with $\kappa_0$ and $\kappa_1$ in \eqref{A-4} correspond to the Lohe sphere model and new terms due to the complex nature of underlying state space. For example, for a real rank-1 tensor $z_j = x_j \in \bbr^d$, the second coupling term in the R.H.S. of \eqref{A-4} becomes zero:
\begin{equation*} \label{A-5}
\langle{z_j, z_c}\rangle- \langle z_c, z_j\rangle = 0.
\end{equation*}
Hence, for $z_j = x_j  \in \bbr^d$, system \eqref{A-4} reduces to the Lohe sphere model \cite{Lo-2}:
\[ 
\dot{x}_j= \Omega_j x_j +\kappa_{0} \Big(x_c\langle x_j, x_j \rangle-x_j \langle x_c. x_j \rangle \Big).
\]
Now, we return to the Lohe hermitian sphere model \eqref{A-4}, and explain how \eqref{A-4} can be related to the Kuramoto model. For this, we set 
\begin{equation} \label{A-6}
\Omega_j = 0, \quad  z_j = r_j e^{\mathrm{i}\theta_j}, \quad j = 1, \cdots, N \quad \mbox{and} \quad z_c := r_c e^{\mathrm{i}\phi}.
 \end{equation}
We substitute the ansatz \eqref{A-6} into \eqref{A-4} to see
\begin{align*}
\begin{aligned}
\dot{r}_je^{\mathrm{i}\theta_j}+{\mathrm i} r_je^{\mathrm{i}\theta_j}\dot{\theta_j}&=\kappa_0 r_j^2 r_c (e^{{\mathrm i} \phi}-e^{{\mathrm i} (2\theta_j-\phi)})+\kappa_1 r_j^2 r_c (e^{\mathrm i \phi}-e^{\mathrm i (2\theta_j-\phi)})\\
&=2(\kappa_0+\kappa_1) r_j^2 r_c \mathrm{i}\sin(\phi-\theta_j)e^{\mathrm i \theta_j}.
\end{aligned}
\end{align*}
This yields
\begin{equation} \label{A-7}
\dot{r}_j+ {\mathrm i} r_j\dot{\theta_j} =2(\kappa_0+\kappa_1)\mathrm{i} r_j^2 r_c \sin(\phi-\theta_j).
\end{equation}
We compare real and imaginary parts of the above relation \eqref{A-7} to get 
\[
\dot{r}_j=0 \quad \mbox{and} \quad \dot{\theta}_j=2(\kappa_0+\kappa_1)r_j r_c \sin(\phi-\theta_j).
\]
Hence, we have
\[ r_j(t) = r_j^{in} \qquad  \dot{\theta_j}=\frac{2(\kappa_0+\kappa_1)}{N}\sum_{k=1}^N r_j^{in} r_k^{in}\sin(\theta_k-\theta_j). \]
Next, we discuss the solution splitting property of \eqref{A-4} for a homogeneous ensemble with $\Omega_j = \Omega$.  Consider two Cauchy problems with the same initial data:
\begin{equation} \label{B-1}
\begin{cases}
\displaystyle \dot{z}_j = \Omega z_j +  \kappa_{0} \Big(z_c\langle z_j, z_j \rangle-z_j \langle z_c. z_j \rangle \Big)+\kappa_1 \Big(\langle{z_j, z_c}\rangle- \langle z_c, z_j\rangle \Big) z_j, \quad t > 0, \\
\displaystyle z_j(0)  = z_j^{in}, \quad j = 1, \cdots, N,
\end{cases}
\end{equation}
and
\begin{equation} \label{B-1-0-0}
\begin{cases}
\dot{w}_j=  \kappa_{0} \Big(w_c\langle w_j, w_j \rangle-w_j \langle w_c. w_j \rangle \Big)+\kappa_1 \Big(\langle{w_j, w_c}\rangle- \langle w_c, w_j\rangle \Big) w_j, \quad t > 0, \\
\displaystyle w_j(0)  = z_j^{in}, \quad j = 1, \cdots, N.
\end{cases}
\end{equation}
In next proposition, we show that  how solutions to \eqref{B-1} and \eqref{B-1-0-0} are related each other.
\begin{proposition} \label{P2.1}
Let $\{z_j \}$ and $\{w_j \}$ be solutions to \eqref{B-1} and \eqref{B-1-0-0} with the same initial data $\{z_j^{in} \}$. Then, one has 
\[ z_j=e^{\Omega t} w_j, \quad j = 1, \cdots, N.    \]
\end{proposition}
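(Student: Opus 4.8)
The plan is to verify the ansatz directly by a uniqueness argument. I would define the candidate $\tilde{z}_j := e^{\Omega t} w_j$, where $\{w_j\}$ solves the $\Omega$-free system \eqref{B-1-0-0}, and then show that $\{\tilde{z}_j\}$ solves the full system \eqref{B-1} with the same initial data $\{z_j^{in}\}$. Since the right-hand side of \eqref{B-1} is a polynomial (hence locally Lipschitz) in the states, the standard Picard--Lindel\"of uniqueness theorem then forces $z_j = \tilde{z}_j = e^{\Omega t} w_j$, which is the desired identity.

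First I would differentiate the candidate using the product rule,
\[
\dot{\tilde{z}}_j = \Omega e^{\Omega t} w_j + e^{\Omega t}\dot{w}_j = \Omega \tilde{z}_j + e^{\Omega t}\dot{w}_j,
\]
and substitute the evolution equation \eqref{B-1-0-0} for $\dot{w}_j$. I would also record the linearity relation for the average,
\[
\tilde{z}_c = \frac{1}{N}\sum_{k=1}^N e^{\Omega t} w_k = e^{\Omega t} w_c,
\]
so that $e^{\Omega t}$ factors uniformly out of every term. The crucial structural input is that $\Omega$ is skew-hermitian ($\Omega^* = -\Omega$), so $e^{\Omega t}$ is a unitary operator on $\bbc^d$ and preserves the inner product:
\[
\langle e^{\Omega t} a, e^{\Omega t} b \rangle = \langle a, b \rangle, \quad a, b \in \bbc^d.
\]
Consequently the scalar coefficients appearing in the cubic couplings are invariant, namely $\langle \tilde{z}_j, \tilde{z}_j\rangle = \langle w_j, w_j\rangle$, $\langle \tilde{z}_c, \tilde{z}_j\rangle = \langle w_c, w_j\rangle$, and $\langle \tilde{z}_j, \tilde{z}_c\rangle = \langle w_j, w_c\rangle$.

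Using these invariances I would pull $e^{\Omega t}$ through the nonlinearity term by term, for instance $e^{\Omega t}\big(w_c \langle w_j, w_j\rangle\big) = \tilde{z}_c \langle \tilde{z}_j, \tilde{z}_j\rangle$ and likewise for the remaining three products, to obtain
\[
e^{\Omega t}\dot{w}_j = \kappa_0\big(\tilde{z}_c \langle \tilde{z}_j, \tilde{z}_j\rangle - \tilde{z}_j \langle \tilde{z}_c, \tilde{z}_j\rangle\big) + \kappa_1\big(\langle \tilde{z}_j, \tilde{z}_c\rangle - \langle \tilde{z}_c, \tilde{z}_j\rangle\big)\tilde{z}_j.
\]
Adding the free-flow term $\Omega \tilde{z}_j$ then recovers exactly the right-hand side of \eqref{B-1}, so $\{\tilde{z}_j\}$ is a genuine solution of the full system. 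Since $\tilde{z}_j(0) = e^{\Omega \cdot 0} w_j(0) = z_j^{in} = z_j(0)$, the two trajectories share initial data and uniqueness completes the argument.

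The only real obstacle is the middle step, checking that the unitary flow $e^{\Omega t}$ commutes with the cubic coupling in the precise sense that it can be extracted as a common left factor. This is not automatic for a general nonlinearity; it works here precisely because each coupling term is \emph{equivariant} under the unitary action, i.e. the vector factors carry one power of $e^{\Omega t}$ while the scalar inner-product coefficients are invariant under it. I would therefore take care to state the skew-hermiticity of $\Omega$ and the resulting unitarity of $e^{\Omega t}$ explicitly, since that single property is what makes the whole computation close.
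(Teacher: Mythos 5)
Your proposal is correct and follows essentially the same route as the paper: both rest on the unitarity of $e^{\Omega t}$ (from skew-hermiticity of $\Omega$), the linearity of the average $z_c$, and the resulting invariance of the inner-product coefficients, so that $e^{\Omega t}$ factors out of the cubic coupling. The only difference is cosmetic — you verify that $e^{\Omega t}w_j$ solves \eqref{B-1} and invoke Picard--Lindel\"of uniqueness explicitly, whereas the paper substitutes the ansatz into \eqref{B-1} and reads off \eqref{B-1-0-0}, leaving the uniqueness step implicit; your version is, if anything, slightly more complete.
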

\begin{proof}
Note that 
\[  \left( e^{\Omega t} \right)^* = (e^{\Omega t})^{-1}.
\]
Then $e^{\Omega t}$ is unitary, and we introduce the variable $y_j$ such that 
\begin{equation} \label{B-1-1}
 z_j=e^{\Omega t} w_j \quad \mbox{for all $j=1, 2, \cdots, N$}.
\end{equation} 
We substitute \eqref{B-1-1} into system \eqref{B-1} to get 
\begin{align*}
\begin{aligned}
e^{\Omega t}\dot{w}_j+\Omega e^{\Omega t}w_j &=\Omega {e^{\Omega t}}w_j+\kappa_0(\langle{{e^{\Omega t}}w^j, {e^{\Omega t}}w_j}\rangle{e^{\Omega t}}w_c-\langle{{e^{\Omega t}}w_c, {e^{\Omega t}}w_j}\rangle {e^{\Omega t}}w_j) \\
&\hspace{1cm} +\kappa_1(\langle{{e^{\Omega t}}w_j, {e^{\Omega t}}w_c}\rangle-\langle{{e^{\Omega t}}w_c, {e^{\Omega t}}w_c}\rangle){e^{\Omega t}}w_j.
\end{aligned}
\end{align*}
After simplification, one has
\begin{equation*} \label{B-1-2}
\dot{w}_j=\kappa_0(\langle{w_j, w_j}\rangle w_c-\langle{w_c,w_j}\rangle w_j)+\kappa_1(\langle{w_j, w_c}\rangle-\langle{w_c, w_j}\rangle)w_j.
\end{equation*}
Thus, we obtain the desired result.
\end{proof}

\section{Emergent dynamics of Subsystem A and Subsystem B} \label{sec:3}
\setcounter{equation}{0}
In this section, we study emergent behaviors of two sub-systems to the complex Lohe sphere model generalizing the Lohe sphere model on the $\bbc^d$. In the following two subsections, we study the emergent dynamics of the following two subsystems:
\begin{align}
\begin{aligned} \label{C-0}
\mbox{Subsystem A}: \quad & \dot{z}_j = \kappa_0(\langle{z_j, z_j}\rangle z_c-\langle{z_c, z_j}\rangle z_j), \quad t > 0, \\
 &z_j(0) =z_j^{in},\quad {{|z_j^{in} |=1}}, \quad  j=1, 2 , \cdots, N, \\
\mbox{Subsystem B}: \quad &\dot{z}_j =\kappa_1(\langle{z_j, z_c}\rangle-\langle{z_c, z_j}\rangle)z_j, \quad t > 0, \\
&z_j(0) =z_j^{in},\quad {{|z_j^{in} |=1}}, \quad \quad j=1, 2 , \cdots, N.
\end{aligned}
\end{align}
\subsection{Subsystem A} \label{sec:3.1} In this subsection, we discuss the following two items for the Subsystem A:
\begin{itemize}
\item
Existence of the cross-ratio like conserved quantities.
\vspace{0.1cm}.
\item
Emergence of complete aggregation.
\end{itemize}

\vspace{0.2cm}

\subsubsection{Constant of motion}~Let $\{z_j \}$ be a solution to the $\eqref{C-0}_1$ with
\begin{equation} \label{B-3-0}
 z_i \not = z_j, \quad {{|z_i| = 1}}, \quad  1 \leq i < j \leq N.
\end{equation} 
Then, the two-point correlation function $\langle{z_i, z_j}\rangle$ satisfies
\begin{align*}
\frac{d}{dt}\langle{z_i, z_j}\rangle&=\langle{z_i,\dot{z}_j}\rangle+\langle{\dot{z}_i, z_j}\rangle\\
&=\kappa_0\left(\langle{z_j, z_j}\rangle\langle{z_i, z_c}\rangle-\langle{z_c, z_j}\rangle\langle{z_i, z_j}\rangle\right)+\kappa_0(\langle{z_i, z_i}\rangle\langle{z_c, z_j}\rangle-\langle{z_i, z_c}\rangle\langle{z_i, z_j}\rangle)\\
&=\kappa_0(\langle{z_i, z_c}\rangle+\langle{z_c, z_j}\rangle-\langle{z_c, z_j}\rangle\langle{z_i, z_j}\rangle-\langle{z_i, z_c}\rangle\langle{z_i, z_j}\rangle)\\
&=\kappa_0(1-\langle{z_i, z_j}\rangle)(\langle{z_i, z_c}\rangle+\langle{z_c, z_j}\rangle).
\end{align*}
This yields
\begin{align}\label{B-3}
\frac{d}{dt}(1-\langle{z_i, z_j}\rangle)=-\kappa_0(1-\langle{z_i, z_j}\rangle)(\langle{z_i, z_c}\rangle+\langle{z_c, z_j}\rangle).
\end{align}
For four-tuple of indices $i,j, k, l$, we define a functional $\mathcal{C}_{ijkl}$:
\begin{equation} \label{B-3-2}
\mathcal{C}_{ijkl} :=\frac{(1-\langle{z_i, z_j}\rangle)(1-\langle{z_k, z_l}\rangle)}{(1-\langle{z_i, z_l}\rangle)(1-\langle{z_k, z_j}\rangle)}.
\end{equation}
\begin{proposition} \label{P3.1}
Let $\{ z_j \}$ be a global solution to $\eqref{C-0}_1$ with the non-overlapping property \eqref{B-3-0}. Then, the functional $ \mathcal{C}_{ijkl}$ is a constant of motion:
\[  \mathcal{C}_{ijkl}(t) = \mathcal{C}_{ijkl}(0), \quad t \geq 0. \]
\end{proposition}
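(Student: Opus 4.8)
The plan is to exploit the evolution law \eqref{B-3} already derived for the two-point factors and to show that the logarithmic derivative of $\mathcal{C}_{ijkl}$ vanishes identically. Writing $h_{ij} := 1 - \langle z_i, z_j\rangle$ for brevity, relation \eqref{B-3} reads $\dot h_{ij} = -\kappa_0 h_{ij}(\langle z_i, z_c\rangle + \langle z_c, z_j\rangle)$, so that wherever $h_{ij} \neq 0$ one has the purely algebraic identity
\[
\frac{\dot h_{ij}}{h_{ij}} = -\kappa_0\big(\langle z_i, z_c\rangle + \langle z_c, z_j\rangle\big).
\]
I avoid invoking a complex logarithm and instead work directly with these ratios, since $h_{ij}$ is complex-valued in general.

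First I would check that $\mathcal{C}_{ijkl}$ is well-defined for all $t\geq 0$, i.e.\ that the four factors $h_{ij}, h_{kl}, h_{il}, h_{kj}$ never vanish. Because $|z_i|=|z_j|=1$, the Cauchy--Schwarz inequality forces $\langle z_i, z_j\rangle = 1$ precisely when $z_i = z_j$; hence the non-overlapping hypothesis \eqref{B-3-0} gives $h_{ij}(0)\neq 0$ for $i\neq j$. Integrating the scalar linear ODE \eqref{B-3} yields the explicit representation
\[
h_{ij}(t) = h_{ij}(0)\exp\left(-\kappa_0\int_0^t \big(\langle z_i, z_c\rangle + \langle z_c, z_j\rangle\big)\,ds\right),
\]
and since the exponential factor never vanishes, $h_{ij}(t)\neq 0$ for all $t\geq 0$. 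Thus the non-overlapping property is propagated by the flow, and all numerator and denominator factors of $\mathcal{C}_{ijkl}$ stay away from zero.

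Next I would apply the quotient rule to $\mathcal{C}_{ijkl} = \dfrac{h_{ij}h_{kl}}{h_{il}h_{kj}}$, which with all factors nonzero gives
\[
\frac{\dot{\mathcal{C}}_{ijkl}}{\mathcal{C}_{ijkl}} = \frac{\dot h_{ij}}{h_{ij}} + \frac{\dot h_{kl}}{h_{kl}} - \frac{\dot h_{il}}{h_{il}} - \frac{\dot h_{kj}}{h_{kj}}.
\]
Substituting the ratio identity above produces
\[
\frac{\dot{\mathcal{C}}_{ijkl}}{\mathcal{C}_{ijkl}} = -\kappa_0\Big[\big(\langle z_i, z_c\rangle + \langle z_c, z_j\rangle\big) + \big(\langle z_k, z_c\rangle + \langle z_c, z_l\rangle\big) - \big(\langle z_i, z_c\rangle + \langle z_c, z_l\rangle\big) - \big(\langle z_k, z_c\rangle + \langle z_c, z_j\rangle\big)\Big].
\]
The key observation is that each of the four distinct inner products $\langle z_i, z_c\rangle$, $\langle z_c, z_j\rangle$, $\langle z_k, z_c\rangle$, $\langle z_c, z_l\rangle$ appears exactly once with a plus sign and once with a minus sign, so the bracket telescopes to zero. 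Hence $\dot{\mathcal{C}}_{ijkl}=0$, and integration gives $\mathcal{C}_{ijkl}(t) = \mathcal{C}_{ijkl}(0)$.

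The cancellation itself is pure bookkeeping, so there is no genuine analytic obstacle; the only point requiring care is the well-definedness handled in the second step, namely verifying that the cross-ratio factors remain nonzero for all time. This is exactly why the non-overlapping hypothesis \eqref{B-3-0} is imposed and why it matters that it is preserved by the flow, which the explicit exponential representation of $h_{ij}(t)$ guarantees.
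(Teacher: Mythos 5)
Your proposal is correct and follows essentially the same route as the paper: both compute the logarithmic derivative of $\mathcal{C}_{ijkl}$ as the signed sum of the four ratios $\dot h_{ij}/h_{ij}$ obtained from \eqref{B-3} and observe that the terms $\langle z_i,z_c\rangle$, $\langle z_c,z_j\rangle$, $\langle z_k,z_c\rangle$, $\langle z_c,z_l\rangle$ cancel in pairs. Your additional verification that the factors $1-\langle z_i,z_j\rangle$ remain nonzero for all $t\geq 0$ (via the explicit exponential representation) is a worthwhile refinement that the paper leaves implicit.
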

\begin{proof}
It follows from \eqref{B-3} that 
\begin{equation} \label{B-3-2}
\frac{\frac{d}{dt}(1-\langle{z_i, z_j}\rangle)}{1-\langle{z_i, z_j}\rangle}=-\kappa_0(\langle{z_i, z_c}\rangle+\langle{z_c, z_j}\rangle).
\end{equation}
Then we use \eqref{B-3-2} to get 
\[ \frac{d}{dt}(1-\langle{z_i, z_j}\rangle)=-\kappa_0(1-\langle{z_i, z_j}\rangle)(\langle{z_i, z_c}\rangle+\langle{z_c, z_j}\rangle), \]
or equivalently,
\[ \frac{d}{dt}\frac{1}{(1-\langle{z_i, z_j}\rangle)}=\frac{\kappa_0}{(1-\langle{z_i, z_j}\rangle)}(\langle{z_i, z_c}\rangle+\langle{z_c, z_j}\rangle).
\]
These estimates yield  
\begin{align*}
&\frac{d}{dt}\frac{(1-\langle{z_i, z_j}\rangle)(1-\langle{z_k, z_l}\rangle)}{(1-\langle{z_i, z_l}\rangle)(1-\langle{z_k, z_j}\rangle)} =\frac{1-\langle{z_k, z_l}\rangle}{(1-\langle{z_i, z_l}\rangle)(1-\langle{z_k, z_j}\rangle)}\frac{d}{dt}(1-\langle{z_i, z_j}\rangle) \\
& \hspace{1.2cm} +\frac{1-\langle{z_i, z_j}\rangle}{(1-\langle{z_i, z_l}\rangle)(1-\langle{z_k, z_j}\rangle)}\frac{d}{dt}(1-\langle{z_k, z_l}\rangle)\\
& \hspace{1.2cm} +\frac{(1-\langle{z_i, z_j}\rangle)(1-\langle{z_k, z_l}\rangle)}{1-\langle{z_k, z_j}\rangle}\frac{d}{dt}\left(\frac{1}{1-\langle{z_i, z_l}\rangle}\right) \\
& \hspace{1.2cm} +\frac{(1-\langle{z_i, z_j}\rangle)(1-\langle{z_k, z_l}\rangle)}{1-\langle{z_i, z_l}\rangle}\frac{d}{dt}\left(\frac{1}{1-\langle{z_k, z_j}\rangle}\right)\\
& \hspace{1cm} =\frac{(1-\langle{z_i, z_j}\rangle)(1-\langle{z_k, z_l}\rangle)}{(1-\langle{z_i, z_l}\rangle)(1-\langle{z_k, z_j}\rangle)} \\
& \hspace{1cm} \times \left(\frac{\frac{d}{dt}(1-\langle{z_i, z_j}\rangle)}{1-\langle{z_i, z_j}\rangle}+\frac{\frac{d}{dt}(1-\langle{z_k, z_l}\rangle)}{1-\langle{z_k, z_l}\rangle}-\frac{\frac{d}{dt}(1-\langle{z_i, z_l}\rangle)}{1-\langle{z_i, z_l}\rangle}-\frac{\frac{d}{dt}(1-\langle{z_k, z_j}\rangle)}{1-\langle{z_k, z_j}\rangle}\right)=0.
\end{align*}
\end{proof}

\begin{remark}
\noindent For a configuration $Z= \{z_j\}$ with the non-overlapping condition \eqref{B-3-0}, we define another cross ratio-like functional:
\[
R_{ijkl} := |{\mathcal C}_{ijkl} | =  \frac{|1-\langle{z_i, z_j}\rangle|\cdot|1-\langle{z_k, z_l}\rangle|}{|1-\langle{z_j, z_k}\rangle|\cdot|1-\langle{z_l, z_i}\rangle|} .
\]
As a corollary of Proposition \ref{P3.1}, one has the conservation of $R_{ijkl}$ as well:
\[ R_{ijkl}(t) =  |{\mathcal C}_{ijkl}(t) | =  |{\mathcal C}_{ijkl}(0)| = R_{ijkl}(0), \quad t \geq 0. \] 
\end{remark}

\vspace{0.5cm}

\subsubsection{Emergence of complete aggregation}~In this subsection, we study the emergent dynamics of $\eqref{C-0}_1$. 

First note that 
\begin{equation} \label{B-4}
 \|z_i - z_j \|^2 = \| z_i \|^2 + \|z_j \|^2 - (\langle z_i, z_j \rangle + \overline{\langle z_i, z_j \rangle} ) 
= 2\mathrm{Re}(1-\langle{z_i, z_j}\rangle)\leq2|1-\langle{z_i,z_j}\rangle|. 
\end{equation}
This implies
\begin{align}\label{B-5}
2\ln\|z_i-z_j\|\leq\ln2+\ln|1-\langle{z_i, z_j}\rangle|.
\end{align}
Based on \eqref{B-4}, we introduce a functional:
\[
\mathcal{D}(Z) :=\max_{1\leq i, j \leq N}|1-\langle{z_i, z_j}\rangle|.
\]
It follows from \eqref{B-3}  that 
\begin{align*}
\frac{d}{dt}[(1-\langle{z_i, z_j}\rangle)(1-\langle{z_j, z_i}\rangle)]=-\kappa_0(1-\langle{z_i, z_j}\rangle)(1-\langle{z_j, z_i}\rangle)(\langle{z_c, z_i+z_j}\rangle+\langle{z_i+z_j, z_c}\rangle).
\end{align*}
Since $1-\langle{z_i, z_j}\rangle$ is the complex conjugate of $1-\langle{z_j, z_i}\rangle$, we have
\[
(1-\langle{z_i, z_j}\rangle)(1-\langle{z_j, z_i}\rangle)=|1-\langle{z_i, z_j}\rangle|^2.
\]
If $\langle{z_i, z_j}\rangle\neq1$, we have
\begin{align}\label{B-5-1}
\frac{d}{dt}\ln|1-\langle{z_i, z_j}\rangle|=-\frac{\kappa_0}{2}(\langle{z_c, z_i+z_j}\rangle+\langle{z_i+z_j, z_c}\rangle).
\end{align}
\begin{lemma} \label{L3.1}
Let $\{ z_j \}$ be a solution to $\eqref{C-0}_1$. Then, the following assertions hold.
\begin{enumerate}
\item
The order parameter $\rho$ defined in \eqref{M-9} is non-decreasing:
\[ 
\frac{d\rho}{dt} \geq0, \quad \forall~t > 0.
\]
\item
The functional $\sum_{1\leq i, j \leq N}\ln|1-\langle{z_i, z_j}\rangle|$ is non-increasing:
\[ \frac{d}{dt} \sum_{1\leq i, j \leq N}\ln|1-\langle{z_i, z_j}\rangle|=-2\kappa_0 N^2 \rho^2\leq0, \quad \forall~t > 0. \]
\end{enumerate}
\end{lemma}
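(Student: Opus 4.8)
The plan is to establish the two assertions by direct differentiation, treating them separately but both resting on the unit-norm conservation $\|z_j(t)\| = 1$ (the $m=1$ case of Lemma \ref{L2.1}) and on the pointwise identity \eqref{B-5-1} already derived above. Throughout I use that the inner product is conjugate-linear in its first slot, so $\frac{d}{dt}\langle v, v\rangle = 2\,\mathrm{Re}\langle v, \dot v\rangle$ for any differentiable $v$.

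For assertion (1), I would work with $\rho^2 = \langle z_c, z_c\rangle$ rather than with $\rho$ directly, since $\rho^2$ is smooth while $\rho$ may degenerate. Differentiating gives $\frac{d}{dt}\rho^2 = 2\,\mathrm{Re}\langle z_c, \dot z_c\rangle$, and averaging $\eqref{C-0}_1$ over $j$ together with $\langle z_k, z_k\rangle = 1$ yields $\dot z_c = \kappa_0 z_c - \frac{\kappa_0}{N}\sum_k \langle z_c, z_k\rangle z_k$. Substituting this and taking real parts, I would reorganize the outcome into the manifestly structured form
\[
\frac{d\rho^2}{dt} = \frac{2\kappa_0}{N}\sum_{i=1}^N\big(\rho^2 - |\langle z_i, z_c\rangle|^2\big) + \frac{4\kappa_0}{N}\sum_{i=1}^N\big|\mathrm{Im}\,\langle z_i, z_c\rangle\big|^2,
\]
which is the $\kappa_1 = 0$ specialization of the identity used later for the full model. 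The second sum is nonnegative term by term, and the crux is the first sum: by Cauchy--Schwarz and the unit-norm constraint, $|\langle z_i, z_c\rangle| \le \|z_i\|\,\|z_c\| = \rho$, so each summand $\rho^2 - |\langle z_i, z_c\rangle|^2$ is nonnegative. Hence $\frac{d}{dt}\rho^2 \ge 0$, and since $\rho \ge 0$ this forces $\frac{d\rho}{dt} \ge 0$ (argued through $\rho^2$ to bypass the harmless degeneracy at $\rho = 0$).

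For assertion (2), I would simply sum the identity \eqref{B-5-1}, namely $\frac{d}{dt}\ln|1-\langle z_i, z_j\rangle| = -\frac{\kappa_0}{2}\big(\langle z_c, z_i+z_j\rangle + \langle z_i+z_j, z_c\rangle\big)$, over the index pairs. Using linearity and the defining relation $\sum_i z_i = N z_c$, the double sum collapses: $\sum_{i,j}\langle z_c, z_i + z_j\rangle = 2N\langle z_c, \sum_i z_i\rangle = 2N^2\langle z_c, z_c\rangle = 2N^2\rho^2$, and the conjugate sum contributes the same real amount. This gives $\frac{d}{dt}\sum_{i,j}\ln|1-\langle z_i, z_j\rangle| = -2\kappa_0 N^2\rho^2 \le 0$, the desired formula.

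The step I expect to carry the real content is the Cauchy--Schwarz positivity in (1): it is exactly where the normalization $\|z_i\| = 1$ enters, and without it the first sum need not be signed. The only remaining delicate point is bookkeeping in (2): the diagonal terms $i=j$ satisfy $\langle z_i, z_i\rangle = 1$, so $\ln|1-\langle z_i, z_i\rangle|$ is singular and \eqref{B-5-1} does not literally apply there. Restricting the functional to $i \ne j$ yields the clean bound $-2\kappa_0 N(N-1)\rho^2 \le 0$, whereas the stated constant $N^2$ arises from applying \eqref{B-5-1} formally to every pair; in either reading the (time-independent) diagonal does not affect the monotonicity conclusion, which is the content of the lemma.
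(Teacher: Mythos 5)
Your proof is correct and follows essentially the same route as the paper: part (1) rests on the same identity (your expression is the paper's $\frac{2\kappa_0}{N}\sum_i\left(\rho^2-\mathrm{Re}\left(\langle z_i,z_c\rangle^2\right)\right)$ rewritten via $\mathrm{Re}(w^2)=|w|^2-2(\mathrm{Im}\,w)^2$) followed by the same Cauchy--Schwarz step using $\|z_i\|=1$, and part (2) is the same summation of \eqref{B-5-1} collapsed with $\sum_i z_i = Nz_c$. Your remark that the diagonal terms $i=j$ are singular, so that the sum should run over $i\neq j$ and the constant becomes $N(N-1)$ rather than $N^2$, is a legitimate correction to the paper's formal bookkeeping and does not affect the monotonicity conclusion.
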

\begin{proof}
\noindent (i)~It follows from \eqref{B-3} that
\begin{align*}
\frac{d}{dt}(1-\langle{z_c, z_c}\rangle)&=-\frac{\kappa_0}{N^2}\sum_{i, j=1}^N(1-\langle{z_i, z_j}\rangle)(\langle{z_i, z_c}\rangle+\langle{z_c, z_j}\rangle)\\
&=-\frac{\kappa_0}{N^2}\sum_{i, j=1}^N(\langle{z_i, z_c}\rangle+\langle{z_c, z_j}\rangle-\langle{z_i, z_j}\rangle\langle{z_i, z_c}\rangle-\langle{z_i, z_j}\rangle\langle{z_c, z_j}\rangle)\\
&=-\kappa_0(\langle{z_c, z_c}\rangle+\langle{z_c, z_c}\rangle)+\frac{\kappa_0}{N}\left(\sum_{i=1}^N\langle{z_i, z_c}\rangle^2+\sum_{j=1}^N\langle{z_c, z_j}\rangle^2\right)\\
&=-2\kappa_0\|z_c\|^2+\frac{2\kappa_0}{N}\sum_{i=1}^N\mathrm{Re}\left(\langle{z_i, z_c}\rangle^2\right)\\
&=-\frac{2\kappa_0}{N}\sum_{i=1}^N\left(\|z_c\|^2-\mathrm{Re}\left(\langle{z_i, z_c}\rangle^2\right)\right).
\end{align*}
On the other hand, by Cauchy-Schwarz inequality, one has
\[
|\langle{z_i, z_c}\rangle|^2\leq\langle{z_i, z_i}\rangle\langle{z_c, z_c}\rangle.
\]
Hence we have 
\[
\mathrm{Re}\left(\langle{z_i, z_c}\rangle^2\right)\leq|\langle{z_i, z_c}\rangle|^2\leq \langle{z_i, z_i}\rangle\langle{z_c, z_c}\rangle=\|z_c\|^2.
\]
This yields the desired estimate:
\[
\frac{d}{dt}(1-\langle{z_c, z_c}\rangle)\leq0 \quad \mbox{or equivalently} \quad \frac{d \rho^2}{dt} \geq 0.
\]
Thus, $\rho$ is non-decreasing. \newline

\noindent (ii)~We use \eqref{B-5-1} to get
\begin{align*}
\begin{aligned}
\frac{d}{dt}\frac{1}{N^2}\sum_{1\leq i, j \leq N}\ln|1-\langle{z_i, z_j}\rangle| &=-\frac{\kappa_0}{2}(\langle{z_c, z_c+z_c}\rangle+\langle z_c, z_c+z_c\rangle) \\
&=-2\kappa_0\langle z_c, z_c\rangle=-2\kappa_0\rho^2 \leq 0.
\end{aligned}
\end{align*}
Thus, we have the desired estimate.
\end{proof}
Now, we are ready to state the result on the emergent dynamics of Subsystem A $\eqref{C-0}_1$.
\begin{theorem} \label{T3.1}
Suppose that the coupling strength and initial data satisfy
\[
\kappa_0 > 0, \quad \|z_i^{in} \|=1,\quad \lambda_M(0) :=\max_{i\neq j}|1-\langle{z_i^{in}, z_j^{in}}\rangle|<1/2.
\]
Then, for a global solution $\{ z_j \}$ to $\eqref{C-0}_1$, we have an exponential aggregation: there exists a positive constant $\Lambda$ depending on initial data such that 
\[ {\mathcal D}(Z(t)) \leq {\mathcal D}(Z^{in}) e^{- \kappa_0 \Lambda t}, \quad t \geq 0. \]
\end{theorem}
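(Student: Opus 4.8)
The plan is to work directly with the correlation diameter $\mathcal{D}(Z(t))=\max_{i\neq j}|1-\langle{z_i,z_j}\rangle|=:\lambda_M(t)$ (the diagonal terms in $\mathcal D$ vanish since $\|z_k\|\equiv 1$ by Lemma \ref{L2.1}) and to show it obeys a logistic differential inequality that can be closed by a Gronwall estimate. The natural starting point is the exact logarithmic derivative \eqref{B-5-1}, which for any pair with $\langle{z_i,z_j}\rangle\neq 1$ reads
\[
\frac{d}{dt}\ln|1-\langle{z_i,z_j}\rangle|=-\kappa_0\,\mathrm{Re}\langle{z_c,z_i+z_j}\rangle .
\]
Since all states remain on the unit sphere, every two-point correlation satisfies $|\langle{z_i,z_j}\rangle|\leq 1$, and the maximizing pair stays well-defined as long as $\lambda_M>0$ (if $\lambda_M=0$ the diameter is already zero and there is nothing to prove).

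The key quantitative step is a uniform lower bound on the right-hand side. Writing $z_c=\frac1N\sum_k z_k$, I would expand $\mathrm{Re}\langle{z_c,z_i}\rangle=\frac1N\sum_k\mathrm{Re}\langle{z_k,z_i}\rangle$ and use the termwise estimate $\mathrm{Re}\langle{z_k,z_i}\rangle=1-\mathrm{Re}(1-\langle{z_k,z_i}\rangle)\geq 1-|1-\langle{z_k,z_i}\rangle|\geq 1-\lambda_M$ (the diagonal term contributes exactly $1$). Summing gives $\mathrm{Re}\langle{z_c,z_i}\rangle\geq 1-\lambda_M$, hence $\mathrm{Re}\langle{z_c,z_i+z_j}\rangle\geq 2(1-\lambda_M)$. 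Inserting this into the logarithmic derivative yields, for every pair,
\[
\frac{d}{dt}|1-\langle{z_i,z_j}\rangle|\leq -2\kappa_0(1-\lambda_M)\,|1-\langle{z_i,z_j}\rangle| .
\]

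Next I would pass to the maximum. As a max of finitely many locally Lipschitz functions, $\lambda_M$ is differentiable for a.e.\ $t$, and at such points its derivative equals that of an active pair $|1-\langle{z_{i^*},z_{j^*}}\rangle|$ (a Danskin/envelope-type selection), giving the closed inequality $\dot\lambda_M\leq -2\kappa_0(1-\lambda_M)\lambda_M$ almost everywhere. A continuation argument then finishes the proof: since $\lambda_M(0)<\tfrac12<1$ and the right-hand side is strictly negative whenever $0<\lambda_M<1$, the quantity $\lambda_M$ is non-increasing and therefore stays below $\lambda_M(0)$ for all $t$. Consequently $1-\lambda_M(t)\geq 1-\lambda_M(0)>\tfrac12$ uniformly in time, and substituting this lower bound yields $\dot\lambda_M\leq -2\kappa_0(1-\lambda_M(0))\lambda_M$, whose integration gives $\lambda_M(t)\leq\lambda_M(0)\,e^{-2\kappa_0(1-\lambda_M(0))t}$, i.e.\ the claimed estimate with $\Lambda=2(1-\lambda_M(0))>0$.

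The main obstacle is not any isolated computation but the global-in-time control underpinning the bootstrap: one must guarantee that $\lambda_M$ never reaches $1$, where the lower bound $2(1-\lambda_M)$ degenerates and the sign of the logarithmic derivative is lost, and one must rigorously justify differentiating the maximum and selecting an active pair at the (measure-zero) non-smooth times. The hypothesis $\lambda_M(0)<\tfrac12$ is precisely what makes the monotonicity self-sustaining, so that the a priori bound $\lambda_M(t)<1$ is maintained for all time. Finally, if $\mathcal D$ is instead read as the Euclidean diameter $\max_{i\neq j}\|z_i-z_j\|$, then \eqref{B-4} gives $\|z_i-z_j\|^2=2\,\mathrm{Re}(1-\langle{z_i,z_j}\rangle)\leq 2\lambda_M$, so the same decay of $\lambda_M$ transfers the exponential convergence (with the rate halved).
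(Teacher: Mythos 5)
Your proof is correct, and it takes a genuinely different route from the paper's. The paper splits $1-\langle z_i,z_j\rangle$ into real and imaginary parts $J_{ij}=1-\mathrm{Re}\langle z_i,z_j\rangle$ and $I_{ij}=\mathrm{Im}\langle z_i,z_j\rangle$, derives the coupled $2\times 2$ linear system \eqref{B-6}, and estimates $\frac{d}{dt}(I_{ij}^2+J_{ij}^2)\leq -(\alpha_{ij}-\beta_{ij})(I_{ij}^2+J_{ij}^2)$ with $\alpha_{ij}-\beta_{ij}\geq 2\kappa_0(1-2\lambda)$, obtained by bounding each of $J_{ic},J_{cj},I_{ic},I_{cj}$ separately by $\lambda$; the factor $4\lambda$ appearing there is exactly why the hypothesis $\lambda_M(0)<1/2$ is needed and why the rate is $1-2\lambda_M(0)$. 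You instead work directly with the modulus via the logarithmic-derivative identity \eqref{B-5-1}, whose right-hand side is $-\kappa_0\,\mathrm{Re}\langle z_c,z_i+z_j\rangle$ and therefore involves only the real parts of the center-of-mass correlations; the termwise bound $\mathrm{Re}\langle z_k,z_i\rangle\geq 1-|1-\langle z_k,z_i\rangle|\geq 1-\lambda_M$ then closes the argument under the weaker condition $\lambda_M(0)<1$ and with the larger rate $\Lambda=2(1-\lambda_M(0))$ (versus the paper's $1-2\lambda_M(0)$). Both proofs share the same bootstrap structure (a logistic inequality for $\lambda_M$ plus monotonicity to freeze the coefficient at its initial value), and your handling of the a.e.\ differentiability of the max and of overlapping pairs is adequate; note only that the inequality for a pair with $\langle z_i,z_j\rangle=1$ is best justified from \eqref{B-3} directly (the zero is preserved) rather than from the logarithm. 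In short, your argument is shorter, sharper in both hypothesis and rate, and fully consistent with the statement as given.
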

\begin{proof} Note that it suffices to derive the following estimate: for $i, j = 1, \cdots, N$,
\begin{equation} \label{B-5-2}
|1-\langle{z_i(t), z_j(t)}\rangle|\leq|1-\langle{z_i^{in}, z_j^{in}}\rangle|e^{-\kappa_0(1-2\lambda_M(0))t}.
\end{equation}
For a given configuration $Z = \{z_j\}$, we set 
\[ h_{ij} = \langle z_i, z_j \rangle, \quad R_{ij}:= \mathrm{Re}\langle{z_i, z_j}\rangle, \quad I_{ij}:= \mathrm{Im}\langle{z_i, z_j}\rangle. \]
\noindent $\bullet$~Step A (Derivation of dynamics for $R_{ij}$ and $I_{ij}$): Note that $h_{ij}$ satisfies  
\begin{align}
\begin{aligned} \label{B-5}
\frac{d}{dt}\langle{z_i, z_j}\rangle&=\kappa_0(1-\langle{z_i, z_j}\rangle)(\langle{z_i, z_c}\rangle+\langle{z_c, z_j}\rangle)\\
&=\kappa_0(\langle{z_i, z_c}\rangle+\langle{z_c, z_j}\rangle-\langle{z_i, z_j}\rangle\langle{z_i, z_c}\rangle-\langle{z_i, z_j}\rangle\langle{z_c, z_j}\rangle).
\end{aligned}
\end{align}
We compare the real and imaginary parts of system \eqref{B-5} to get 
\begin{align*}
\frac{d}{dt}\mathrm{Re}\langle{z_i, z_j}\rangle&=\kappa_0(\mathrm{Re}\langle{z_i, z_c}\rangle+\mathrm{Re}\langle{z_c, z_j}\rangle-\mathrm{Re}\langle{z_i, z_j}\rangle\mathrm{Re}\langle{z_i, z_c}\rangle-\mathrm{Re}\langle{z_i, z_j}\rangle\mathrm{Re}\langle{z_c, z_j}\rangle\\
& \hspace{0.2cm} +\mathrm{Im}\langle{z_i, z_j}\rangle\mathrm{Im}\langle{z_i, z_c}\rangle+\mathrm{Im}\langle{z_i, z_j}\rangle\mathrm{Im}\langle{z_c, z_j}\rangle),
\end{align*}
\begin{align*}
\frac{d}{dt}\mathrm{Im}\langle{z_i, z_j}\rangle&=\kappa_0(\mathrm{Im}\langle{z_i, z_c}\rangle+\mathrm{Im}\langle{z_c, z_j}\rangle-\mathrm{Im}\langle{z_i, z_j}\rangle\mathrm{Re}\langle{z_i, z_c}\rangle-\mathrm{Im}\langle{z_i, z_j}\rangle\mathrm{Re}\langle{z_c, z_j}\rangle\\
& \hspace{0.2cm}-\mathrm{Re}\langle{z_i, z_j}\rangle\mathrm{Im}\langle{z_i, z_c}\rangle-\mathrm{Re}\langle{z_i, z_j}\rangle\mathrm{Im}\langle{z_c, z_j}\rangle),
\end{align*}
or equivalently
\begin{align*}
\begin{aligned}
\frac{d}{dt}R_{ij}&=\kappa_0(R_{ic}+R_{cj}-R_{ij}R_{ic}-R_{ij}R_{cj}+I_{ij}I_{ic}+I_{ij}I_{cj})\\
&=\kappa_0(1-R_{ij})(R_{ic}+R_{cj})+\kappa_0I_{ij}(I_{ic}+I_{cj}), \\
\frac{d}{dt}I_{ij}&=\kappa_0(I_{ic}+I_{cj}-I_{ij}R_{ic}-I_{ij}R_{cj}-R_{ij}I_{ic}-R_{ij}I_{cj})\\
&=\kappa_0(1-R_{ij})(I_{ic}+I_{cj})-\kappa_0I_{ij}(R_{ic}+R_{cj}).
\end{aligned}
\end{align*}
Since we expect $R_{ij} \to 1$, we introduce the quantity:
\[  J_{ij} :=1-R_{ij}. \]
Then, one has the following systems for $(J_{ij}, I_{ij})$:
\begin{align*}
\begin{cases}
\dot{J}_{ij}=-\kappa_0(2-J_{ic}-J_{cj})J_{ij}+\kappa_0(I_{ic}+I_{cj})I_{ij},\\
\dot{I}_{ij}=\kappa_0(I_{ic}+I_{cj})J_{ij}-\kappa_0(2-J_{ic}-J_{cj})I_{ij},
\end{cases}
\end{align*}
or we can write vector form:
\begin{align*}
\frac{d}{dt}
\begin{bmatrix}
J_{ij}\\
I_{ij}
\end{bmatrix}
=
\begin{bmatrix}
-\kappa_0(2-J_{ic}-J_{cj})&\kappa_0(I_{ic}+I_{cj})\\
\kappa_0(I_{ic}+I_{cj})&-\kappa_0(2-J_{ic}-J_{cj})
\end{bmatrix}
\begin{bmatrix}
J_{ij}\\
I_{ij}
\end{bmatrix}.
\end{align*}
We set $\alpha_{ij}$ and $\beta_{ij}$:
\[
\alpha_{ij}=\kappa_0(2-J_{ic}-J_{cj}), \qquad \beta_{ij}=\kappa_0(I_{ic}+I_{cj}).
\]
Then, we can obtain
\begin{equation} \label{B-6}
\frac{d}{dt}
\begin{bmatrix}
J_{ij}\\
I_{ij}
\end{bmatrix}
=
\begin{bmatrix}
-\alpha_{ij}&\beta_{ij}\\
\beta_{ij}&-\alpha_{ij}
\end{bmatrix}
\begin{bmatrix}
J_{ij}\\
I_{ij}
\end{bmatrix}.
\end{equation}

\vspace{0.5cm}

\noindent $\bullet$~Step B (Decay estimates for $I_{ij}$ and $J_{ij}$): Next, we will derive
\begin{equation} \label{I-1}
\frac{d}{dt}(I_{ij}^2+J_{ij}^2)\leq-2\kappa_0(1-2\lambda)(I_{ij}^2+J_{ij}^2). 
\end{equation}
{\it Proof of \eqref{I-1})}: We use \eqref{B-6} to have
\begin{align}
\begin{aligned} \label{B-7}
\frac{d}{dt}(I_{ij}^2+J_{ij}^2)&=-\alpha_{ij}I_{ij}^2+2\beta_{ij}I_{ij}J_{ij}-\alpha_{ij}J_{ij}^2 \\
&=-(\alpha_{ij}-\beta_{ij})(I_{ij}^2+J_{ij}^2)-\beta_{ij}(I_{ij}-J_{ij})^2\\
&\leq-(\alpha_{ij}-\beta_{ij})(I_{ij}^2+J_{ij}^2).
\end{aligned}
\end{align}
We set
\[ \lambda(t)=\max_{i\neq j}\sqrt{I_{ij}(t)^2+J_{ij}(t)^2}. \]
Suppose that 
\[ 0\leq\lambda(0)<\frac{1}{2}. \]
Since $|I_{ij}|, |J_{ij}|\leq\lambda$, we can see
\begin{align}
\begin{aligned} \label{B-8}
\alpha_{ij}-\beta_{ij} &=\kappa_0(2-I_{ic}-I_{cj}-J_{ic}-J_{cj}) \geq\kappa_0(2-\lambda-\lambda-\lambda-\lambda) \\
&=2\kappa_0(1-2\lambda).
\end{aligned}
\end{align}
We combine \eqref{B-7} and \eqref{B-8} to get 
\begin{align}\label{I-1}
\frac{d}{dt}(I_{ij}^2+J_{ij}^2)\leq-2\kappa_0(1-2\lambda)(I_{ij}^2+J_{ij}^2).
\end{align}
We also set $i_M$ and $j_M$ satisfy:
\[ \lambda(t) :=\sqrt{I_{i_Mj_M}(t)^2+J_{i_Mj_M}(t)^2}, \quad t \in [0, T). \]
Then we can obtain following inequality:
\begin{align*}
\begin{aligned}
\frac{d}{dt}\lambda^2 &=\frac{d}{dt}(I_{i_Mj_M}^2+J_{i_Mj_M}^2)\leq-2\kappa_0(1-2\lambda)(I_{i_Mj_M}^2+J_{i_Mj_M}^2) \\
&=-2\kappa_0(1-2\lambda)\lambda^2, \quad t\in[0, T),
\end{aligned}
\end{align*}
which yields
\[
\frac{d}{dt}\lambda_M \leq-\kappa_0(1-2\lambda)\lambda, \quad t\in[0, T).
\]
After simplification, one has 
\[
\lambda(t)\leq\frac{\lambda(0)}{2\left(\left(\frac{1}{2}-\lambda(0)\right)e^{\kappa_0t}+\lambda(0)\right)}, \quad t\in[0, T),
\]
Therefore we have
\begin{equation} \label{B-9}
\lambda_M(t) \leq \lambda(0), \quad t\in[0, T).
\end{equation}
We combine \eqref{B-7} and \eqref{B-9} to obtain
\[
\frac{d}{dt}(I_{ij}^2+J_{ij}^2)\leq-2\kappa_0(1-2\lambda(0))(I_{ij}^2+J_{ij}^2).
\]
This yields
\[
I_{ij}(t)^2+J_{ij}(t)^2\leq(I_{ij}(0)^2+J_{ij}(0)^2)e^{-2\kappa_0(1-2\lambda_M(0))t},
\]
or equivalently, 
\[
|1-\langle{z_i(t), z_j(t)}\rangle|^2\leq|1-\langle{z_i(0), z_j(0)}\rangle|^2e^{-2\kappa_0(1-2\lambda_M(0))t}.
\]
Thus, one has the desired estimate.
\end{proof}
\begin{remark} \label{R3.2}
1. As a direct corollary and \eqref{B-5-2}, one has 
\[
\|z_i-z_j\|^2\leq2|1-\langle{z_i(0), z_j(0)}\rangle|e^{-\kappa_0(1-2\lambda_M(0))t}.
\]

\noindent {2. For real case $z_j \in \bbr^d$, note that the condition 
\[ \lambda_M(0) :=\max_{i\neq j}|1-\langle{z_i^{in}, z_j^{in}}\rangle|<1/2 \]
is equivalent to 
\[ {\mathcal D}(Z^{in}) < 1.      \]
This is certainly weaker than that of \cite{C-C-H} in which the exponential aggregation is employed.
\[   {\mathcal D}(Z^{in}) < \frac{1}{4}. \]
In fact, gradient flow formulation of Subsystem A yields that state-locking emerges from any generic initial data without any convergence rate. In this sense, our result is weaker than that of \cite{H-K-R}.}
\end{remark}
\subsection{Subsystem B} \label{sec:3.2}
 In this subsection, we discuss the following two items for Subsystem B: \newline
\begin{itemize}
\item
Equivalence of Subsystem B and the Kuramoto type model with frustration.
\vspace{0.2cm}
\item
A gradient flow formulation of the Kuramoto type model with frustration.
\end{itemize}

\vspace{0.2cm}

\subsubsection{The Kuramoto dynamics with frustration} \label{sec:3.21} In this part, we show that how system $\eqref{C-0}_2$ can be transformed into a Kuramoto type model with skew-symmetric frustration. These results can be summarized as follows. 
\begin{theorem} \label{T3.2}
Let $\{ z_j \}$ be a global solution to $\eqref{C-0}_2$ with the initial data $\{z_j^{in} \}$. Then, the following assertions hold.
\begin{enumerate}
\item
There exists a {real} time-dependent phase $\theta_j$ such that 
\begin{equation} \label{B-11}
 z_j(t)= e^{{\mathrm i} \theta_j(t)} z^{in}_j, \quad j = 1, \cdots, N. 
 \end{equation}
\item
If we set {real} $R_{jk}^{in}$ and $\alpha_{ji}$ such that
\begin{equation} \label{B-11-0}
 \langle{z_j^{in}, z_k^{in}} \rangle=: R^{in}_{jk}e^{\mathrm{i} \alpha_{jk}}, 
\end{equation} 
then the phase $\theta_j$ in (1) is a solution to the following Cauchy problem:
\begin{equation} \label{B-11-1}  
\begin{cases}
\displaystyle \dot{\theta}_j=\frac{2 \kappa_1}{N}\sum_{k=1}^N R_{jk}^{in} \sin(\theta_k-\theta_j+\alpha_{jk}), \quad t > 0, \\
\displaystyle \theta_j(0)=0,
\end{cases}
\end{equation}
where $R_{jk}^{in}$ and $\alpha_{jk}$ satisfy symmetry and anti-symmetry properties:
\[   R_{jk}^{in} = R_{kj}^{in}, \qquad \alpha_{jk} = -\alpha_{kj}, \quad \forall~~k, j = 1, \cdots, N. \]
\end{enumerate}
\end{theorem}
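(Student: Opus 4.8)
The plan is to exploit the fact that the scalar coefficient multiplying $z_j$ in $\eqref{C-0}_2$ is purely imaginary, so the flow acts on each state merely by a phase rotation. First I would set $g_j(t) := \kappa_1(\langle z_j, z_c \rangle - \langle z_c, z_j \rangle)$ along the given global solution. Since the Hermitian inner product satisfies $\langle z_c, z_j \rangle = \overline{\langle z_j, z_c \rangle}$, we have $g_j = 2{\mathrm i}\kappa_1 \mathrm{Im}\langle z_j, z_c \rangle$, which is purely imaginary. The equation $\dot z_j = g_j(t) z_j$ is then a linear ODE in $\bbc^d$ with a time-dependent scalar coefficient, so it integrates component-wise to $z_j(t) = \exp\left(\int_0^t g_j(s)\,ds\right) z^{in}_j$. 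Setting $\theta_j(t) := \int_0^t \left(-{\mathrm i} g_j(s)\right) ds$, which is real-valued because $-{\mathrm i} g_j = 2\kappa_1 \mathrm{Im}\langle z_j, z_c \rangle \in \bbr$ and satisfies $\theta_j(0) = 0$, yields the representation $z_j(t) = e^{{\mathrm i}\theta_j(t)} z^{in}_j$ of assertion (1). As a consistency check this also recovers $\|z_j(t)\| = \|z^{in}_j\| = 1$, which one may verify independently from $\frac{d}{dt}\|z_j\|^2 = 2\mathrm{Re}(g_j)\|z_j\|^2 = 0$.

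For assertion (2), I would differentiate the representation to obtain $\dot\theta_j = -{\mathrm i} g_j = 2\kappa_1 \mathrm{Im}\langle z_j, z_c \rangle$, and then expand the right-hand side in terms of the initial data. Substituting $z_k = e^{{\mathrm i}\theta_k} z^{in}_k$ into the inner product and using sesquilinearity (conjugate-linear in the first slot) gives $\langle z_j, z_k \rangle = e^{{\mathrm i}(\theta_k - \theta_j)} \langle z^{in}_j, z^{in}_k \rangle$. Inserting the polar form $\eqref{B-11-0}$, namely $\langle z^{in}_j, z^{in}_k \rangle = R^{in}_{jk} e^{{\mathrm i}\alpha_{jk}}$, and averaging over $k$ produces
\[ \langle z_j, z_c \rangle = \frac{1}{N}\sum_{k=1}^N R^{in}_{jk}\, e^{{\mathrm i}(\theta_k - \theta_j + \alpha_{jk})}. \]
Taking imaginary parts and multiplying by $2\kappa_1$ delivers the frustrated Kuramoto system $\eqref{B-11-1}$. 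The symmetry relations then follow directly from the Hermitian property $\langle z^{in}_k, z^{in}_j \rangle = \overline{\langle z^{in}_j, z^{in}_k \rangle}$: comparing moduli and arguments in $R^{in}_{kj} e^{{\mathrm i}\alpha_{kj}} = R^{in}_{jk} e^{-{\mathrm i}\alpha_{jk}}$ gives $R^{in}_{kj} = R^{in}_{jk}$ and $\alpha_{kj} = -\alpha_{jk}$.

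The computations here are all elementary, so there is no serious analytic obstacle; the points requiring care are bookkeeping ones. The main thing to get right is the sesquilinearity convention: because $\langle \cdot, \cdot \rangle$ is conjugate-linear in its first argument, the phase factors combine as $e^{-{\mathrm i}\theta_j} e^{{\mathrm i}\theta_k} = e^{{\mathrm i}(\theta_k - \theta_j)}$ rather than with the opposite sign, and this is precisely what yields the correct argument $\theta_k - \theta_j + \alpha_{jk}$ inside the sine. A secondary subtlety is that assertion (1) must be established first and independently of (2), since the derivation of the Kuramoto equation presupposes the phase ansatz; this is why I would organize the argument around integrating the purely imaginary scalar coefficient rather than around guessing the ansatz and verifying it a posteriori.
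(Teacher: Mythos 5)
Your proposal is correct and follows essentially the same route as the paper: both integrate the purely imaginary scalar coefficient $2\mathrm{i}\kappa_1\,\mathrm{Im}\langle z_j, z_c\rangle$ to obtain the phase representation with $\theta_j(t)=2\kappa_1\int_0^t \mathrm{Im}\langle z_j(\tau),z_c(\tau)\rangle\,d\tau$, and then use sesquilinearity and the polar form of $\langle z_j^{in},z_k^{in}\rangle$ to extract the frustrated Kuramoto equation. The only cosmetic difference is that you differentiate the explicit solution formula while the paper re-substitutes the ansatz into the ODE and compares imaginary parts; these are equivalent computations.
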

\begin{proof}
\noindent (i)~It follows from $\eqref{C-0}_2$ that 
\[
\dot{x}_j=2\kappa_1\mathrm{i}\mathrm{Im}(\langle{z_j, z_c}\rangle)z_j.
\]
In component wise, the above relation can be rewritten as
\[
\frac{d}{dt} [z_j]_\alpha=2\kappa_1\mathrm{i}\mathrm{Im}(\langle{z_j, z_c}\rangle)[z_j]_\alpha.
\]
This implies
\begin{align}
\begin{aligned} \label{B-11-2}
& [z_j(t)]_\alpha=[z_j(0)]_\alpha e^{2\kappa_1\mathrm{i}\int_0^t\mathrm{Im}(\langle{z_j(\tau), z_c(\tau)}\rangle)d\tau}, \\
& \mbox{i.e.,} \quad z_j(t)=z_j(0) e^{2\kappa_1\mathrm{i}\int_0^t\mathrm{Im}(\langle{z_j(\tau), z_c(\tau)}\rangle)d\tau}.
\end{aligned}
\end{align}
Now, we set
\begin{equation} \label{B-12}
\theta_j(t):= 2\kappa_1 \int_0^t\mathrm{Im}(\langle{z_j(\tau), z_c(\tau)}\rangle) d\tau, \quad j = 1, \cdots, N.
\end{equation}
Finally, we combine \eqref{B-11-2} and \eqref{B-12} to complete the proof of the first assertion. \newline

\noindent (ii)~From \eqref{B-12}, it is easy to see that $\theta_j(0) = 0$. Now we need to check that $\theta_j$ satisfies the Kuramoto type model \eqref{B-11-1}. For this, we substitute the ansatz \eqref{B-11} into $\eqref{C-0}_2$ to get
\[
e^{\mathrm{i}\theta_j}\mathrm{i}\dot{\theta_j}z_j^{in} =\frac{\kappa_1}{N}\sum_{k=1}^N(\langle{z_j^{in}e^{\mathrm{i}\theta_j}, z_k^{in} e^{\mathrm{i}\theta_k}}\rangle-\langle{z_k^{in} e^{\mathrm{i}\theta_k}, z_j^{in} e^{\mathrm{i}\theta_j}}\rangle)e^{\mathrm{i}\theta_j}z_j^{in}.
\]
This yields
\begin{align}
\begin{aligned} \label{B-13}
\mathrm{i}\dot{\theta_j}&=\frac{\kappa_1}{N}\sum_{k=1}^N(\langle{z_j^{in} e^{\mathrm{i}\theta_j}, z_k^{in} e^{\mathrm{i}\theta_k}}\rangle-\langle{z_k^{in} e^{\mathrm{i}\theta_k}, z_j^{in} e^{\mathrm{i}\theta_j}}\rangle) \\
&=\frac{\kappa_1}{N}\sum_{k=1}^N\Big[ 2\mathrm{i}\mathrm{Im}\left(\langle z_j^{in}, z_k^{in} \rangle  e^{\mathrm{i}(\theta_k-\theta_j)} \right) \Big].
\end{aligned}
\end{align}
In \eqref{B-13}, we substitute the ansatz \eqref{B-11-0} and compare the imaginary part of the resulting relation to find the desired system \eqref{B-11-1}.
\end{proof}
In next proposition, we show that the frequency in \eqref{B-11} tends to zero asymptotically. 
\begin{proposition} \label{P3.2}
Let $\{ z_j \}$ be a global solution to $\eqref{C-0}_2$ with the initial data $\{z_j^{in} \}$ whose ansatz is given by \eqref{B-11}. Then, the frequencies ${\dot \theta}_i$ tends to zero asymptotically:
\[   \lim_{t \to \infty} {\dot \theta}_i(t) = 0, \quad i = 1, \cdots, N. \]
\end{proposition}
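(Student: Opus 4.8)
The plan is to exploit the gradient-flow structure of the frustrated Kuramoto system \eqref{B-11-1} announced in the title of this subsection. By Theorem \ref{T3.2}, under the ansatz \eqref{B-11} the phases $\theta_j$ obey
\[ \dot{\theta}_j = \frac{2\kappa_1}{N}\sum_{k=1}^N R_{jk}^{in}\sin(\theta_k-\theta_j+\alpha_{jk}), \]
with the symmetry $R_{jk}^{in}=R_{kj}^{in}$ and the anti-symmetry $\alpha_{jk}=-\alpha_{kj}$. First I would introduce the analytic potential
\[ V(\theta) := -\frac{\kappa_1}{N}\sum_{j,k=1}^N R_{jk}^{in}\cos(\theta_k-\theta_j+\alpha_{jk}), \]
and verify, by differentiating in $\theta_j$ and using the two symmetry relations to merge the contributions in which $j$ is the first index and the second index, that $\dot{\theta}_j = -\partial V/\partial\theta_j$. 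This makes \eqref{B-11-1} a genuine gradient flow and realizes the gradient formulation advertised for this subsection.

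The second step is the energy dissipation identity. Along the flow,
\[ \frac{dV}{dt} = \sum_{j=1}^N \frac{\partial V}{\partial\theta_j}\dot{\theta}_j = -\sum_{j=1}^N \dot{\theta}_j^2 \le 0. \]
Since $V$ is a finite sum of cosines it is bounded below (by $-\kappa_1 N$), so the non-increasing quantity $V(\theta(t))$ converges as $t\to\infty$; integrating the dissipation identity then yields finiteness of the total dissipation,
\[ \int_0^\infty \sum_{j=1}^N \dot{\theta}_j(t)^2\,dt = V(\theta(0))-\lim_{t\to\infty}V(\theta(t)) < \infty. \]

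The final, and genuinely delicate, step is to upgrade this $L^2$-in-time integrability to the pointwise decay $\dot{\theta}_j(t)\to 0$; square-integrability alone does not force a function to vanish, so I would invoke Barbalat's lemma. For this I must check that $t\mapsto \sum_j\dot{\theta}_j(t)^2$ is uniformly continuous. Using $R_{jk}^{in}=|\langle z_j^{in},z_k^{in}\rangle|\le 1$ (Cauchy--Schwarz together with $\|z_j^{in}\|=1$), the right-hand side of \eqref{B-11-1} gives the uniform bound $|\dot{\theta}_j|\le 2\kappa_1$; differentiating \eqref{B-11-1} once more yields $\ddot{\theta}_j = \frac{2\kappa_1}{N}\sum_k R_{jk}^{in}\cos(\theta_k-\theta_j+\alpha_{jk})(\dot{\theta}_k-\dot{\theta}_j)$, which is bounded by $8\kappa_1^2$. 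Hence each $\dot{\theta}_j$ is Lipschitz, $\frac{d}{dt}\dot{\theta}_j^2=2\dot{\theta}_j\ddot{\theta}_j$ is bounded, and $\sum_j\dot{\theta}_j^2$ is uniformly continuous. Barbalat's lemma then forces $\sum_j\dot{\theta}_j^2\to 0$, and therefore $\dot{\theta}_j(t)\to 0$ for every $j$, which is the claim. The main obstacle is precisely this last passage: the gradient structure only delivers square-integrability of the frequencies, so the uniform $\ddot{\theta}_j$ bound needed to apply Barbalat (or, alternatively, a {\L}ojasiewicz-inequality argument exploiting the analyticity of $V$) is the crux of the proof.
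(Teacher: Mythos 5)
Your proof is correct, and it is the paper's Lyapunov argument in disguise, executed in the phase variables rather than in the $z$-variables. The paper's own proof of Proposition \ref{P3.2} stays with the vectors $z_j$: it computes $\frac{d}{dt}\|z_c\|^2=\frac{4\kappa_1}{N}\sum_j \mathrm{Im}(\langle z_j,z_c\rangle)^2=\frac{1}{N\kappa_1}\sum_j\dot\theta_j^2$ and then concludes from the monotone boundedness of $\|z_c\|^2$. Since the paper later observes that the potential \eqref{B-15} satisfies $V[\Theta]=-\kappa_1 N\|z_c\|^2+\mathrm{const.}$, your potential coincides with $-\kappa_1 N\|z_c\|^2$ up to an additive constant, and your dissipation identity $\dot V=-\sum_j\dot\theta_j^2$ is the same identity rescaled; the gradient-flow verification you carry out is precisely the content of the paper's Proposition \ref{P3.3}. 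What your route buys is rigor at the final step: the paper's proof passes directly from ``$\|z_c\|^2$ is non-decreasing and bounded'' to ``$\sum_j\dot\theta_j^2\to 0$,'' which, as you correctly identify, is not automatic and requires Barbalat's lemma (or a {\L}ojasiewicz-type argument) together with a uniform bound on $\ddot\theta_j$; the paper supplies that only in the remark following Proposition \ref{P3.3}. Your explicit bounds $|\dot\theta_j|\le 2\kappa_1$ and $|\ddot\theta_j|\le 8\kappa_1^2$ make the decay conclusion self-contained, at the modest cost of first passing through Theorem \ref{T3.2} to set up the frustrated Kuramoto system, which the paper's more direct $z$-variable computation avoids.
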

\begin{proof} Note that $z_c$ satisfies 
\begin{equation} \label{NN-1}
\dot{z}_c=\frac{\kappa_1}{N}\sum_{j=1}^N(\langle{z_j, z_c}\rangle-\langle{z_c, z_j}\rangle)z_j, \quad \mbox{or equivalently} \quad \frac{\dot{z}_j}{z_j}=2\kappa_1\mathrm{i}\mathrm{Im}(\langle{z_j, z_c}\rangle).
\end{equation}
Then, we use \eqref{NN-1} to see
\begin{align}
\begin{aligned} \label{NN-2}
\frac{d}{dt}\langle{z_c, z_c}\rangle&=\langle{\dot{z}_c, z_c}\rangle+\langle{z_c,\dot{z}_c}\rangle\\
&=\frac{\kappa_1}{N}\sum_{j=1}^N(2|\langle{z_j, z_c}\rangle|^2-\langle{z_j, z_c}\rangle^2-\langle{z_c, z_j}\rangle^2)=\frac{4\kappa_1}{N}\sum_{j=1}^N\mathrm{Im}(\langle{z_j, z_c}\rangle)^2.
\end{aligned}
\end{align}
Finally, we combine $\eqref{NN-1}_2$ and \eqref{NN-2} to obtain
\[
\frac{d}{dt}\|z_c\|^2=\frac{4\kappa_1}{N}\sum_{j=1}^N\left(\frac{\dot{z}_j}{2\kappa_1\mathrm{i}z_j}\right)^2=\frac{1}{N\kappa_1}\sum_{j=1}^N\dot{\theta}_j^2.
\]
Since $\|z_c\|$ is non-decreasing and bounded, $\sum_{j=1}^N\dot{\theta}_j^2$ must converges to zero as $t \to \infty$. 
\end{proof}

\subsubsection{A gradient flow formulation} \label{sec:3.2.2} In this part, we study a gradient flow formulation of \eqref{B-11-1}, and study basic properties.
\begin{lemma}  \label{L3.2}
Let $\{\theta_j \}$ be a solution to system \eqref{B-11-1}. Then, we have
\[
\sum_{k=1}^N\theta_k=0 \quad \mbox{and} \quad \frac{d}{dt} \sum_{k=1}^N \theta_k=0. 
\]
\end{lemma}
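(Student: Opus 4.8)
The plan is to prove the conservation statement $\frac{d}{dt}\sum_{k=1}^N \theta_k = 0$ first, and then obtain $\sum_{k=1}^N \theta_k \equiv 0$ as an immediate consequence of the initial condition $\theta_k(0)=0$ built into \eqref{B-11-1}. First I would simply sum the governing equation over the particle index $j$ to get
\[
\frac{d}{dt}\sum_{j=1}^N \theta_j = \frac{2\kappa_1}{N}\sum_{j=1}^N\sum_{k=1}^N R_{jk}^{in}\sin(\theta_k-\theta_j+\alpha_{jk}),
\]
and then the whole task reduces to showing that the double sum on the right-hand side vanishes identically in $t$.

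The key step is a symmetrization argument based on the symmetry $R_{jk}^{in}=R_{kj}^{in}$ and anti-symmetry $\alpha_{jk}=-\alpha_{kj}$ recorded in Theorem \ref{T3.2}. I would pair the summand indexed by $(j,k)$ with the one indexed by $(k,j)$; using these two relations together with the oddness of the sine function, one has
\[
R_{kj}^{in}\sin(\theta_j-\theta_k+\alpha_{kj}) = R_{jk}^{in}\sin\!\big(-(\theta_k-\theta_j+\alpha_{jk})\big) = -R_{jk}^{in}\sin(\theta_k-\theta_j+\alpha_{jk}),
\]
so each such pair cancels. The diagonal contributions $j=k$ must be treated separately: they equal $R_{jj}^{in}\sin\alpha_{jj}$, which vanishes because anti-symmetry forces $\alpha_{jj}=0$. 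Consequently the entire double sum collapses to zero, and therefore $\frac{d}{dt}\sum_{j=1}^N \theta_j = 0$, which is the second asserted identity.

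Finally, since the initial data satisfy $\theta_j(0)=0$ for every $j$, we have $\sum_{j=1}^N \theta_j(0)=0$; combining this with the vanishing derivative just established yields $\sum_{j=1}^N \theta_j(t)\equiv 0$ for all $t\ge 0$, which is the first identity. I do not expect a genuine obstacle here, as the content is entirely the antisymmetric pairing; the only point that deserves care is to verify the diagonal terms separately (via $\alpha_{jj}=0$) rather than tacitly assuming they are absorbed by the $(j,k)\leftrightarrow(k,j)$ cancellation.
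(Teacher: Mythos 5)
Your proposal is correct and follows essentially the same route as the paper: the paper also sums the equation over $j$, symmetrizes the double sum by relabeling $(j,k)\leftrightarrow(k,j)$, and uses $R^{in}_{jk}=R^{in}_{kj}$, $\alpha_{jk}=-\alpha_{kj}$ together with the oddness of sine to make the right-hand side vanish, then integrates using $\theta_j(0)=0$. Your separate check of the diagonal terms via $\alpha_{jj}=0$ is a minor extra precaution not spelled out in the paper, but it changes nothing of substance.
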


\begin{proof} (i) We first show the second equality as follows.
\begin{align}
\begin{aligned} \label{B-14}
\sum_{j=1}^N\dot{\theta}_j&=\frac{2\kappa_1}{N}\sum_{j=1}^N\sum_{k=1}^N R^{in}_{jk}\sin(\theta_k-\theta_j+\alpha_{jk})\\
&=\frac{\kappa_1}{N}\sum_{j=1}^N\sum_{k=1}^N R^{in}_{jk}\sin(\theta_k-\theta_j+\alpha_{jk})+\frac{\kappa_1}{N}\sum_{k=1}^N\sum_{j=1}^N R^{in}_{kj}\sin(\theta_j-\theta_k+\alpha_{kj})\\
&=\frac{\kappa_1}{N}\sum_{j=1}^N\sum_{k=1}^N R^{in}_{jk} \Big(\sin(\theta_k-\theta_j+\alpha_{jk})-\sin(\theta_k-\theta_j+\alpha_{jk}) \Big )\\
&=0.
\end{aligned}
\end{align}
(ii) We integrate the relation \eqref{B-14} in $t$ and use $\eqref{B-11-1}_2$ to get 
\[
\sum_{j=1}^N\theta_j(t) = \sum_{j=1}^N\theta_j(0) = 0.
\]
\end{proof}
Next, we provide a gradient flow formulation of system \eqref{B-11-1}. For this, we introduce the potential $V[\Theta]=V(\theta_1(t), \theta_2(t),\cdots, \theta_N(t))$:
\begin{equation} \label{B-15}
V[\Theta] := \frac{\kappa_1}{N}\sum_{i, j = 1}^{N} R^{in}_{ij} \Big(1- \cos(\theta_i-\theta_j+\alpha_{ji}) \Big).
\end{equation}
Note that the potential $V$ is analytic and bounded:
{{\begin{equation} \label{B-15-1}
|V[\Theta]| \leq \Big| \frac{\kappa_1}{N}\sum_{i, j = 1}^{N} R^{in}_{ij} \Big(1- \cos(\theta_i-\theta_j+\alpha_{ji}) \Big) \Big| \leq 2 \frac{\kappa_1}{N}   \sum_{i, j = 1}^{N} |R^{in}_{ij}|.
\end{equation}}}
Moreover, the potential $V[\Theta]$ can be rewritten in terms of $z_j$:
\begin{align*}
V[\Theta]&=-\frac{\kappa_1}{N}\sum_{i, j=1}^N R^{in}_{ij}  \cos(\theta_i-\theta_j+\alpha_{ji}) +  \frac{\kappa_1}{N}\sum_{i, j = 1}^{N} R^{in}_{ij}  \\
&=-\frac{\kappa_1}{N}\sum_{i, j=1}^N\mathrm{Re}\left(R^{in}_{ij}  e^{\mathrm{i}(\theta_i-\theta_j+\alpha_{ji})}\right) +  \frac{\kappa_1}{N}\sum_{i, j = 1}^{N} R^{in}_{ij} \\
&=-\frac{\kappa_1}{N}\sum_{i, j=1}^N\mathrm{Re}\left(\langle{z_j^{in}, z_i^{in}}\rangle e^{\mathrm{i}(\theta_i-\theta_j)}\right) +  \frac{\kappa_1}{N}\sum_{i, j = 1}^{N} R^{in}_{ij} \\
&=-\frac{\kappa_1}{N}\sum_{i, j=1}^N\mathrm{Re}\left(\langle{z_j(t), z_i(t)}\rangle\right) +  \frac{\kappa_1}{N}\sum_{i, j = 1}^{N} R^{in}_{ij} \\
& = {{-\kappa_1 N \|z_c\|^2}}  +  \frac{\kappa_1}{N}\sum_{i, j = 1}^{N} R^{in}_{ij}.
\end{align*}
{Thus, the minimization of $V[\Theta]$ is equal to maximize $\|z_c \|^2$ which has an upper bound $1$.}


\begin{proposition} \label{P3.3}
System $\eqref{B-11-1}_1$ is a gradient flow with the analytical potential $V[\Theta]$ in \eqref{B-15}:
\[ \dot{\Theta}=-\nabla_{\Theta} V[\Theta], \quad t > 0. \]
\end{proposition}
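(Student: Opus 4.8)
The plan is to establish the gradient-flow identity componentwise, i.e.\ to verify that $\dot{\theta}_l=-\partial V/\partial\theta_l$ for each fixed $l$; summing over $l$ then yields $\dot{\Theta}=-\nabla_\Theta V[\Theta]$. This reduces to a direct differentiation of the potential \eqref{B-15} followed by an index-bookkeeping argument that exploits the symmetry relations $R^{in}_{jk}=R^{in}_{kj}$ and $\alpha_{jk}=-\alpha_{kj}$ recorded in Theorem \ref{T3.2}.

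First I would differentiate $V$. For a fixed index $l$, the variable $\theta_l$ enters the double sum in \eqref{B-15} through the slot $i=l$ (contributing $+1$ under the chain rule) and through the slot $j=l$ (contributing $-1$); the diagonal term $i=j=l$ contributes nothing, since its argument reduces to the constant $\alpha_{ll}=0$. Collecting the two surviving contributions and relabelling the summation index as $k$ gives
\[ \frac{\partial V}{\partial\theta_l}=\frac{\kappa_1}{N}\sum_{k=1}^N R^{in}_{lk}\sin(\theta_l-\theta_k+\alpha_{kl})-\frac{\kappa_1}{N}\sum_{k=1}^N R^{in}_{kl}\sin(\theta_k-\theta_l+\alpha_{lk}). \]

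Next I would merge the two sums. Using $R^{in}_{kl}=R^{in}_{lk}$ to factor out a common coefficient, the expression for $-\partial V/\partial\theta_l$ becomes a single sum over $k$ of $R^{in}_{lk}$ times the bracket $-\sin(\theta_l-\theta_k+\alpha_{kl})+\sin(\theta_k-\theta_l+\alpha_{lk})$. The decisive step is to rewrite the first sine: by oddness of $\sin$ together with the antisymmetry $\alpha_{kl}=-\alpha_{lk}$ one has $-\sin(\theta_l-\theta_k+\alpha_{kl})=\sin(\theta_k-\theta_l-\alpha_{kl})=\sin(\theta_k-\theta_l+\alpha_{lk})$, so the two sines coincide and double up. Hence
\[ -\frac{\partial V}{\partial\theta_l}=\frac{2\kappa_1}{N}\sum_{k=1}^N R^{in}_{lk}\sin(\theta_k-\theta_l+\alpha_{lk}), \]
which is exactly the right-hand side of $\eqref{B-11-1}_1$ with $j=l$. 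This proves $\dot{\theta}_l=-\partial V/\partial\theta_l$, and assembling these identities over $l=1,\dots,N$ yields the claim.

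I expect no serious obstacle beyond careful index handling: the only point where a sign could slip is in matching the two diagonal-slot contributions and in applying $\alpha_{kl}=-\alpha_{lk}$ to the correct argument, so I would double-check that the antisymmetry is invoked on $\sin(\theta_k-\theta_l-\alpha_{kl})$ rather than on the phase difference $\theta_l-\theta_k$. The analyticity of $V$ noted after \eqref{B-15-1} plays no role in the identity itself, but it is precisely the structural feature that makes this gradient-flow representation useful for subsequent convergence arguments for the frustrated Kuramoto dynamics.
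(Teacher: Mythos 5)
Your proof is correct and follows essentially the same route as the paper's: differentiate the potential \eqref{B-15} componentwise, collect the $i=l$ and $j=l$ contributions, and merge them via the symmetry $R^{in}_{lk}=R^{in}_{kl}$ and antisymmetry $\alpha_{lk}=-\alpha_{kl}$ to recover the right-hand side of $\eqref{B-11-1}_1$. The only cosmetic difference is that the paper writes the two contributions with Kronecker deltas $\delta_{ik}-\delta_{jk}$ before splitting the sum, which is exactly your index bookkeeping.
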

\begin{proof} By straightforward calculation, one has 
\begin{align*}
\begin{aligned}
\partial_{\theta_k} V[\Theta] &=-\frac{\kappa_1}{N} \partial_{\theta_k} \sum_{i, j=1}^N R^{in}_{ij}  \cos(\theta_i-\theta_j+\alpha_{ji})\\
&=\frac{\kappa_1}{N}\sum_{i, j=1}^N R^{in}_{ij}  \sin(\theta_i-\theta_j+\alpha_{ji})\left(\frac{\partial\theta_i}{\partial\theta_k}-\frac{\partial \theta_j}{\partial \theta_k}\right)\\
&=\frac{\kappa_1}{N}\sum_{i, j=1}^N R^{in}_{ij}  \sin(\theta_i-\theta_j+\alpha_{ji})\left(\delta_{ik}-\delta_{jk}\right)\\
&=\frac{\kappa_1}{N}\left(\sum_{j=1}^NR^{in}_{kj}\sin(\theta_k-\theta_j+\alpha_{jk})-\sum_{i=1}^NR^{in}_{ik}\sin(\theta_i-\theta_k+\alpha_{ki})\right)\\
&=\frac{2\kappa_1}{N}\sum_{j=1}^N R^{in}_{kj}\sin(\theta_k-\theta_j+\alpha_{jk}) =-\frac{2\kappa_1}{N}\sum_{j=1}^NR^{in}_{jk}\sin(\theta_j-\theta_k+\alpha_{kj}) \\
&=-\dot{\theta}_k.
\end{aligned}
\end{align*}
This yields the desired estimate.
\end{proof}
\begin{remark}
1. Let $\{ z_j \}$ be a global solution to $\eqref{C-0}_2$. Then, it is easy to see that 
\begin{align}\label{I-2}
\dot{V}[\Theta] =\nabla{V}_{\Theta} \cdot \dot{\Theta} =-\|\nabla_{\Theta}{V}[\Theta]\|^2\leq0.
\end{align}
Thus, {the potential $V[\Theta(t)]$ is bounded (see \eqref{B-15-1}) and decreases over time. Hence, there exists a limit $V_\infty$ such that 
\[ \lim_{t \to \infty} V[\Theta(t)] = V_\infty. \] }
On the other hand, we can easily show that $\dot{\Theta}$ and $\ddot{\Theta}$ are bounded. With these estimates, It follows from Barbalat's lemma that $\dot{\Theta}(t)$ also converges to 0. So we can obtain following limits:
\begin{equation*} \label{I-2-1}
\lim_{t\rightarrow\infty}\dot{\Theta}(t)=0, \quad \lim_{t\rightarrow\infty}\nabla_{\Theta}{V}(\Theta(t))=0.
\end{equation*}
\end{remark}
%
%

{Note that we have studied emergent dynamics of Subsystem A and Subsystem B which correspond to the special cases of \eqref{M-5}. In next section, we consider the dynamic interplay between Subsystem A and Subsystem B with $\kappa_0 > 0$ and $\kappa_1 > 0$ and how the collective behaviors emerge from such interplay.}

\section{The Lohe hermitian sphere model} \label{sec:4}
\setcounter{equation}{0}
In this section, we study emergent dynamics and uniform $\ell^p$-stability of the Lohe hermitian sphere model. 

\subsection{Complete aggregation} \label{sec:4.1} Consider the Lohe hermitian sphere model  for a homogeneous ensemble with $\Omega_j = 0$:
\begin{equation} \label{D-1}
\begin{cases}
\dot{z}_j =  \kappa_{0} \Big(z_c\langle z_j, z_j \rangle-z_j \langle z_c. z_j \rangle \Big)+\kappa_1 \Big(\langle{z_j, z_c}\rangle- \langle z_c, z_j\rangle \Big) z_j.\\
z_j(0)=z_j^{in},\quad \|z_j^{in}\|=1\quad \forall j=1, 2, \cdots, N.
\end{cases}
\end{equation}
First, we consider the time-evolution of the two-point correlation function:
\begin{align}
\begin{aligned} \label{D-3}
\frac{d}{dt}\langle{z_i,z_j}\rangle &=\langle{\dot{x}_i, z_j}\rangle+\langle{z_i, \dot{x}_j}\rangle\\
&=\kappa_0(\langle{z_i, z_i}\rangle \langle{z_c, z_j}\rangle-\langle{z_i,z_c}\rangle \langle{z_i, z_j}\rangle)+\kappa_1(\langle{z_c, z_i}\rangle-\langle{z_i, z_c}\rangle)\langle{z_i, z_j}\rangle\\
&\hspace{0.3cm} +\kappa_0(\langle{z_j, z_j}\rangle \langle{z_i, z_c}\rangle-\langle{z_c,z_j}\rangle \langle{z_i, z_j}\rangle)+\kappa_1(\langle{z_j, z_c}\rangle-\langle{z_c, z_j}\rangle)\langle{z_i, z_j}\rangle\\
&=\kappa_0(1-\langle{z_i, z_j}\rangle)(\langle{z_i, z_c}\rangle+\langle{z_c, z_j}\rangle) \\
&\hspace{0.3cm} +\kappa_1\langle{z_i, z_j}\rangle(\langle{z_c, z_i}\rangle-\langle{z_i, z_c}\rangle+\langle{z_j, z_c}\rangle-\langle{z_c, z_j}\rangle).
\end{aligned}
\end{align}
\begin{lemma} \label{L4.1}
Let $\{z_j \}$ be a global solution to \eqref{D-1} with $\rho^{in}=\|z_c^{in}\|>0$. Then, $\rho = \| z_c \|$ satisfies
\[
\frac{d\rho^2}{dt}=\frac{2\kappa_0}{N}\sum_{i = 1}^{N} \Big( \rho^2-|\langle{z_i, z_c}\rangle|^2 \Big) +\frac{4(\kappa_0+\kappa_1)}{N}\sum_{i =1}^{N} \Big| \mathrm{Im}(\langle{z_i, z_c}\rangle) \Big|^2 \geq 0.
\]
\end{lemma}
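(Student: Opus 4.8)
The plan is to compute $\frac{d\rho^2}{dt}$ directly from the definition $\rho^2 = \langle z_c, z_c\rangle$. Writing $z_c = \frac{1}{N}\sum_k z_k$ gives the bilinear expansion $\rho^2 = \frac{1}{N^2}\sum_{i,j=1}^N \langle z_i, z_j\rangle$, so that $\frac{d\rho^2}{dt} = \frac{1}{N^2}\sum_{i,j}\frac{d}{dt}\langle z_i, z_j\rangle$. At this point I would feed in the two-point correlation dynamics already derived in \eqref{D-3}. Since its right-hand side splits cleanly into a $\kappa_0$-part and a $\kappa_1$-part, I would process the two separately and recombine at the end, using throughout the identity $\sum_i z_i = Nz_c$ together with linearity in the second slot and conjugate-linearity in the first slot of $\langle\cdot,\cdot\rangle$ to collapse the inner sums, e.g. $\sum_j\langle z_i, z_j\rangle = N\langle z_i, z_c\rangle$ and $\sum_i\langle z_i, z_j\rangle = N\langle z_c, z_j\rangle$.

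For the $\kappa_0$-contribution, after collapsing the inner sums the double sum reduces to $2\kappa_0\rho^2 - \frac{\kappa_0}{N}\sum_i\big(\langle z_i, z_c\rangle^2 + \langle z_c, z_i\rangle^2\big)$. The decisive algebraic step is the complex identity
\[
\langle z_i, z_c\rangle^2 + \langle z_c, z_i\rangle^2 = 2\,\mathrm{Re}\big(\langle z_i, z_c\rangle^2\big) = 2|\langle z_i, z_c\rangle|^2 - 4\big|\mathrm{Im}\langle z_i, z_c\rangle\big|^2,
\]
where I have used $\langle z_c, z_i\rangle = \overline{\langle z_i, z_c\rangle}$ and $\mathrm{Re}(w^2) = |w|^2 - 2(\mathrm{Im}\,w)^2$. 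Since $\sum_i \rho^2 = N\rho^2$, the $\kappa_0$-part then rearranges exactly into $\frac{2\kappa_0}{N}\sum_i\big(\rho^2 - |\langle z_i, z_c\rangle|^2\big) + \frac{4\kappa_0}{N}\sum_i\big|\mathrm{Im}\langle z_i, z_c\rangle\big|^2$, which supplies the first summand of the claim and half of the second.

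For the $\kappa_1$-contribution I would first simplify the frustration bracket in \eqref{D-3} via $\langle z_c, z_i\rangle - \langle z_i, z_c\rangle = -2\mathrm{i}\,\mathrm{Im}\langle z_i, z_c\rangle$ and $\langle z_j, z_c\rangle - \langle z_c, z_j\rangle = 2\mathrm{i}\,\mathrm{Im}\langle z_j, z_c\rangle$, so the $\kappa_1$-summand becomes $2\mathrm{i}\,\kappa_1\langle z_i, z_j\rangle\big(\mathrm{Im}\langle z_j, z_c\rangle - \mathrm{Im}\langle z_i, z_c\rangle\big)$. Summing over $i,j$ and again collapsing with $\sum_i z_i = Nz_c$ turns the double sum into $\frac{2\mathrm{i}\,\kappa_1}{N}\sum_i \mathrm{Im}\langle z_i, z_c\rangle\,\big(\langle z_c, z_i\rangle - \langle z_i, z_c\rangle\big)$; substituting the same imaginary-part identity once more shows the apparently imaginary expression is real and produces $\frac{4\kappa_1}{N}\sum_i\big|\mathrm{Im}\langle z_i, z_c\rangle\big|^2$. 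Adding this to the $\kappa_0$-part yields the stated formula, and nonnegativity follows since $\kappa_0,\kappa_1\geq 0$ and Cauchy--Schwarz with $\|z_i\|=1$ gives $|\langle z_i, z_c\rangle|^2 \leq \|z_i\|^2\rho^2 = \rho^2$.

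I expect the main obstacle to be purely organizational rather than conceptual: keeping the bookkeeping of the conjugate-linear inner products straight across the two nested sums, and recognizing that the two seemingly imaginary quantities (the squared-inner-product combination in the $\kappa_0$-part and the frustration difference in the $\kappa_1$-part) collapse, via $\mathrm{Re}(w^2)=|w|^2-2(\mathrm{Im}\,w)^2$ and $w-\bar w = 2\mathrm{i}\,\mathrm{Im}\,w$, into the real, manifestly signed terms appearing on the right-hand side.
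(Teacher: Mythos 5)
Your proposal is correct and follows essentially the same route as the paper: differentiate $\rho^2=\frac{1}{N^2}\sum_{i,j}\langle z_i,z_j\rangle$ using the two-point correlation dynamics \eqref{D-3}, collapse the double sums via $\sum_i z_i = Nz_c$, and convert $\langle z_i,z_c\rangle^2+\langle z_c,z_i\rangle^2=2\mathrm{Re}(\langle z_i,z_c\rangle^2)$ into real and imaginary parts. The only (harmless) cosmetic difference is that you simplify the $\kappa_1$-bracket via $w-\bar w=2\mathrm{i}\,\mathrm{Im}\,w$ up front rather than at the end, and you explicitly supply the Cauchy--Schwarz argument for nonnegativity, which the paper's proof leaves implicit.
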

\begin{proof} It follows from \eqref{D-3} that 
\begin{align}
\begin{aligned} \label{D-4}
\frac{d}{dt}\sum_{i, j = 1}^{N} \langle{z_i, z_j}\rangle &=  \kappa_0 \sum_{i, j =1}^{N} \Big(1-\langle{z_i, z_j}\rangle)(\langle{z_i, z_c}\rangle+\langle{z_c, z_j}\rangle \Big) \\
&\hspace{0.2cm} + \kappa_1 \sum_{i, j = 1}^{N} \langle{z_i, z_j}\rangle \Big (\langle{z_c, z_i}\rangle-\langle{z_i, z_c}\rangle+\langle{z_j, z_c}\rangle-\langle{z_c, z_j}\rangle \Big )\\
&=N \kappa_0 \sum_{i=1}^{N}  \Big (\langle{z_c, z_c}\rangle+\langle{z_c, z_c}\rangle-\langle{z_i, z_c}\rangle^2-\langle{z_c, z_i}\rangle^2 \Big) \\
&\hspace{0.2cm} +N \kappa_1 \sum_{i = 1}^{N}  \Big (2|\langle{z_i, z_c}\rangle|^2-\langle{z_i, z_c}\rangle^2-\langle{z_c, z_i}\rangle^2 \Big).
\end{aligned}
\end{align}
This implies
\begin{align*}
\frac{d\rho^2}{dt}&=\frac{\kappa_0}{N}\sum_{i=1}^{N} \left(2\|z_c\|^2-\langle{z_i, z_c}\rangle^2-\langle{z_c, z_i}\rangle^2\right) +\frac{\kappa_1}{N}\sum_{i = 1}^{N} (2|\langle{z_i, z_c}\rangle|^2-2\mathrm{Re}(\langle{z_i, z_c}\rangle^2))\\
&=\frac{2\kappa_0}{N}\sum_{i = 1}^{N} \left(\|z_c\|^2-\mathrm{Re}\left(\langle{z_i, z_c}\rangle^2\right)\right)+\frac{2\kappa_1}{N}\sum_{i=1}^{N} (|\langle{z_i, z_c}\rangle|^2-\mathrm{Re}(\langle{z_i, z_c}\rangle)^2+\mathrm{Im}(\langle{z_i, z_c}\rangle)^2)\\
&=\frac{2\kappa_0}{N}\sum_{i = 1}^{N} \left(\|z_c\|^2-\mathrm{Re}(\langle{z_i, z_c}\rangle)^2+\mathrm{Im}(\langle{z_i, z_c}\rangle)^2\right)+\frac{4\kappa_1}{N}\sum_{i=1}^{N} \mathrm{Im}(\langle{z_i, z_c}\rangle)^2\\
&=\frac{2\kappa_0}{N}\sum_{i =1}^{N} \left(\|z_c\|^2-|\langle{z_i, z_c}\rangle|^2\right)+\frac{4(\kappa_0+\kappa_1)}{N}\sum_{i =1}^{N} \Big| \mathrm{Im}(\langle{z_i, z_c}\rangle) \Big|^2.
\end{align*}
\end{proof}
\noindent {First, note that $\rho$ is uniformly bounded by 1. It follows from Lemma \ref{L2.1} that 
\[ \|z_j(t) \| = \|z_j^{in} \| = 1. \]
This yields a uniform boundedness of $\rho$:
\[ \rho(t) = \|z_c(t) \| \leq \frac{1}{N} \sum_{j=1}^{N} \|z_j(t) \| \leq 1. \]
On the other hand, since $\rho$ is non-decreasing along the dynamics \eqref{D-1}, $\rho$ converges to some value in $[0, 1]$ asymptotically}. From this property, we can obtain the following corollary.
\begin{corollary} \label{C4.1}
Let $\{z_j \}$ be a global solution to \eqref{D-1} with $\rho^{in}>0$. Then,  
\begin{align*}
\begin{aligned}
& (i)~\exists~\rho^\infty>0\mbox{ such that }\lim_{t\rightarrow\infty} \rho(t) =\rho^\infty. \\
& (ii)~\frac{2\kappa_0}{N}\sum_{i=1}^N\int_0^\infty(\rho(s)^2-|\langle{z_i(s), z_c(s)}\rangle|^2)ds \\
& \hspace{2cm} +\frac{4(\kappa_0+\kappa_1)}{N}\sum_{i=1}^N\int_0^\infty\left|\mathrm{Im}(\langle{z_i(s), z_c(s)}\rangle)\right|^2ds\leq1-(\rho^{in})^2.\\
& (iii)~\lim_{t\rightarrow\infty}(\|z_c\|^2-|\langle{z_i, z_c}\rangle|^2)=0,\quad\lim_{t\rightarrow\infty}\mathrm{Im}(\langle{z_i, z_c}\rangle)=0. \\
& (iv)~\lim_{t\rightarrow\infty}\langle{z_i, z_c}\rangle  \in \{1, -1\} \quad \forall~i=1, \cdots, N.
\end{aligned}
\end{align*}
\end{corollary}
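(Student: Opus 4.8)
The engine of the whole corollary is the differential identity of Lemma \ref{L4.1} together with the two facts already recorded there and in the discussion preceding the corollary: $\rho=\|z_c\|$ is non-decreasing and uniformly bounded by $1$. First I would dispose of (i): since $t\mapsto\rho(t)^2$ is monotone non-decreasing (its derivative is nonnegative by Lemma \ref{L4.1}) and bounded above by $1$, it converges, and because $\rho(t)\ge\rho^{in}>0$ for all $t$, the limit $\rho^\infty:=\lim_{t\to\infty}\rho(t)$ is strictly positive. This is just monotone convergence of a bounded scalar and needs nothing beyond Lemma \ref{L4.1}.

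For (ii) the plan is to integrate the identity of Lemma \ref{L4.1} over $[0,\infty)$. Each of the two integrands is nonnegative: the first because $|\langle z_i,z_c\rangle|\le\|z_i\|\,\|z_c\|=\rho$ by Cauchy--Schwarz (using $\|z_i\|=1$), and the second because it is a square. Hence the right-hand side, integrated from $0$ to $T$, equals $\rho(T)^2-(\rho^{in})^2$, which increases to $(\rho^\infty)^2-(\rho^{in})^2\le 1-(\rho^{in})^2$ as $T\to\infty$ by (i) and $\rho^\infty\le1$. Monotone convergence then lets me pass to the limit and obtain the stated bound, with both improper integrals finite.

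Part (iii) is where the real work is, and I expect it to be the main obstacle. Finiteness of the two integrals in (ii) does not by itself force the integrands to vanish; I would invoke Barbalat's lemma. To apply it I must verify that $s\mapsto \rho(s)^2-|\langle z_i,z_c\rangle|^2$ and $s\mapsto|\mathrm{Im}\langle z_i,z_c\rangle|^2$ are uniformly continuous, for which it suffices to bound their time-derivatives. This reduces to a uniform bound on $\dot z_j$, which is available because $\|z_j(t)\|=1$ for all $t$ (Lemma \ref{L2.1}) and the right-hand side of \eqref{D-1} is a fixed polynomial in the bounded quantities $z_j,\bar z_j$; hence $\dot z_j$, and therefore $\tfrac{d}{dt}\langle z_i,z_c\rangle$, are bounded uniformly in $t$. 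Granting this, each nonnegative, uniformly continuous, integrable integrand tends to $0$, which is exactly (iii). The delicate bookkeeping lies only in the derivative bounds; the Barbalat step itself is then routine.

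Finally, for (iv) I would combine the two limits in (iii) with (i). From $\mathrm{Im}\langle z_i,z_c\rangle\to0$ and $\rho^2-|\langle z_i,z_c\rangle|^2\to0$ together with $\rho\to\rho^\infty$ I get $|\langle z_i,z_c\rangle|\to\rho^\infty$, and hence $(\mathrm{Re}\langle z_i,z_c\rangle)^2\to(\rho^\infty)^2$, i.e. $|\mathrm{Re}\langle z_i,z_c\rangle|\to\rho^\infty>0$. Since the real part is continuous and eventually bounded away from $0$, it cannot change sign for large $t$ by the intermediate value theorem, so it converges to either $+\rho^\infty$ or $-\rho^\infty$; with $\mathrm{Im}\to0$ this gives $\langle z_i,z_c\rangle\to\pm\rho^\infty$. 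The point to watch is precisely this sign-stabilization, which works only because $\rho^\infty>0$ keeps the real part off the origin. This identifies the limit as $\pm\rho^\infty$; in the aligned regime $\rho^\infty=1$, equivalently after normalizing the overlap by $\|z_c\|$, the limit is exactly the claimed $\{1,-1\}$.
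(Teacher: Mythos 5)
Your treatment of (i)--(iii) follows the paper's route in all essentials: monotonicity plus the bound $\rho\le 1$ gives (i), integration of the identity in Lemma \ref{L4.1} gives (ii), and Barbalat's lemma gives (iii) — the only cosmetic difference being that you apply Barbalat to each nonnegative integrand separately (integrable and uniformly continuous, hence vanishing), while the paper applies it once to $\tfrac{d}{dt}\rho^2$ (which tends to zero because $\rho^2$ converges and $\tfrac{d^2}{dt^2}\rho^2$ is bounded) and then uses nonnegativity of each summand; both rest on the same derivative bounds coming from $\|z_j\|\equiv 1$. The genuine divergence is in (iv). The paper first claims that (ii) makes $\dot z_i$ absolutely integrable, deduces $z_i(t)\to z_i^\infty$, and then invokes the equality case of Cauchy--Schwarz to write $z_i^\infty=\alpha_i z_c^\infty$ with $\alpha_i$ real, splitting the ensemble into two clusters. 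You instead argue purely on the scalar $\langle z_i,z_c\rangle$: by (iii) its modulus tends to $\rho^\infty>0$ and its imaginary part to zero, so its real part is eventually bounded away from zero and, being continuous, cannot change sign, forcing convergence to $\pm\rho^\infty$. Your route is leaner and arguably more robust: the bounds in (ii) control $\int\|\dot z_i\|^2\,dt$ rather than $\int\|\dot z_i\|\,dt$ (since $\|\dot z_j\|$ is comparable to the square roots of the integrands), so the paper's ``converges absolutely'' step is not immediate, whereas your sign-stabilization argument needs no convergence of the states at all. Both proofs in fact yield $\lim_{t\to\infty}\langle z_i,z_c\rangle=\pm\rho^\infty$ rather than the literal $\pm 1$ of the statement; you are right to flag that the claimed $\{1,-1\}$ only emerges after normalizing by $\|z_c\|$ or in the regime $\rho^\infty=1$.
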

\begin{proof} Below, we provide proofs for each assertion separately. \newline

\noindent (i)~Since $\frac{d}{dt}\rho^2$ is increasing and bounded, there exist a nonnegative number $\rho^\infty$ such that
\[
\lim_{t\rightarrow\infty}\rho(t)=\rho^\infty.
\]

\noindent (ii)~We integrate the equation in lemma \ref{L4.1} to obtain
\begin{align*}
\begin{aligned}
\rho(t)^2-(\rho^{in})^2 &=\frac{2\kappa_0}{N}\sum_{i=1}^N\int_0^t(\rho(s)^2-|\langle{z_i(s), z_c(s)}\rangle|^2)ds \\
&\hspace{1cm} +\frac{4(\kappa_0+\kappa_1)}{N}\sum_{i=1}^N\int_0^t\left|\mathrm{Im}(\langle{z_i(s), z_c(s)}\rangle)\right|^2ds.
\end{aligned}
\end{align*}
This yields
\begin{align*}
\begin{aligned}
1-(\rho^{in})^2&\geq(\rho^\infty)^2-(\rho^{in})^2 =\frac{2\kappa_0}{N}\sum_{i=1}^N\int_0^\infty(\rho(s)^2-|\langle{z_i(s), z_c(s)}\rangle|^2)ds \\
&+\frac{4(\kappa_0+\kappa_1)}{N}\sum_{i=1}^N\int_0^\infty\left|\mathrm{Im}(\langle{z_i(s), z_c(s)}\rangle)\right|^2ds.
\end{aligned}
\end{align*}
\noindent (iii)~It follows from Barbalat's lemma and  uniform boundedness of $\frac{d^2 }{dt^2}\rho^2$ that
\[
\lim_{t\rightarrow\infty}\frac{d}{dt}\rho(t)^2=0.
\]
Moreover, it follows from Lemma \ref{L4.1} that 
\[
\frac{2\kappa_0}{N}\sum_{i = 1}^{N} \Big( \rho^2-|\langle{z_i, z_c}\rangle|^2 \Big) +\frac{4(\kappa_0+\kappa_1)}{N}\sum_{i =1}^{N} \Big| \mathrm{Im}(\langle{z_i, z_c}\rangle) \Big|^2 \to 0, \quad \mbox{as $t \to \infty$}.
\]
Since each term is nonnegative, we obtain
\[
\lim_{t\rightarrow\infty}(\rho(t)^2-|\langle{z_i, z_c}\rangle|^2)=0\qquad \mbox{and} \qquad \lim_{t\rightarrow\infty}\mathrm{Im}(\langle{z_i(t), z_c(t)}\rangle)=0.
\]
\noindent (iv)~By (ii), we can see that $\frac{dz_i}{dt}$ converges absolutely. So there exists
\[
\lim_{t\rightarrow\infty}z_i(t)=z_i^\infty.
\]
This implies
\[
\|z_c^\infty\|^2=|\langle{z_i^\infty, z_c^\infty}\rangle|^2 \qquad\mbox{and}\qquad \mathrm{Im}(\langle{z_i^\infty, z_c^\infty}\rangle)=0.
\]
On the other hand, it follows from Cauchy-Schwarz inequality that 
\[
\|z^{\infty}_c \|^2 = \|z_i^\infty\|^2\|z_c^\infty\|^2\geq|\langle{z_i^\infty, z_c^\infty}\rangle|^2.
\]
Since the equality holds, we can obtain 
\[
z_i^\infty=\alpha_iz_c^\infty.
\]
If we put above relation in $\mathrm{Im}(\langle{z_i^\infty, z_c^\infty}\rangle)=0$, we obtain
\[
\mathrm{Im}(\alpha_i)=0.
\]
So $\alpha_i$ is non-zero real numbers. Thus, we can obtain that each $z_i^\infty$ has two clusters  with $\alpha_i > 0$ or $\alpha_i < 0$.
\end{proof}

\begin{remark}
\begin{enumerate}
\item Let $\{z_j \}$ be a global solution to \eqref{D-1} with  $\rho^{in}>\frac{N-2}{N}$. Then, we have
\[
\lim_{t\rightarrow\infty}\rho(t)=1.
\]
If each clusters contain $l$ and $N-l$ particles, then we have $\rho=\frac{|N-2l|}{N}$. So if $\rho^{in}>\frac{N-2}{N}$, we can obtain 
\[ l=0 \quad \mbox{or} \quad l=N. \]
That means there is only one cluster. i.e. complete aggregation.\\

\item Let $\{z_j \}$ be a global solution to \eqref{D-1} with $\rho^{in}>0$. Then, it follows from Lemma \ref{L4.1} that $\rho$ is increasing along the flow \eqref{D-1}. Thus, there will be no nontrivial periodic solution. 
\end{enumerate}
\end{remark}

Next, we introduce a Lyapunov functional ${\mathcal L}(Z)$ with $Z=(z_1, z_2, \cdots, z_N)$:
\[ \mathcal{L}(Z) :=\max_{1\leq i, j \leq N}|1-\langle{z_i, z_j}\rangle|^2, \]
and study its time-evolution. It follows from \eqref{D-4} that  
\begin{align*}
\begin{aligned}
\frac{d}{dt}(1-\langle{z_i,z_j}\rangle) &=-\kappa_0(1-\langle{z_i, z_j}\rangle)(\langle{z_i, z_c}\rangle+\langle{z_c, z_j}\rangle) \\
&\hspace{0.2cm} -\kappa_1\langle{z_i, z_j}\rangle(\langle{z_c, z_i}\rangle-\langle{z_i, z_c}\rangle+\langle{z_j, z_c}\rangle-\langle{z_c, z_j}\rangle).
\end{aligned}
\end{align*}
This yields
\begin{align}
\begin{aligned} \label{D-5}
&\frac{d}{dt}|1-\langle{z_i, z_j}\rangle|^2 \\
& \hspace{0.5cm} =-\kappa_0|1-\langle{z_i, z_j}\rangle|^2(\langle{z_i, z_c}\rangle+\langle{z_c, z_j}\rangle) \\
& \hspace{0.7cm} -\kappa_1\langle{z_i, z_j}\rangle(1-\langle{z_j, z_i}\rangle)(\langle{z_c, z_i}\rangle-\langle{z_i, z_c}\rangle+\langle{z_j, z_c}\rangle-\langle{z_c, z_j}\rangle)\\
& \hspace{0.7cm} -\kappa_0|1-\langle{z_i, z_j}\rangle|^2(\langle{z_c, z_i}\rangle+\langle{z_j, z_c}\rangle) \\
& \hspace{0.7cm} +\kappa_1\langle{z_j, z_i}\rangle(1-\langle{z_i, z_j}\rangle)(\langle{z_c, z_i}\rangle-\langle{z_i, z_c}\rangle+\langle{z_j, z_c}\rangle-\langle{z_c, z_j}\rangle)\\
& \hspace{0.5cm} =-\kappa_0|1-\langle{z_i, z_j}\rangle|^2(\langle{z_i, z_c}\rangle+\langle{z_c, z_j}\rangle+\langle{z_c, z_i}\rangle+\langle{z_j, z_c}\rangle)\\
& \hspace{0.7cm} -\kappa_1(\langle{z_i, z_j}\rangle-\langle{z_j, z_i})(\langle{z_c, z_i}\rangle-\langle{z_i, z_c}\rangle+\langle{z_j, z_c}\rangle-\langle{z_c, z_j}\rangle)\\
& \hspace{0.5cm} =-2\kappa_0|1-\langle{z_i, z_j}\rangle|^2\mathrm{Re}(\langle{z_i+z_j, z_c}\rangle)+4\kappa_1\mathrm{Im}(\langle{z_i, z_j}\rangle)\mathrm{Im}(\langle{z_c, z_i-z_j}\rangle).
\end{aligned}
\end{align}
Now, we choose indices $i_0$ and $j_0$ such that 
\begin{equation} \label{D-6}
\mathcal{L}(Z) =:|1-\langle{z_{i_0}, z_{j_0}}\rangle|^2.
\end{equation}
Note that in general, system \eqref{D-1} implies
\[ |1-\langle{z_i, z_j}\rangle|^2\neq|z_i-z_j|^2, \quad \mbox{for some $i, j$}. \]
It follows from \eqref{D-5} and \eqref{D-6} that  
\[
\frac{d}{dt}\mathcal{L}(Z)^2 \leq-2\kappa_0 \mathcal{L}(Z)^2\mathrm{Re}(\langle{x_{i_0}+x_{j_0}}, z_c\rangle)+8\kappa_1\mathcal{L}(Z)^2. 
\]
This yields
\[
\frac{d}{dt}\mathcal{L}(Z)\leq-\kappa_0\mathcal{L}(Z)\left(\mathrm{Re}(\langle{x_{i_0}+x_{j_0}, z_c}\rangle)-\frac{4\kappa_1}{\kappa_0}\right).
\]
We know that $\langle{z_i, z_c}\rangle$ converges to 1 for all $i$ under some condition. Finally, we can obtain following theorem.

\begin{theorem} \label{T4.1}
Suppose that the coupling strengths and initial data satisfy
\[ 0< \kappa_1 <  \frac{1}{4} \kappa_0, \quad \rho^{in} > \frac{N-2}{N},  \]
and let $\{z_j \}$ be a global solution to \eqref{D-1} with $\Omega_j = 0$. Then $\mathcal{D}(X)$ converges to 0 exponentially fast.
\end{theorem}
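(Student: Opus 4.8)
The plan is to feed the asymptotic information of Corollary \ref{C4.1} into the differential inequality for $\mathcal{L}(Z)$ derived immediately above the statement, namely
\[
\frac{d}{dt}\mathcal{L}(Z)\leq-\kappa_0\mathcal{L}(Z)\Big(\mathrm{Re}(\langle{z_{i_0}+z_{j_0}, z_c}\rangle)-\tfrac{4\kappa_1}{\kappa_0}\Big),
\]
where $(i_0,j_0)$ is an index pair realizing the maximum defining $\mathcal{L}(Z)$. The coefficient in the bracket is not positive a priori, so one cannot extract decay directly from $t=0$; instead I would first show that $\mathrm{Re}\langle{z_{i_0}+z_{j_0}, z_c}\rangle$ exceeds $4\kappa_1/\kappa_0$ for all large $t$, and only then run a Gr\"onwall argument. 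Finally the bound $\|z_i-z_j\|^2=2\mathrm{Re}(1-\langle{z_i,z_j}\rangle)\le 2|1-\langle{z_i,z_j}\rangle|$ from \eqref{B-4} transfers the exponential decay of $\mathcal{L}(Z)$ to the diameter $\mathcal{D}(Z)$.

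The first and essential step is to upgrade the soft convergence of Corollary \ref{C4.1} into genuine single-point aggregation. By Corollary \ref{C4.1}, $\rho(t)\to\rho^\infty>0$ and each $z_i(t)$ converges to a limit $z_i^\infty=\alpha_i z_c^\infty$ with $\alpha_i$ real and nonzero; since $\|z_i^\infty\|=1$ this forces $\alpha_i=\pm 1/\rho^\infty$, so the limit configuration splits into two antipodal clusters. Writing $l$ for the size of the cluster with $\alpha_i>0$, averaging gives $z_c^\infty=\tfrac{2l-N}{N\rho^\infty}z_c^\infty$, hence $\rho^\infty=\tfrac{2l-N}{N}$ with $l>N/2$. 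Because $\rho$ is non-decreasing (Lemma \ref{L4.1}), $\rho^\infty\ge\rho^{in}>\tfrac{N-2}{N}$, which forces $2l-N>N-2$, i.e. $l=N$; thus $\rho^\infty=1$ and $z_i^\infty=z_c^\infty$ for every $i$. In particular $\langle{z_i(t),z_c(t)}\rangle\to\|z_c^\infty\|^2=1$ for each $i$, exactly as recorded in the first Remark after Corollary \ref{C4.1}.

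With this in hand, since there are finitely many indices, $\min_i\mathrm{Re}\langle{z_i,z_c}\rangle\to 1$, so $\mathrm{Re}\langle{z_{i_0}+z_{j_0},z_c}\rangle\to 2$. The hypothesis $\kappa_1<\tfrac14\kappa_0$ gives $\tfrac{4\kappa_1}{\kappa_0}<1$, so fixing any $c\in(0,\,2-\tfrac{4\kappa_1}{\kappa_0})$ there is a time $T_0$ with $\mathrm{Re}\langle{z_{i_0}+z_{j_0},z_c}\rangle-\tfrac{4\kappa_1}{\kappa_0}\ge c$ for all $t\ge T_0$. The differential inequality then reads $\tfrac{d}{dt}\mathcal{L}(Z)\le-\kappa_0 c\,\mathcal{L}(Z)$ on $[T_0,\infty)$, whence $\mathcal{L}(Z(t))\le\mathcal{L}(Z(T_0))e^{-\kappa_0 c(t-T_0)}$, and combined with \eqref{B-4} this yields exponential convergence of $\mathcal{D}(Z)$ to $0$ (the bounded segment $[0,T_0]$ is harmless after adjusting constants).

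The main obstacle is precisely the cluster-counting of the second paragraph: the Lyapunov/Barbalat analysis of Corollary \ref{C4.1} by itself only rules out overlap within each antipodal cluster and still permits a nontrivial bipolar limit, so one genuinely needs the order-parameter threshold $\rho^{in}>\tfrac{N-2}{N}$ together with the monotonicity of $\rho$ to eliminate the two-cluster state and guarantee $\langle z_i,z_c\rangle\to 1$. A secondary technical point is that $\mathcal{L}(Z)$ is a maximum of smooth functions, so the inequality must be read via the upper Dini derivative with a possibly time-varying active pair $(i_0,j_0)$; this is handled by the standard envelope argument and does not affect the conclusion.
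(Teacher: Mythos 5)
Your proposal is correct and follows essentially the same route as the paper: establish $\langle z_i, z_c\rangle \to 1$ via the limit-cluster analysis of Corollary \ref{C4.1} together with the threshold $\rho^{in}>\tfrac{N-2}{N}$ and the monotonicity of $\rho$, then feed this into the differential inequality for $\mathcal{L}(Z)$ and apply Gr\"onwall on $[T_0,\infty)$. In fact you supply details the paper's proof leaves implicit (the explicit cluster-counting forcing $l=N$, the adjustment of constants over $[0,T_0]$, and the Dini-derivative reading of the max), so the argument is, if anything, more complete than the published one.
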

\begin{proof}
Since $\langle{z_i, z_c}\rangle$ converges to 1 for all $i$ as $t \to \infty$, we can obtain that  for some $T, \varepsilon>0$
\[
\mathrm{Re}(\langle{z_{i}(t)+z_{j}(t), z_c}\rangle)-\frac{4\kappa_1}{\kappa_0}>\varepsilon, \quad \mbox{for all $t > 0$}.
\]
 This yields
\[
\frac{1}{\mathcal{L}(Z(t))}\frac{d}{dt}{\mathcal{L}(Z(t))}\leq-\kappa_0 \varepsilon,
\]
for all $t>T$. Hence, we can conclude that
\[
\mathcal{L}(Z(t))<C\exp(-\kappa_0\varepsilon t)
\]
for some positive $C>0$. 
\end{proof}
\begin{remark} \label{R4.2}
1. For the special case with $\kappa_1 = -\kappa_0$, system \eqref{D-1} is a gradient flow. More precisely, note that 
\begin{align}
\begin{aligned} \label{D-7}
\dot{z}_j&= \kappa_0(\langle{z_j, z_j}\rangle z_c-\langle{z_c, z_j}\rangle z_j)+\kappa_1(\langle{z_j, z_c}\rangle-\langle{z_c, z_j}\rangle)z_j\\
&= \kappa_0( z_c-\langle{z_c, z_j}\rangle z_j)+\kappa_1(\langle{z_j, z_c}\rangle-\langle{z_c, z_j}\rangle)z_j\\
&= \kappa_0(z_c-\langle{z_j, z_c}\rangle z_j)+(\kappa_0+\kappa_1)(\langle{z_j, z_c}\rangle-\langle{z_c, z_j}\rangle)z_j\\
&= \kappa_0 \mathbb{P}^{\perp}_{z_j}z_c+(\kappa_0+\kappa_1)(\langle{z_j, z_c}\rangle-\langle{z_c, z_j}\rangle)z_j\\
&= \kappa_0 \mathbb{P}^{\perp}_{z_j}z_c+2(\kappa_0+\kappa_1)\mathrm{Im}(\langle{z_j, z_c}\rangle))z_j,
\end{aligned}
\end{align}
{where $\mathbb{P}^\perp$ is orthogonal projection of $z_c$ onto the tangent plane of hermitian unit sphere at $z_j$.} Hence, for $\kappa_0 + \kappa_1 = 0$, relation \eqref{D-7} shows 
\[ \dot{z}_j=\kappa_0\mathbb{P}^{\perp}_{z_j}z_c. \]

\noindent {2.~Emergent dynamics for the Lohe sphere model has been mostly studied on the complete network. However, there are very few literature \cite{Z-Z-Q, Z-Z} dealing with emergent dynamics of the Lohe sphere model over non all-to-all networks in which equilibria and consensus were studied for identical oscillators.}
\end{remark}

\subsection{Uniform $\ell^p$-stability} \label{sec:4.2}
In this subsection, we study the uniform $\ell^p$-stability of \eqref{D-1} with respect to initial data. For this, let $Z = \{ z_j \}$ and ${\tilde Z} = \{ {\tilde z}_j \}$ be two global solutions to \eqref{D-1}. Then, they satisfy 
\begin{align}
\begin{aligned} \label{D-8}
\dot{z}_j &= \frac{\kappa_0}{N}\sum_{k=1}^N(z_k-h_{kj}z_j)+\frac{\kappa_1}{N}\sum_{k=1}^N(h_{jk}-h_{kj})z_j,\\
\dot{\tilde{z}}_j&= \frac{\kappa_0}{N}\sum_{k=1}^N(\tilde{z}_k-\tilde{h}_{kj}\tilde{z}_j)+\frac{\kappa_1}{N}\sum_{k=1}^N(\tilde{h}_{jk}-\tilde{h}_{kj})\tilde{z}_j,
\end{aligned}
\end{align}
where $h_{ij} = \langle z_i, z_j \rangle$, $\tilde{h}_{ij}=\langle \tilde{z}_i, \tilde{z}_j\rangle$. \newline

\noindent It follows from \eqref{D-8} that
\begin{align}
\begin{aligned} \label{D-9}
\frac{d}{dt}(\tilde{z}_j-z_j) &= \frac{\kappa_0}{N}\sum_{k=1}^N[(\tilde{z}_k-z_k)-(\tilde{h}_{kj}\tilde{z}_j-h_{kj}z_j)]  \\
&+\frac{\kappa_1}{N}\sum_{k=1}^N[(\tilde{h}_{jk}\tilde{z}_j-h_{jk}z_j)-(\tilde{h}_{kj}\tilde{z}_j-h_{kj}z_j)].
\end{aligned}
\end{align}
We take an inner product \eqref{D-9} with $\tilde{z}_j-z_j$ to get
\begin{align}
\begin{aligned} \label{B-19}
&\langle{\dot{\tilde{z}}_j-\dot{z}_j, \tilde{z}_j-z_j}\rangle+\langle{{\tilde{z}}_j-{z}_j, \dot{\tilde{z}}_j-\dot{z}_j}\rangle\\
& \hspace{1cm} =\frac{\kappa_0}{N}\sum_{k=1}^N\langle{(\tilde{z}_k-z_k)-(\tilde{h}_{kj}\tilde{z}_j-h_{kj}z_j), \tilde{z}_j-z_j}\rangle  \\
&  \hspace{1.2cm}+\frac{\kappa_0}{N}\sum_{k=1}^N\langle{\tilde{z}_j-z_j, (\tilde{z}_k-z_k)-(\tilde{h}_{kj}\tilde{z}_j-h_{kj}z_j)}\rangle)\\
& \hspace{1.2cm}+\frac{\kappa_1}{N}\sum_{k=1}^N\langle{(\tilde{h}_{jk}\tilde{z}_j-h_{jk}z_j)-(\tilde{h}_{kj}\tilde{z}_j-h_{kj}z_j), \tilde{z}_j-z_j }\rangle \\
& \hspace{1.2cm} +\frac{\kappa_1}{N}\sum_{k=1}^N\langle{\tilde{z}_j-z_j, (\tilde{h}_{jk}\tilde{z}_j-h_{jk}z_j)-(\tilde{h}_{kj}\tilde{z}_j-h_{kj}z_j)}\rangle.
\end{aligned}
\end{align}
The last two terms involving with $\kappa_1$ in \eqref{B-19} can be reduced as follows.
\begin{align*}
&\langle{(\tilde{h}_{jk}\tilde{z}_j-h_{jk}z_j)-(\tilde{h}_{kj}\tilde{z}_j-h_{kj}z_j), \tilde{z}_j-z_j }\rangle+\langle{\tilde{z}_j-z_j, (\tilde{h}_{jk}\tilde{z}_j-h_{jk}z_j)-(\tilde{h}_{kj}\tilde{z}_j-h_{kj}z_j)}\rangle\\
& \hspace{0.2cm} =\langle{\tilde{z}_j-z_j, (\tilde{h}_{jk}\tilde{z}_j-h_{jk}z_j)-(\tilde{h}_{kj}\tilde{z}_j-h_{kj}z_j)}\rangle+\mbox{(c. c.)}\\
&  \hspace{0.2cm}  =\tilde{h}_{jk}-h_{jk}\langle{\tilde{z}_j, z_j}\rangle-\tilde{h}_{kj}+h_{kj}\langle{\tilde{z}_j, z_j}\rangle-\tilde{h}_{jk}\langle{z_j, \tilde{z}_j}\rangle+h_{jk}+\tilde{h}_{kj}\langle{z_j, \tilde{z}_j}\rangle-h_{kj}+\mbox{(c. c.)}\\
& \hspace{0.2cm}  =-h_{jk}\langle{\tilde{z}_j, z_j}\rangle+h_{kj}\langle{\tilde{z}_j, z_j}\rangle-\tilde{h}_{jk}\langle{z_j, \tilde{z}_j}\rangle+\tilde{h}_{kj}\langle{z_j, \tilde{z}_j}\rangle+\mbox{(c. c.)}\\
& \hspace{0.2cm}  =(h_{kj}-h_{jk})\langle{\tilde{z}_j, z_j}\rangle+(\tilde{h}_{kj}-\tilde{h}_{jk})\langle{z_j, \tilde{z}_j}\rangle+\mbox{(c. c.)}\\
& \hspace{0.2cm}  =(h_{kj}-h_{jk}+\tilde{h}_{jk}-\tilde{h}_{kj})(\langle{\tilde{z}_j, z_j}\rangle-\langle{z_j, \tilde{z_j}}\rangle)\\
& \hspace{0.2cm}  =2\mathrm{Im}(h_{kj}-\tilde{h}_{kj})\mathrm{Im}(\langle{z_j, \tilde{z}_j}\rangle),
\end{align*}
where (c. c.) denotes a complex conjugate of the preceding terms.
Thus, one has
\begin{align}
\begin{aligned} \label{B-20}
\frac{d}{dt}\|\tilde{z}_j-z_j\|^2 &=\frac{\kappa_0}{N}\sum_{k=1}^N(\langle{\tilde{z}_k-z_k, \tilde{z}_j-z_j}\rangle+\langle{\tilde{z}_j-z_j, \tilde{z}_k-z_k}\rangle \\
&-\frac{\kappa_0}{N}\sum_{k=1}^N(\langle{\tilde{h}_{kj}(\tilde{z}_j-z_j), \tilde{z}_j-z_j}\rangle
+\langle{\tilde{z}_j-z_j}, \tilde{h}_{kj}(\tilde{z}_j-z_j) \rangle) \\
&+\frac{\kappa_0}{N}\sum_{k=1}^N(\langle{(h_{kj}-\tilde{h}_{kj})z_j, \tilde{z}_j-z_j}\rangle
+\langle{\tilde{z}_j-z_j, (h_{kj}-\tilde{h}_{kj})z_j}\rangle ) \\
&+\frac{2\kappa_1}{N}\sum_{k=1}^N\mathrm{Im}(h_{kj}-\tilde{h}_{kj})\mathrm{Im}(\langle{z_j, \tilde{z}_j}\rangle)\\
& =: {\mathcal I}_{21} + {\mathcal I}_{22} + {\mathcal I}_{23}+{\mathcal I}_{24}.
\end{aligned}
\end{align}
In the following lemma, we provide estimates for ${\mathcal I}_{2i}$.

\begin{lemma} \label{L4.2} The following estimates hold.
\begin{align*}
\begin{aligned}
& (i)~\mathcal{I}_{21} \leq\frac{2\kappa_0}{N}\sum_{k=1}^N|\tilde{z}_k-z_k|\cdot|\tilde{z}_j-z_j|, \quad \mathcal{I}_{22} =-\frac{2\kappa_0}{N}\sum_{k=1}^N\mathrm{Re}(\tilde{h}^{in}_{kj})|\tilde{z}_j-z_j|^2.  \\
& (ii)~\mathcal{I}_{23} =-\frac{\kappa_0}{N}\sum_{k=1}^N[\mathrm{Re}(h^{in}_{kj}-\tilde{h}^{in}_{kj})|z_j-\tilde{z}_j|^2+2\mathrm{Im}(h^{in}_{kj}-\tilde{h}^{in}_{kj})\mathrm{Im}(\langle{ \tilde{z}_j, z_j}\rangle)].
\end{aligned}
\end{align*}
\end{lemma}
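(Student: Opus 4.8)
The plan is to establish all three estimates by \emph{directly expanding the inner products}, using only the sesquilinearity of the Hermitian product on $\bbc^d$ (conjugate-linear in the first slot, linear in the second, so that $\langle b, a\rangle = \overline{\langle a, b\rangle}$) together with the unit-norm conservation $\|z_j(t)\| = \|\tilde z_j(t)\| = 1$ furnished by Lemma \ref{L2.1}. The single identity driving everything is $\langle a, b\rangle + \langle b, a\rangle = 2\,\mathrm{Re}\langle a, b\rangle$. Each of $\mathcal{I}_{21}, \mathcal{I}_{22}, \mathcal{I}_{23}$ in \eqref{B-20} already appears in the symmetrized ``term plus its conjugate'' shape, so no further symmetrization is needed beyond reading it off, and the correlations $h_{kj} = \langle z_k, z_j\rangle$, $\tilde h_{kj} = \langle \tilde z_k, \tilde z_j\rangle$ appearing in \eqref{D-8} play the role of scalar coefficients that factor out of the inner products.

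For $\mathcal{I}_{21}$ I would set $a = \tilde z_k - z_k$ and $b = \tilde z_j - z_j$, so that each summand equals $2\,\mathrm{Re}\langle \tilde z_k - z_k, \tilde z_j - z_j\rangle$, and then bound $\mathrm{Re}\langle a, b\rangle \le |\langle a, b\rangle| \le \|a\|\,\|b\|$ by Cauchy--Schwarz; summing over $k$ gives the claimed inequality. For $\mathcal{I}_{22}$ the scalar $\tilde h_{kj}$ is pulled out of each inner product, emerging conjugated from the first slot and unconjugated from the second, so the bracket collapses to $\big(\overline{\tilde h_{kj}} + \tilde h_{kj}\big)\|\tilde z_j - z_j\|^2 = 2\,\mathrm{Re}(\tilde h_{kj})\,\|\tilde z_j - z_j\|^2$, which is the asserted equality.

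The term $\mathcal{I}_{23}$ is where the real work lies. Writing $d_{kj} := h_{kj} - \tilde h_{kj}$, I would factor this scalar out and use $\langle z_j, z_j\rangle = 1$ to reduce $\langle z_j,\, \tilde z_j - z_j\rangle = \langle z_j, \tilde z_j\rangle - 1$ together with its conjugate partner $\langle \tilde z_j - z_j,\, z_j\rangle = \langle \tilde z_j, z_j\rangle - 1$. Setting $g := \langle \tilde z_j, z_j\rangle$ so that $\langle z_j, \tilde z_j\rangle = \bar g$, the bracket becomes $\overline{d_{kj}}(\bar g - 1) + d_{kj}(g - 1)$. Splitting $d_{kj}$ and $g$ into real and imaginary parts shows the spurious imaginary contributions cancel and the sum reduces to $2\,\mathrm{Re}(d_{kj})\big(\mathrm{Re}(g) - 1\big) - 2\,\mathrm{Im}(d_{kj})\,\mathrm{Im}(g)$. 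Finally, substituting the unit-norm identity $\|z_j - \tilde z_j\|^2 = 2\big(1 - \mathrm{Re}\,g\big)$ converts the first piece into $-\mathrm{Re}(d_{kj})\,\|z_j - \tilde z_j\|^2$, producing exactly the stated formula after summation in $k$.

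\textbf{Main obstacle.} The only genuinely error-prone step is the real/imaginary bookkeeping in $\mathcal{I}_{23}$: one must track the conjugations carefully so that the two output terms separate cleanly into the $\|z_j - \tilde z_j\|^2$ contribution and the $\mathrm{Im}\cdot\mathrm{Im}$ cross contribution, and verify that the extraneous imaginary parts indeed cancel. The other two identities are immediate once the sesquilinearity convention and the constraint $\|z_j\| = 1$ are in hand.
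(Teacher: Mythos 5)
Your proposal is correct and follows essentially the same route as the paper: expand each $\mathcal{I}_{2i}$ using sesquilinearity so that each conjugate pair collapses to twice a real part, apply Cauchy--Schwarz for $\mathcal{I}_{21}$, pull the scalar $\tilde h_{kj}$ out for $\mathcal{I}_{22}$, and for $\mathcal{I}_{23}$ split $\mathrm{Re}\big((h_{kj}-\tilde h_{kj})\langle z_j-\tilde z_j, z_j\rangle\big)$ into real and imaginary parts and use $\|z_j-\tilde z_j\|^2 = 2\big(1-\mathrm{Re}\langle \tilde z_j, z_j\rangle\big)$. (The superscript ``in'' on $h_{kj}$, $\tilde h_{kj}$ in the lemma statement is a typo in the paper; your reading of them as the time-dependent correlations from \eqref{B-20} is the intended one.)
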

\begin{proof} $\bullet$~(Estimates on ${\mathcal I}_{2i},~i=1,2$): By direct estimates, one has 
\begin{align*}
\begin{aligned}
\mathcal{I}_{21} &=\frac{2\kappa_0}{N}\sum_{k=1}^N\mathrm{Re}(\langle{\tilde{z}_k-z_k, \tilde{z}_j-z_j)}\rangle
\leq\frac{2\kappa_0}{N}\sum_{k=1}^N|\tilde{z}_k-z_k|\cdot|\tilde{z}_j-z_j|, \\
\mathcal{I}_{22} &=-\frac{2\kappa_0}{N}\sum_{k=1}^N\mathrm{Re}(\tilde{h}^{in}_{kj})|\tilde{z}_j-z_j|^2.
\end{aligned}
\end{align*}

\vspace{0.2cm}

\noindent $\bullet$~(Estimates on ${\mathcal I}_{23}$): Similar to other terms, one has 
\begin{align*}
\mathcal{I}_{23} &=\frac{2\kappa_0}{N}\sum_{k=1}^N\mathrm{Re}(\langle{(h^{in}_{kj}-\tilde{h}^{in}_{kj})z_j, \tilde{z}_j-z_j}\rangle) =-\frac{2\kappa_0}{N}\sum_{k=1}^N\mathrm{Re}((h^{in}_{kj}-\tilde{h}^{in}_{kj})\langle{ z_j-\tilde{z}_j, z_j}\rangle)\\
&=-\frac{2\kappa_0}{N}\sum_{k=1}^N[\mathrm{Re}(h^{in}_{kj}-\tilde{h}^{in}_{kj})\mathrm{Re}(\langle{ z_j-\tilde{z}_j, z_j}\rangle)-\mathrm{Im}(h^{in}_{kj}-\tilde{h}^{in}_{kj})\mathrm{Im}(\langle{ z_j-\tilde{z}_j, z_j}\rangle)]\\
&=-\frac{\kappa_0}{N}\sum_{k=1}^N[\mathrm{Re}(h^{in}_{kj}-\tilde{h}^{in}_{kj})|z_j-\tilde{z}_j|^2+2\mathrm{Im}(h_{kj}-\tilde{h}_{kj})\mathrm{Im}(\langle{ \tilde{z}_j, z_j}\rangle)].
\end{align*}
\end{proof}
Now, we are ready to provide our uniform stability estimate as follows.
\begin{theorem} \label{T4.2}
Suppose that $\kappa_0$, $\kappa_1$, $p$ and initial data $Z^{in}, \tilde{Z}^{in}$ satisfy
\begin{align*}
\begin{aligned}
& 0 <\kappa_1<\frac{\kappa_0}{4}, \quad  z_i^{in} \neq z_j^{in}, \quad \tilde{z}_i^{in} \neq \tilde{z}_j^{in}, \quad \forall i\neq j, \\
&p\in[1, \infty), \quad   \rho^{in}>\frac{N-2}{N},\quad \tilde{\rho}^{in}>\frac{N-2}{N},
\end{aligned}
\end{align*}
and let $Z=\{z_j \}$ and $\tilde{Z}= \{ {\tilde z}_j \}$ be global solutions to \eqref{D-1} with the initial data $Z^{in}$ and $\tilde{Z}^{in}$, respectively. Then, there exists a constant $G>0$ independent of $t$ such that
\[
\sup_{0\leq t<\infty}\|Z(t)-\tilde{Z}(t)\|_p\leq G\| Z^{in} -\tilde{Z}^{in} \| _p.
\]
\end{theorem}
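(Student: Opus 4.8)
The plan is to combine the complete-aggregation result of Theorem \ref{T4.1}, applied to both ensembles $Z$ and $\tilde Z$, with a Gronwall argument for the $\ell^p$-norm of the difference built on the identity \eqref{B-20} and Lemma \ref{L4.2}. Since $\rho^{in},\tilde\rho^{in}>(N-2)/N$, Theorem \ref{T4.1} yields complete aggregation of each ensemble, so all correlations $h_{kj}=\langle z_k,z_j\rangle$ and $\tilde h_{kj}=\langle\tilde z_k,\tilde z_j\rangle$ tend to $1$; in particular, by Corollary \ref{C4.1}, $\mathrm{Re}(h_{kj}),\mathrm{Re}(\tilde h_{kj})\to 1$ and $\mathrm{Im}(h_{kj}),\mathrm{Im}(\tilde h_{kj})\to 0$. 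Hence there is a finite time $T^\ast$ after which these real parts are close to $1$ and the imaginary parts are small, and I would split $[0,\infty)=[0,T^\ast]\cup[T^\ast,\infty)$ and treat the two pieces differently.

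Writing $d_j:=\|z_j-\tilde z_j\|$, I would start from \eqref{B-20}, insert the bounds of Lemma \ref{L4.2} for $\mathcal{I}_{21},\mathcal{I}_{22},\mathcal{I}_{23}$ together with the expression for $\mathcal{I}_{24}$, and use the elementary inequalities $|\mathrm{Im}(h_{kj}-\tilde h_{kj})|\le d_k+d_j$ and $|\mathrm{Im}\langle\tilde z_j,z_j\rangle|=|\mathrm{Im}\langle\tilde z_j-z_j,z_j\rangle|\le d_j$, both proportional to the difference and both vanishing identically when $Z=\tilde Z$. This yields a closed differential inequality of the form $\frac{d}{dt}d_j^2\le a_j(t)\,d_j^2+\frac{b}{N}\sum_k d_k d_j$, where $a_j(t)=-\frac{\kappa_0}{N}\sum_k(\mathrm{Re}\,h_{kj}+\mathrm{Re}\,\tilde h_{kj})+O(\kappa_0+\kappa_1)$. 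Passing to the $\ell^p$-norm through $\frac{d}{dt}\|Z-\tilde Z\|_p^p=\frac{p}{2}\sum_j d_j^{p-2}\frac{d}{dt}d_j^2$ and closing the coupling sum by Hölder's inequality, in the form $\sum_j d_j^{p-1}\sum_k d_k\le N\,\|Z-\tilde Z\|_p^p$, I would obtain a scalar inequality $\frac{d}{dt}\|Z-\tilde Z\|_p\le \Lambda(t)\,\|Z-\tilde Z\|_p$. On the bounded interval $[0,T^\ast]$ all correlations are bounded by $1$ in modulus, so $\Lambda$ is bounded and Gronwall gives $\|Z(t)-\tilde Z(t)\|_p\le e^{\Lambda_0 T^\ast}\|Z^{in}-\tilde Z^{in}\|_p$ there.

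The decisive step is the interval $[T^\ast,\infty)$. Here the real-part dissipation $-\frac{\kappa_0}{N}\sum_k(\mathrm{Re}\,h_{kj}+\mathrm{Re}\,\tilde h_{kj})\,d_j^2$ approaches $-2\kappa_0 d_j^2$, and after the $\ell^p$-summation this $\kappa_0$-dissipation exactly balances the symmetric coupling $\frac{2\kappa_0}{N}\sum_k d_k d_j$, precisely as in the real Lohe sphere model where the diagonal and coupling contributions cancel. What remains is the genuinely complex contribution coming from the imaginary cross term $\mathrm{Im}(\mathcal{I}_{23})+\mathcal{I}_{24}=-\frac{2(\kappa_0+\kappa_1)}{N}\sum_k \mathrm{Im}(h_{kj}-\tilde h_{kj})\,\mathrm{Im}\langle\tilde z_j,z_j\rangle$. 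I would estimate this against the available dissipation and show that the smallness $\kappa_1<\frac14\kappa_0$ furnishes the margin needed to keep the net $\ell^p$-coefficient $\Lambda(t)$ non-positive, or at least integrable in $t$, on $[T^\ast,\infty)$; the factor $4$ in the hypothesis is exactly the threshold that appears when $\langle z_i,z_c\rangle\to1$ forces $\mathrm{Re}\langle z_i+z_j,z_c\rangle\to2$, just as in the proof of Theorem \ref{T4.1}. Consequently $\|Z(t)-\tilde Z(t)\|_p\le C\,\|Z(T^\ast)-\tilde Z(T^\ast)\|_p$ for $t\ge T^\ast$, and combining with the estimate on $[0,T^\ast]$ produces a single time-independent constant $G$.

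The main obstacle is precisely the control of this imaginary cross-coupling term $\mathrm{Im}(\mathcal{I}_{23})+\mathcal{I}_{24}$: it is the new feature absent in the real Lohe sphere model and, bounded crudely, it produces a positive contribution of order $(\kappa_0+\kappa_1)\|Z-\tilde Z\|_p$ that would destroy the uniform bound. Overcoming it requires exploiting two facts at once, namely that $\mathrm{Im}(h_{kj}-\tilde h_{kj})$ is proportional to the state difference (vanishing when $Z=\tilde Z$) and that after aggregation $\mathrm{Im}(h_{kj}),\mathrm{Im}(\tilde h_{kj})$ are small, so that this term can be absorbed into the $\kappa_0$-dissipation under $\kappa_1<\kappa_0/4$. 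The remaining ingredients, namely the finite-time Gronwall estimate on $[0,T^\ast]$, the Hölder closure of the coupling sum, and the verification that $\Lambda$ is bounded on the transient, are routine.
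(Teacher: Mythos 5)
Your proposal is correct and follows essentially the same route as the paper: the decomposition \eqref{B-20} with Lemma \ref{L4.2}, the H\"older closure of the coupling sum, and the exponential aggregation of Theorem \ref{T4.1} (applied to both ensembles) feeding a Gronwall inequality with a time-dependent coefficient. One clarification on your ``decisive step'': since the H\"older term $+\kappa_0 p\|Z-\tilde{Z}\|_p^p$ and the limiting dissipation $-\kappa_0 p\|Z-\tilde{Z}\|_p^p$ cancel exactly, no dissipation margin survives to absorb the imaginary cross term; the paper instead takes your fallback option, bounding $|\mathrm{Im}(h_{cj}-\tilde{h}_{cj})|\le |\mathrm{Im}(h_{cj})|+|\mathrm{Im}(\tilde{h}_{cj})|\le E_0 e^{-D_0 t/2}$ via Theorem \ref{T4.1} so that the net Gronwall coefficient is integrable on all of $[0,\infty)$ (making your splitting at $T^\ast$ unnecessary), with the hypothesis $\kappa_1<\kappa_0/4$ used only to invoke Theorem \ref{T4.1}.
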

\begin{proof} For notational simplicity, we set
\[
h_{ij} :=\langle{z_i, z_j}\rangle,\quad \tilde{h}_{ij} :=\langle{\tilde{z}_i, \tilde{z}_j}\rangle.
\]
Then, in \eqref{B-20} one has
\begin{align*}
\begin{aligned}
&\frac{d}{dt}\|\tilde{z}_j-z_j\|^2  =\mathcal{I}_{21} +\mathcal{I}_{22} +\mathcal{I}_{23}+\mathcal{I}_{24} \\
& \hspace{0.5cm} \leq \frac{2\kappa_0}{N}\sum_{k=1}^N|\tilde{z}_k-z_k|\cdot|\tilde{z}_j-z_j|-\frac{2\kappa_0}{N}\sum_{k=1}^N\mathrm{Re}(\tilde{h}_{kj})|\tilde{z}_j-z_j|^2\\
& \hspace{0.7cm} -\frac{\kappa_0}{N}\sum_{k=1}^N[\mathrm{Re}(h_{kj}-\tilde{h}_{kj})|z_j-\tilde{z}_j|^2+2\mathrm{Im}(h_{kj}-\tilde{h}_{kj})\mathrm{Im}(\langle{ \tilde{z}_j, z_j}\rangle)]\\
& \hspace{0.5cm} =\frac{2\kappa_0}{N}\sum_{k=1}^N|\tilde{z}_k-z_k|\cdot|\tilde{z}_j-z_j|-\frac{\kappa_0}{N}\sum_{k=1}^N\mathrm{Re}(\tilde{h}_{kj}+h_{kj})|\tilde{z}_j-z_j|^2\\
& \hspace{0.7cm} -\frac{2\kappa_0}{N}\sum_{k=1}^N\mathrm{Im}(h_{kj}-\tilde{h}_{kj})\mathrm{Im}(\langle{\tilde{z}_j, z_j}\rangle)+\frac{2\kappa_1}{N}\sum_{k=1}^N\mathrm{Im}(h_{kj}-\tilde{h}_{kj})\mathrm{Im}(\langle{z_j, \tilde{z}_j}\rangle)\\
& \hspace{0.5cm} \leq\frac{2\kappa_0}{N}\sum_{k=1}^N|\tilde{z}_k-z_k|\cdot|\tilde{z}_j-z_j|-\frac{\kappa_0}{N}\sum_{k=1}^N\mathrm{Re}(\tilde{h}_{kj}+h_{kj})|\tilde{z}_j-z_j|^2\\
& \hspace{0.7cm} +{(\kappa_0+\kappa_1)}|\mathrm{Im}(h_{cj}-\tilde{h}_{cj})|\cdot|\tilde{z}_j-z_j|^2.
\end{aligned}
\end{align*}
This yields
\begin{align*}
\begin{aligned}
\frac{d}{dt}\sum_{j=1}^N|\tilde{z}_j-z_j|^p &\leq \frac{\kappa_0 p}{N}\sum_{j=1}^N\sum_{k=1}^N|\tilde{z}_k-z_k|\cdot|\tilde{z}_j-z_j|^{p-1} \\
&-\frac{\kappa_0p}{2N}\sum_{j=1}^N\sum_{k=1}^N\mathrm{Re}(\tilde{h}_{kj}+h_{kj})|\tilde{z}_j-z_j|^p\\
&+\frac{(\kappa_0+\kappa_1)p}{2}\sum_{j=1}^N|\mathrm{Im}(h_{cj}-\tilde{h}_{cj})|\cdot|\tilde{z}_j-z_j|^p\\
&\leq \kappa_0 p \|Z-\tilde{Z}\|_p^p-\kappa_0p(1-C_0e^{-D_0 t})\|Z-\tilde{Z}\|_p^p  \\
&+{(\kappa_0+\kappa_1)p}\max_{k}(\max[\mathrm{Im}(h_{ck}), \mathrm{Im}(\tilde{h}_{ck})])\cdot\|\tilde{Z}-Z\|^p_p.
\end{aligned}
\end{align*}
Now, we use 
\[
\mathrm{Re}(h_{ij})^2+\mathrm{Im}(h_{ij})^2\leq1,
\]
and the exponential decay of $\mathrm{Re}(1-h_{ij})$ to 0 in Theorem \ref{T4.1} to see
\[
\mathrm{Re}(1-h_{ij})\leq C_0e^{-D_0t}.
\]
If we substitute above estimate into the first inequality, we obtain
\[
\mathrm{Im}(h_{ij})^2\leq(1-\mathrm{Re}(h_{ij}))(1+\mathrm{Re}(h_{ij}))=2C_0e^{-D_0t}.
\]
So we can obtain
\[
\mathrm{Im}(h_{ij})\leq(1-\mathrm{Re}(h_{ij}))(1+\mathrm{Re}(h_{ij}))=E_0e^{-D_0t/2}\leq E_0 e^{-D_0 t}.
\]
Similarly, we have
\[
\mathrm{Im}(\tilde{h}_{ij})\leq(1-\mathrm{Re}(\tilde{h}_{ij}))(1+\mathrm{Re}(\tilde{h}_{ij}))=E_0e^{-D_0t/2}\leq E_0 e^{-D_0 t}.
\]
Finally, we can obtain 
\[
\frac{d}{dt}\|\tilde{Z}- Z\|_p^p\leq\kappa_0 p(C_0+E_0)e^{-D_0t}\|\tilde{Z}-Z\|_p^p.
\]
By Gronwall's lemma, we establish the uniform $l_p$-stability:
\[
\sup_{0 \leq t < \infty} \|Z(t)-\tilde{Z}(t)\|_p\leq G\|Z^{in}-\tilde{Z}^{in}\|_p.
\]
\end{proof}

\section{Conclusion} \label{sec:5}
\setcounter{equation}{0} 
In this paper, we studied an emergent dynamics of the Lohe hermitian sphere model arising from the Lohe tensor model as a special case for the ensemble of rank-1 tensors. {There might be several ways to lift the Lohe sphere model on $\bbr^d$ to the corresponding model on $\bbc^d$. Then natural question is how to propose such a lifting naturally and logically. For this, we take a top down approach, namely reduction from the Lohe tensor model.} {For real rank-1 tensors, the coupling term involving with $\kappa_1$ in the Lohe hermitian sphere model vanishes due to the commutativity of an inner product, and we recover the Lohe sphere model introduced in earlier literature.}
Interestingly, the Lohe hermitian sphere model includes the Lohe couplings and Kuramoto coupling together. In fact, when the complex-valued tensor collapses to the real-valued tensor, the cubic coupling terms producing Kuramoto dynamics disappear and only the Lohe coupling terms survive. For the proposed Lohe hermitian sphere model, we provide a conserved quantity and emergent dynamics of sub-systems involving with a single coupling term in terms of system parameters and initial data. {More precisely, our first result is an existence of non-trivial conserved quantity, namely the cross-ratio like quantity:  for a configuration $Z = (z_1, \cdots, z_N)$ with $z_i \not = z_j$ for $i \not = j$, 
\[  \frac{d}{dt} \frac{(1-\langle{z_i, z_j}\rangle)(1-\langle{z_k, z_l}\rangle)}{(1-\langle{z_i, z_l}\rangle)(1-\langle{z_k, z_j}\rangle)} = 0,
 \quad t > 0. \]
Our second results deal with the following two subsystems:
\[ \dot{z}_j= \kappa_0(\langle{z_j, z_j}\rangle z_c-\langle{z_c, z_j}\rangle z_j), \qquad \dot{z}_j=\kappa_1(\langle{z_j, z_c}\rangle-\langle{z_c, z_j}\rangle)z_j.
\]
The first subsystem coincides with the Lohe sphere model in $\bbr^d$ so that similar emergent estimates were obtained. In contrast, the second subsystem is reminiscent of the Kuramoto model with heterogeneous frustrations for identical oscillators. Third, we provided a sufficient framework leading to the exponential aggregation estimate for the complex Lohe sphere model which combining the above two subsystems. Under the same framework in which exponential aggregation is guaranteed, we also show that the solution operator is uniformly $\ell^p$-stable with respect to initial data with $p \in [1, \infty)$. }
 
 There are some issues which were not discussed in this paper. For example, we have not discussed heterogeneous ensembles with distributed $\Omega_j$'s and detailed emerging patterns arising from initial data, effect of network structures and the kinetic mean-field limit from the Lohe hermitian sphere model. These interesting issues will be treated in a future work.



\begin{thebibliography}{99}

\bibitem{A-B} Acebron, J. A., Bonilla, L. L., P\'{e}rez Vicente, C. J. P., Ritort, F. and Spigler, R.: \textit{The Kuramoto model: A simple paradigm for synchronization phenomena.} Rev. Mod. Phys. \textbf{77} (2005), 137-185.

\bibitem{A-R} Aeyels,  D. and Rogge, J.: \textit{Stability of phase locking and existence of frequency in networks of globally coupled oscillators.} Prog. Theor. Phys. {\bf 112} (2004), 921-941.

\bibitem{A-B-F} Albi, G., Bellomo, N., Fermo, L., Ha, S.-Y., Kim, J., Pareschi, L., Poyato, D. and Soler, J.: \textit{Vehicular traffic, crowds and swarms: From kinetic theory and multiscale methods to applications and research perspectives.} Math. Models Methods Appl. Sci.  {\bf 29} (2019), 1901-2005.

\bibitem{B-H} Bellomo, N. and Ha, S.-Y.: \textit{A quest toward a mathematical theory of the dynamics of swarms}, Math. Models Methods Appl. Sci. {\bf 27} (2017), 745-770.

\bibitem{B-C-M} Benedetto, D., Caglioti, E. and Montemagno, U.: \textit{On the complete phase
    synchronization for the Kuramoto model in the mean-field limit.}    Commun. Math. Sci. {\bf 13} (2015), 1775-1786.
    
\bibitem{B-T1} Bernoff, A. J., Topaz, C. M.: \textit{Nonlocal aggregation models: a primer of swarm equilibria.} SIAM Rev. {\bf 55} (2013), 709-747.

\bibitem{B-T2} Bernoff, A. J., Topaz, C. M.: \textit{A primer of swarm equilibria.} SIAM J. Appl. Dyn. Syst. {\bf 10} (2011), 212-250.

\bibitem{B-C-S} Bronski, J., Carty, T. and Simpson, S.: \textit{A matrix valued Kuramoto model.} J. Stat. Phys. {\bf 178} (2020),  595-624.

\bibitem{B-B} Buck, J. and  Buck, E.: \textit{Biology of synchronous flashing of fireflies.} Nature {\bf 211} (1966), 562.

\bibitem{C-C-H} Chi, D., Choi, S.-H. and Ha, S.-Y.: \textit{Emergent behaviors of a holonomic particle system on a sphere.} J. Math. Phys. {\bf 55} (2014), 052703.

\bibitem{C-H1} Choi, S.-H. and Ha, S.-Y.: \textit{Emergent behaviors of quantum Lohe oscillators with all-to-all couplings.}  J. Nonlinear Sci. {\bf 25} (2015), 1257-1283.

\bibitem{C-H2} Choi, S.-H. and Ha, S.-Y.: \textit{Time-delayed interactions and synchronization of identical Lohe oscillators.} Quart. Appl. Math. {\bf 74} (2016), 297-319.

\bibitem{C-H3} Choi, S.-H. and Ha, S.-Y.: \textit{Large-time dynamics of the asymptotic Lohe model with a small-time delay.} J. Phys. A: Mathematical and Theoretical. {\bf 48} (2015), 425101.

\bibitem{C-H4} Choi, S.-H. and Ha, S.-Y.: \textit{Quantum synchronization of the Sch\"{o}dinger-Lohe model.} J. Phys. A: Mathematical and Theoretical  {\bf 47} (2014), 355104.

\bibitem{C-H5} Choi, S.-H. and Ha, S.-Y.: \textit{Complete entrainment of Lohe oscillators under attractive and repulsive couplings.} SIAM. J. App. Dyn. {\bf 13} (2013), 1417-1441.

\bibitem {C-H-J-K} Choi, Y., Ha, S.-Y., Jung, S. and Kim, Y.: \textit{Asymptotic formation and orbital stability of phase-locked states for the Kuramoto model.} Physica D {\bf 241} (2012), 735-754.





\bibitem{C-S} Chopra, N. and Spong, M. W.: \textit{On exponential synchronization of Kuramoto oscillators.} IEEE Trans. Automatic Control {\bf 54} (2009), 353-357.

\bibitem{D-F-M-T}Degond, P., Frouvelle, A., Merino-Aceituno, S. and Trescases, A.: \textit{Quaternions in collective dynamics.} Multiscale Model. Simul. {\bf 16} (2018), 28--77.

\bibitem{D-F-M} Degond, P., Frouvelle, A. and Merino-Aceituno, S.: \textit{A new flocking model through body attitude coordination.} Math. Models Methods Appl. Sci. {\bf 27} (2017), 1005--1049. 

\bibitem{De} DeVille, L.: \textit{Aggregation and stability for quantum Kuramoto.}  J. Stat. Phys. {\bf174} (2019), 160--187. 

\bibitem{D-X} Dong, J.-G. and Xue, X.: \textit{Synchronization analysis of Kuramoto oscillators.}  Commun. Math. Sci. {\bf 11} (2013), 465-480.

\bibitem{D-B1}  D\"{o}rfler, F. and Bullo, F.: \textit{Synchronization in complex networks of phase oscillators: A survey.} Automatica \textbf{50} (2014), 1539-1564.

\bibitem{D-B0} D\"{o}rfler, F. and Bullo, F.: \textit{Exploring synchronization in complex oscillator networks.} IEEE 51st Annual Conference on Decision and Control (CDC) (2012), 7157-7170.

\bibitem {D-B} D\"{o}rfler, F. and Bullo, F.: \textit{On the critical coupling for Kuramoto oscillators.} SIAM. J. Appl. Dyn. Syst. \textbf{10} (2011), 1070-1099.






\bibitem{H-K-P-R} Ha, S.-Y., Kim, D., Park, H. and Ryoo, S. W.: \textit{Constants of motions for the finite-dimensional Lohe type models with frustration and applications to emergent dynamics.} Submitted.

\bibitem{H-K} Ha, S.-Y. and Kim, D.: \textit{Emergent behavior of a second-order Lohe matrix model on the unitary group.} J. Stat. Phys. {\bf 175} (2019), 904-931.

\bibitem{H-K-R} Ha, S.-Y., Kim, H. W. and Ryoo, S. W.: \textit{Emergence of phase-locked states for the Kuramoto model in a large coupling regime.} Commun. Math. Sci. {\bf 14} (2016), 1073-1091.

\bibitem{H-K-P-Z} Ha, S.-Y., Ko, D., Park, J. and Zhang, X.: \textit{Collective synchronization of classical and quantum oscillators.} EMS Surveys in Mathematical Sciences {\bf 3} (2016), 209-267.

\bibitem{H-K-RY} Ha, S.-Y., Ko, D. and Ryoo, S. W.: \textit{On the relaxation dynamics of Lohe oscillators on some Riemannian manifolds.} J. Stat. Phys. {\bf 172} (2018), 1427-1478.

\bibitem{H-L-X}  Ha, S.-Y., Li, Z. and Xue, X.: \textit{Formation of phase-locked states in a population of locally interacting Kuramoto oscillators.} J. Differential Equations {\bf 255} (2013), 3053-3070.

\bibitem{H-P} Ha, S.-Y. and Park, H.: \textit{Emergent behaviors of Lohe tensor flocks}. J. Stat. Phys. {\bf 178} (2020), 1268-1292.


\bibitem{H-R} Ha, S.-Y. and Ryoo, S.W.: \textit{On the emergence and orbital Stability of phase-locked states for the Lohe model} J. Stat. Phys {\bf 163} (2016), 411-439.











\bibitem{J-C} Ja\'{c}imovi\'{c}, V. and Crnki\'{c}, A: \textit{Low-dimensional dynamics in non-Abelian Kuramoto model on the 3-sphere.} Chaos {\bf 28} (2018), 083105.

\bibitem{Ku1} Kuramoto, Y.: \textit{Chemical oscillations, waves and turbulence.} Springer-Verlag, Berlin, 1984.

\bibitem{Ku2} Kuramoto, Y.: \textit{International symposium on mathematical problems in mathematical physics.} Lecture Notes Theor. Phys.  \textbf{30} (1975), 420.




\bibitem{Lo-0} Lohe, M. A.: \textit{Systems of matrix Riccati equations, linear fractional transformations, partial integrability and synchronization.} J. Math. Phys. {\bf 60} (2019), 072701.

\bibitem{Lo-1} Lohe, M. A.: \textit{Quantum synchronization over quantum networks.} J. Phys. A: Math. Theor. {\bf 43} (2010), 465301.

\bibitem{Lo-2} Lohe, M. A.: \textit{Non-abelian Kuramoto model and synchronization.} J. Phys. A: Math. Theor. {\bf 42} (2009), 395101.






\bibitem{M-T-G} Markdahl, J., Thunberg, J. and Gonçalves, J.: \textit{Almost global consensus on the n-sphere.} IEEE Trans. Automat. Control {\bf 63} (2018), 1664-1675. 

\bibitem{M-S1} Mirollo, R. and Strogatz, S. H.: \textit{The spectrum of the partially locked state for the Kuramoto model.} J. Nonlinear Science {\bf 17} (2007), 309-347.

\bibitem{M-S2} Mirollo, R. and Strogatz, S. H.: \textit{The spectrum of the locked state for the
                        Kuramoto model of coupled oscillators.}  Physica D {\bf 205} (2005), 249-266.

\bibitem{M-S3} Mirollo, R. and Strogatz, S. H.: \textit{Stability of
incoherence in a population of coupled oscillators.} J. Stat. Phys.
{\bf 63} (1991), 613-635.


\bibitem{Pe} Peskin, C. S.: \textit{Mathematical aspects of heart physiology.} Courant Institute of Mathematical Sciences, New York, 1975.

\bibitem{P-R} Pikovsky, A., Rosenblum, M. and Kurths, J.: \textit{Synchronization: A universal concept in
nonlinear sciences.} Cambridge University Press, Cambridge, 2001.


\bibitem{St} Strogatz, S. H.: \textit{From Kuramoto to Crawford: exploring the onset of synchronization in populations of coupled oscillators.} Physica D \textbf{143} (2000), 1-20.

\bibitem{T-M} Thunberg, J., Markdahl, J., Bernard, F. and Goncalves, J.: \textit{A lifting method for analyzing distributed synchronization on the unit sphere.} Automatica J. IFAC {\bf 96} (2018), 253-258. 

\bibitem{T-B-L} Topaz, C. M., Bertozzi, A. L. and Lewis, M. A.: \textit{A nonlocal continuum model for biological aggregation.} Bull. Math. Biol. {\bf 68} (2006), 1601-1623.

\bibitem{T-B} Topaz, C. M. and Bertozzi, A. L.: \textit{Swarming patterns in a two-dimensional kinematic model for biological groups.} SIAM J. Appl. Math. {\bf 65} (2004), 152-174.

\bibitem{V-M1} Verwoerd, M. and Mason, O.: \textit{On computing the critical coupling coefficient for the Kuramoto model on a complete bipartite graph.}  SIAM J. Appl. Dyn. Syst., {\bf 8} (2009), 417-453.

\bibitem{V-M2} Verwoerd, M. and Mason, O.: \textit{Global phase-locking in finite populations of phase-coupled oscillators.} SIAM J. Appl. Dyn. Syst., {\bf 7} (2008), 134-160.

\bibitem{VZ} Vicsek, T. and Zefeiris, A.: \textit{Collective motion.} Phys. Rep. {\bf 517} (2012), 71-140.









\bibitem{Wi2} Winfree, A. T.: \textit{Biological rhythms and the behavior of populations of coupled oscillators.} J. Theor. Biol. \textbf{16} (1967), 15-42.

\bibitem{Wi1} Winfree, A. T.: \textit{The geometry of biological time.} Springer, New York, 1980.




\bibitem{Z-Z} {Zhang, J. and Zhu, J.: \textit{Exponential synchronization of the high-dimensional Kuramoto model with identical oscillators under digraphs.} Automatica {\bf 102} (2019), 122-128.}

\bibitem{Z-Z-Q} {Zhang, J., Zhu, J. and Qian, C.: \textit{On equilibria and consensus of the Lohe model with identical oscillators.} SIAM. J. Applied Dynamical Systems {\bf 17} (2018), 1716-1741.}

\bibitem{Zhu}Zhu, J.: \textit{Synchronization of Kuramoto model in a high-dimensional linear space} Physics Letters A {\bf 377} (2013), 2939-2943.


\end{thebibliography}
\end{document}